\title{Structure-Constrained Process Graphs\\ for the Process Semantics of Regular Expressions}
\author{Clemens Grabmayer
\institute{Gran Sasso Science Institute\\ Viale F.\ Crispi, 7, 67100 L'Aquila AQ, Italy}
\email{clemens.grabmayer@gssi.it}}
\tikzset{
  funbisim/.style={
    decoration={funbisim, amplitude=0.25ex},
    decorate,
    funbisim options/.style={#1}    
  }}
\tikzset{
  bisim/.style={
    decoration={bisim, amplitude=0.25ex},
    decorate,
    bisim options/.style={#1}    
  }}
\def\calcLength(#1,#2)#3{%
\pgfpointdiff{\pgfpointanchor{#1}{center}}%
             {\pgfpointanchor{#2}{center}}%
\pgf@xa=\pgf@x%
\pgf@ya=\pgf@y%
\FPeval\@temp@a{\pgfmath@tonumber{\pgf@xa}}%
\FPeval\@temp@b{\pgfmath@tonumber{\pgf@ya}}%
\FPeval\@temp@sum{(\@temp@a*\@temp@a+\@temp@b*\@temp@b)}%
\FProot{\FPMathLen}{\@temp@sum}{2}%
\FPround\FPMathLen\FPMathLen5\relax
\global\expandafter\edef\csname #3\endcsname{\FPMathLen}
}
\theoremstyle{plain}
\newtheorem{thm}{Theorem}[section]
\newtheorem{lem}[thm]{Lemma}
\newtheorem{prop}[thm]{Proposition}
\theoremstyle{definition}
\newtheorem{defi}[thm]{Definition}
\newtheorem{rem}[thm]{Remark}
\newtheorem{exa}[thm]{Example}
 \newcommand{\mynobreakpar}{\par\nobreak\@afterheading}
\definecolor{azure}{rgb}{0.94,1.00,1.00}
\definecolor{brown}{rgb}{.75,.25,.25}
\definecolor{cyan}{rgb}{0.25,0.88,0.82}
\definecolor{chocolate}{rgb}{0.82,0.41,0.12}
\definecolor{darkcyan}{rgb}{0.5,0,1}
\definecolor{darkgreen}{rgb}{0,0.39,0}
\definecolor{darkmagenta}{rgb}{0.5,0,0.5}
\definecolor{darkgoldenrod}{RGB}{184,134,11}
\definecolor{firebrick}{RGB}{175,25,25}
\definecolor{forestgreen}{rgb}{0.13,0.55,0.13}
\definecolor{goldenrod}{RGB}{218,165,32}
\definecolor{lightcyan}{rgb}{0.88,1.00,1.00}
\definecolor{lightpink}{rgb}{1.00,0.71,0.76}
\definecolor{myyellow}{RGB}{235,235,0}
\definecolor{lightyellow}{rgb}{1.00,1.00,0.88}
\definecolor{lightgoldenrod}{rgb}{0.83,0.97,0.51}
\definecolor{lightgoldenrodyellow}{rgb}{0.98,0.98,0.82}
\definecolor{lightskyblue}{rgb}{0.53,0.81,0.98}
\definecolor{moccasin}{rgb}{1.00,0.89,0.71}
\definecolor{magenta}{rgb}{1,0,1}
\definecolor{navyblue}{rgb}{0,0,0.5}
\definecolor{orange}{rgb}{1.0,0.65,0.0}
\definecolor{orangered}{rgb}{1.0,0.27,0.0}
\definecolor{palegreen}{rgb}{0.60,0.98,0.60}
\definecolor{powderblue}{rgb}{0.69,0.88,0.90}
\definecolor{purple}{rgb}{1,0.5,1}
\definecolor{royalblue}{RGB}{65,105,225}
\definecolor{mediumblue}{RGB}{0,0,205}
\definecolor{cornflowerblue}{RGB}{100,149,237}
\definecolor{springgreen}{rgb}{0.0,1.0,0.5}
\definecolor{turquoise}{rgb}{0.25,0.88,0.82}
\definecolor{snow}{rgb}{1.00,0.98,0.98}
\definecolor{tan}{rgb}{0.82,0.71,0.55}
\definecolor{red}{rgb}{1,0,0}
\definecolor{violetred}{RGB}{208,32,144}
\newcommand{\colorin}[1]{\textcolor{#1}}
\newcommand{\black}{\colorin{black}}
\newcommand{\colorred}{\colorin{red}}
\newcommand{\alert}{\colorred}
\newcommand{\darkcyan}{\colorin{darkcyan}}
\newcommand{\mediumblue}{\colorin{mediumblue}}
\newcommand{\nb}{\nobreakdash}
\newcommand{\punc}[1]{\ensuremath{\hspace*{1.5pt}{#1}}}
\newcommand{\nf}{\normalfont}
\newcommand{\bs}[1]{\boldsymbol{#1}}
\newenvironment{new}{\color{chocolate}}{\color{black}}
\newenvironment{newer}{\color{firebrick}}{\color{black}}
\newenvironment{newest}{\color{red}}{\color{black}}
\newenvironment{revised}{\color{violetred}}{\color{black}}
\newenvironment{change}{\color{violetred}}{\color{black}}
\newcommand{\isnewest}[1]{\begin{newest}{#1}\end{newest}}
\newcommand{\funin}{\mathrel{:}}
\newcommand{\fap}[2]{{#1}(\hspace*{-0.5pt}{#2}\hspace*{-0.5pt})}
\newcommand{\iap}[2]{#1_{#2}}
\newcommand{\bap}[2]{#1_{#2}}
\newcommand{\pap}[2]{#1^{#2}}
\newcommand{\bpap}[3]{#1_{#2}^{#3}}
\newcommand{\pbap}[3]{#1_{#3}^{#2}}
\newcommand{\sproj}{\pi}
\newcommand{\proj}{\fap{\sproj}}
\newcommand{\sdefdby}{{:=}}
\newcommand{\defdby}{\mathrel{\sdefdby}}
\newcommand\tuple[1]{\langle #1 \rangle}
\newcommand\tuplespace{\hspace*{0.5pt}}
\newcommand\pair[2]{\tuple{#1, \tuplespace #2}}
\newcommand\triple[3]{\tuple{#1, \tuplespace #2, \tuplespace #3}}
\newcommand{\nat}{\mathbb{N}}
\newcommand{\natplus}{\pap{\nat}{+}} 
\newcommand{\BNFor}{\:\mid\:}
\newcommand{\BNFdefdby}{\:{::=}\:}
\newcommand{\sred}{{\to}}
\newcommand{\sredi}[1]{{\iap{\sred}{#1}}}
\newcommand{\redi}[1]{\mathrel{\sredi{#1}}}
\newcommand{\sconvredi}[1]{{\iap{\leftarrow}{#1}}}
\newcommand{\sredrtc}{\sred^{*}}
\newcommand{\redrtc}{\mathrel{\sredrtc}}
\newcommand{\sredrtci}[1]{{\iap{\sredrtc}{#1}}}
\newcommand{\redrtci}[1]{\mathrel{\sredrtci{#1}}}
\newcommand{\scomprewrels}[2]{{#1}\cdot{#2}}
\newcommand{\comprewrels}[2]{\mathrel{\scomprewrels{#1}{#2}}}
\newcommand{\stavoidsv}{\text{\nf\bf\st{$\hspace*{1.5pt}$t$\hspace*{1.5pt}$}}}
\newcommand{\tavoidsv}{\fap{\stavoidsv}}
\newcommand{\sredtavoidsv}[1]{{\xrightarrow[\raisebox{0pt}{\scriptsize {$\tavoidsv{#1}$}}]{}}}
\newcommand{\sredtavoidsvi}[2]{{\sredtavoidsv{#1}{}_{#2}}}
\newcommand{\sredtavoidsvrtci}[2]{{\xrightarrow[\raisebox{0pt}{\scriptsize {$\tavoidsv{#1}$}}]{}}{^{*}_{#2}}}
  \newcommand{\specfontsize}{\fontsize{5}{6}\selectfont} 
  \newcommand{\subotr}{\hspace*{-1pt}\mbox{\specfontsize $(\hspace*{-0.6pt}\sone\hspace*{-0.85pt})$}}
\newcommand{\descsetexpmid}{\mathrel{\vert}}
\newcommand{\descsetexp}[2]{\left\{{#1}\descsetexpmid{#2}\right\}}
\newcommand{\safun}{f}
\newcommand{\setexp}[1]{\left\{{#1}\right\}}
\renewcommand{\emptyset}{\varnothing}
\newcommand{\slogand}{\wedge}
\newcommand{\logand}{\mathrel{\slogand}}
\newcommand{\existsstzero}[1]{\exists{\hspace*{1pt}#1}}
\newcommand{\noninitial}{non-ini\-tial}
\newcommand{\entrytransition}{entry-tran\-si\-tion}
\newcommand{\entrytransitions}{\entrytransition{s}}
\newcommand{\generatedby}[1]{${#1}$\nb-ge\-ne\-ra\-ted}
\newcommand{\loopentry}{loop-en\-try}
\newcommand{\loopbody}{loop-body}
\newcommand{\entrybodylabeling}{en\-try\discretionary{/}{}{/}body-la\-be\-ling}
\newcommand{\entrybodylabelings}{\entrybodylabeling{s}}
\newcommand{\LEEwitness}{$\LEE$\nb-wit\-ness}
\newcommand{\LEEwitnesses}{$\LEE$\hspace*{1.25pt}\nb-wit\-nes\-ses}
\newcommand{\LLEEwitness}{{\nf LLEE}\nb-wit\-ness}
\newcommand{\LLEEwitnesses}{\LLEEwitness{es}}
\newcommand{\LLEEchart}{{\nf LLEE}\nb-chart}
\newcommand{\LLEEonechart}{{\nf LLEE}\nb-\onechart{}}
\newcommand{\LLEEonecharts}{\LLEEonechart{s}}
\newcommand{\nonempty}{non-emp\-ty}
\newcommand{\nontrivial}{non-triv\-i\-al}
\newcommand{\onetransition}{$1$\nb-tran\-si\-tion}
\newcommand{\onetransitions}{\onetransition{s}}
\newcommand{\onefree}{$1$\nb-free}
\newcommand{\onechart}{$1$\nb-chart}
\newcommand{\onecharts}{\onechart{s}}
\newcommand{\subonechart}{sub-$1$\nb-chart}
\newcommand{\LTS}{LTS}
\newcommand{\LTSs}{LTSs}
\newcommand{\oneLTS}{1\nb-\LTS}
\newcommand{\TSS}{TSS}
\newcommand{\TSSs}{TSSs}
\newcommand{\astexp}{e}
\newcommand{\bstexp}{f}
\newcommand{\cstexp}{g}
\newcommand{\dstexp}{h}
\newcommand{\astexpi}{\iap{\astexp}}
\newcommand{\bstexpi}{\iap{\bstexp}}
\newcommand{\cstexpi}{\iap{\cstexp}}
\newcommand{\astexpacc}{\astexp'}
\newcommand{\cstexpacc}{\cstexp'}
\newcommand{\dstexpacc}{\dstexp'}
\newcommand{\astexpacci}{\iap{\astexpacc}}
\newcommand{\astexptilde}{\tilde{\astexp}}
\newcommand{\cstexptilde}{\tilde{\cstexp}}
\newcommand{\asstexp}{E}
\newcommand{\bsstexp}{F}
\newcommand{\csstexp}{G}
\newcommand{\asstexpi}{\iap{\asstexp}}
\newcommand{\bsstexpi}{\iap{\bsstexp}}
\newcommand{\csstexpi}{\iap{\csstexp}}
\newcommand{\asstexpacc}{\asstexp'}
\newcommand{\bsstexpacc}{\bsstexp'}
\newcommand{\asstexpacci}{\iap{\asstexpacc}}
\newcommand{\asstexptilde}{\widetilde{\asstexp}}
\newcommand{\StExp}{\mathit{StExp}}
\newcommand{\StExpover}{\fap{\StExp}}
\newcommand{\stackStExp}{\mathit{StExp}^{{\scriptscriptstyle(}\sstexpstackprod{\scriptscriptstyle )}}}
\newcommand{\stackStExpover}{\fap{\stackStExp\hspace*{-1.5pt}}}
\newcommand{\AppCxt}{\textit{AppCxt}}
\newcommand{\AppCxtover}{\fap{\AppCxt}}
\newcommand{\acxt}{C}
\newcommand{\acxtwh}{\cxtap{\acxt}{\cdot}}
\newcommand{\cxtap}[2]{{#1}[{#2}]}
\newcommand{\acxtap}{\cxtap{\acxt}}
\newcommand{\stexpzero}{0}
\newcommand{\stexpone}{1}
\newcommand{\sstexpit}{\sstar}
\newcommand{\stexpit}[1]{{#1^{\sstexpit}}}
\newcommand{\sstexpprod}{{\cdot}}
\newcommand{\stexpprod}[2]{{#1}\mathrel{\sstexpprod}{#2}}
\newcommand{\sstexpstackprod}{{\sstar}}
\newcommand{\stexpstackprod}[2]{{#1}\mathrel{\sstexpstackprod}{#2}}
\newcommand{\sstexpsum}{+}
\newcommand{\stexpsum}[2]{{#1}\sstexpsum{#2}}
\newcommand{\sstexpbit}{\circledast} 
\newcommand{\stexpbit}[2]{{#1}\hspace*{0.35pt}\pap{}{\sstexpbit}\hspace*{-0.6pt}{#2}}
\newcommand{\sth}[1]{|{#1}|_{\scalebox{0.8}{$\sstar$}}}
\newcommand{\sdescrelstexpit}{\sredi{\scriptscriptstyle(\sstar)}}
\newcommand{\descrelstexpit}[1]{\mathrel{\sdescrelstexpit}}
\newcommand{\sconvdescrelstexpit}{\sconvredi{\scriptscriptstyle(\sstar)}}
\newcommand{\convdescrelstexpit}[1]{\mathrel{\sconvdescrelstexpit}}
\newcommand{\slt}[1]{{\xrightarrow{#1}}}
\newcommand{\slti}[2]{{\xrightarrow{#1}}{_{#2}}}
\newcommand{\slttc}[1]{{\xrightarrow{#1}}{^{+}}}
\newcommand{\lt}[1]{\mathrel{\slt{#1}}}
\newcommand{\lti}[2]{\mathrel{\slti{#1}{#2}}}
\newcommand{\sone}{1}
\newcommand{\sstar}{*}
\newcommand{\bodylab}{\text{\nf bo}}
\newcommand{\bodylabcol}{\text{\nf\darkcyan{bo}}}
\renewcommand{\bodylab}{\bodylabcol}
\newcommand{\looplab}[1]{{\darkcyan{[#1]}}}
\newcommand{\loopnsteplab}[1]{[{#1}]}
\newcommand{\sloopnstepto}[1]{{\iap{\rightarrow}{\loopnsteplab{#1}}}}
\newcommand{\loopnstepto}[1]{\mathrel{\sloopnstepto{#1}}}
\newcommand{\loopentrytransition}{loop-entry tran\-si\-tion}
\newcommand{\loopentrytransitions}{\loopentrytransition{s}}
\newcommand{\aLname}{n}
\newcommand{\bLname}{m}
\newcommand{\txtnormedplus}{normed$^+$}
\newcommand{\sterminates}{{\downarrow}}
\newcommand{\terminates}[1]{{#1}{\sterminates}}
\newcommand{\terminatesi}[2]{{#2}{\iap{\sterminates}{#1}}}
\newcommand{\snotterminates}{\ndownarrow}  
\newcommand{\notterminates}[1]{{#1}{\snotterminates}}
\newcommand{\onescript}{\scalebox{0.75}{$\scriptstyle 1$}}
\newcommand{\onebrackscript}{\scalebox{0.75}{$\scriptstyle (1)$}}
\newcommand{\soneterminates}{{\pap{\downarrow}{\hspace*{-1.5pt}\onebrackscript}}}
\newcommand{\oneterminates}[1]{{#1}{\soneterminates}}
\newcommand{\soneterminatesalert}{{\pap{\downarrow}{\hspace*{-1.5pt}\isnewest{\onebrackscript}}}}
\newcommand{\aLTS}{\mathcal{L}}
\newcommand{\aLTShat}{\widehat{\smash{\aLTS}\rule{0pt}{7pt}}} 
\newcommand{\aLTShatsubscript}{\widehat{\smash{\aLTS}\rule{0pt}{5.5pt}}} 
\newcommand{\aLTSi}{\iap{\aLTS}}
\newcommand{\LTSof}{\fap{\aLTS}}
\newcommand{\LTSdefdby}{\iap{\aLTS}}
\newcommand{\aoneLTS}{\underline{\aLTS}}
\newcommand{\aoneLTShat}{\underline{\widehat{\aLTS}}}
\newcommand{\oneLTSof}{\fap{\aoneLTS}}
\newcommand{\oneLTShatof}{\fap{\aoneLTShat}}
\newcommand{\oneLTSdefdby}{\iap{\aoneLTS}}
  \newcommand{\subscriptindtrans}{\mediumblue{\scriptscriptstyle\pmb{\otind{\cdot}}}} 
\newcommand{\indLTS}[1]{\iap{{#1}}{\subscriptindtrans}}
\newcommand{\indLTSof}{\indLTS}
\newcommand{\achart}{\mathcal{C}}
\newcommand{\acharti}{\iap{\achart}}
\newcommand{\acharthat}{\hspace*{0.75pt}\widehat{\hspace*{-0.75pt}\achart}\hspace*{-0pt}} 
\newcommand{\chartnei}{\pbap{\achart}{\text{\nf (ne)}}}
\newcommand{\aonechart}{\underline{\mathcal{C}}}
\newcommand{\chartof}{\fap{\achart}}
\newcommand{\onechartof}{\fap{\aonechart}}
\newcommand{\indscchartof}[1]{\iap{#1}{\subscriptindtrans}} 
\newcommand{\onecharthatof}[1]{\widehat{\rule{0pt}{7pt}\smash{\fap{\aonechart}{{#1}}}}}
\newcommand{\aloop}{\mathcal{L\hspace*{-4pt}C}}
\newcommand{\indsubchartinat}[1]{\fap{\acharti{#1}}}
\newcommand{\otind}[1]{({#1}]}
\newcommand{\iact}[1]{{\scriptscriptstyle\pmb (}\hspace*{-0pt}{#1}\hspace*{0.4pt}{\scriptscriptstyle\pmb ]}}
\newcommand{\iactalert}[1]{\mediumblue{{\scriptscriptstyle\pmb (}\hspace*{-0pt}{\black{#1}}\hspace*{0.4pt}{\scriptscriptstyle\pmb ]}}}
\newcommand{\silt}[1]{\slt{\iact{#1}}}
\newcommand{\silttc}[1]{\slttc{\iact{#1}}} 
\newcommand{\ilt}[1]{\mathrel{\silt{#1}}}
\newcommand{\ilttc}[1]{\mathrel{\silttc{#1}}}
\newcommand{\entrystep}{en\-try-step}
\newcommand{\entrysteps}{\entrystep{s}}
\newcommand{\actions}{\mathit{A}} 
\newcommand{\oneactions}{\underline{\actions}}
\newcommand{\aact}{a}
\newcommand{\bact}{b}
\newcommand{\cact}{c}
\newcommand{\aacti}{\iap{\aact}}
\newcommand{\bacti}{\iap{\bact}}
\newcommand{\aoneact}{\underline{a}}
\newcommand{\verts}{V}
\newcommand{\start}{\averti{\hspace*{-0.5pt}\text{\nf s}}}
\newcommand{\transs}{{\sred}} 
\newcommand{\termexts}{\sterminates} 
\newcommand{\states}{S}
\newcommand{\alab}{l}
\renewcommand{\alab}{\darkcyan{l}}
\newcommand{\vertsof}{\fap{\verts\hspace*{-1pt}}}
\newcommand{\genap}{\bap}
\newcommand{\genvertsof}{\genap{\verts}}
\newcommand{\gentranssof}{\genap{\transs}}
\newcommand{\gentermextsof}{\genap{\termexts}}
\newcommand{\vertsi}[1]{\iap{\verts}{\hspace*{-0.25pt}{#1}}}
\newcommand{\starti}[1]{\averti{\text{\nf s},#1}}
\newcommand{\termextsi}[1]{\iap{\termexts}{{#1}}}
\newcommand{\statesi}{\iap{\states}}
\newcommand{\transshat}{\hspace*{-1.5pt}\bs{\Hat{\normalfont \hspace*{1.5pt}\transs}}} 
\newcommand{\astate}{s}
\newcommand{\bstate}{t}
\newcommand{\astatei}{\iap{\astate}}
\newcommand{\astateacc}{\astate'}
\newcommand{\bstateacc}{\bstate'}
\newcommand{\astateacci}{\iap{\astateacc}}
\newcommand{\asettranss}{U}
\newcommand{\sentries}{\mathit{E}}
\newcommand{\entriesof}{\fap{\sentries}}
\newcommand{\avert}{v}
\newcommand{\bvert}{w}
\newcommand{\averti}{\iap{\avert}}
\newcommand{\bverti}{\iap{\bvert}}
\newcommand{\atrans}{\tau}
\newcommand{\atranshat}{\widehat{\atrans}}
\newcommand{\sfunbisim}{%
    \setbox0=\hbox{\kern-.1ex{$\rightarrow$}\kern-.1ex}
    \setbox1=\vbox{\hbox{\raise .1ex \box0}\hrule}%
    {\hbox{\kern.05ex\box1\kern.1ex}}
  }
\newcommand{\funbisim}{\mathrel{\sfunbisim}}
\newcommand{\sconvfunbisim}{%
    \setbox0=\hbox{\kern-.1ex{$\leftarrow$}\kern-.1ex}
    \setbox1=\vbox{\hbox{\raise .1ex \box0}\hrule}%
    {\hbox{\kern.05ex\box1\kern.1ex}}
  }
\newcommand{\sbisim}{%
    \setbox0=\hbox{\kern-.1ex{$\leftrightarrow$}\kern-.1ex}
    \setbox1=\vbox{\hbox{\raise .1ex \box0}\hrule}%
    \hbox{\kern.1ex\box1\kern.1ex}
  }
\newcommand{\bisim}{\mathrel{\sbisim\hspace*{1pt}}}
\newcommand{\sfunbisimos}{%
    \setbox0=\hbox{\kern-.1ex{$\rightarrow$}\kern-.1ex}
    \setbox1=\vbox{\hbox{\raise .1ex \box0}\hrule}%
    {\pap{\hbox{\kern.05ex\box1\kern.1ex}}{\hspace*{0.5pt}\subotr}}
  }
\newcommand{\sconvfunbisimos}{%
    \setbox0=\hbox{\kern-.1ex{$\leftarrow$}\kern-.1ex}
    \setbox1=\vbox{\hbox{\raise .1ex \box0}\hrule}%
    {\pap{\hbox{\kern.05ex\box1\kern.1ex}}{\hspace*{0.5pt}\subotr}}
  }
\newcommand{\sbisimos}{%
    \setbox0=\hbox{\kern-.1ex{$\leftrightarrow$}\kern-.1ex}
    \setbox1=\vbox{\hbox{\raise .1ex \box0}\hrule}%
    \ensuremath{\pap{\mathrel{\hbox{\kern.1ex\box1\kern.1ex}}}{\hspace*{0.5pt}\subotr}}
  }
\newcommand{\sfunbisimosvia}[1]{%
    \setbox0=\hbox{\kern-.1ex{$\rightarrow$}\kern-.1ex}
    \setbox1=\vbox{\hbox{\raise .1ex \box0}\hrule}%
    {\bpap{\hbox{\kern.05ex\box1\kern.1ex}}{#1}{\hspace*{0.5pt}\subotr}}
  }
\newcommand{\sconvfunbisimosvia}[1]{%
    \setbox0=\hbox{\kern-.1ex{$\leftarrow$}\kern-.1ex}
    \setbox1=\vbox{\hbox{\raise .1ex \box0}\hrule}%
    {\bpap{\hbox{\kern.05ex\box1\kern.1ex}}{#1}{\hspace*{0.5pt}\subotr}}
  }
\newcommand{\abisim}{B}
\newcommand{\sbehinc}{{\sqsubseteq}}
\newcommand{\sbehinca}[1]{{\prescript{#1}{}{\sbehinc}}}
\newcommand{\behinca}[1]{\mathrel{\sbehinca}}
\newcommand{\sonebehinc}{{\pap{\sbehinc}{\subotr}}}
\newcommand{\sonebehinca}[1]{{{}_{#1}\sonebehinc}}
\newcommand{\onebehinca}[1]{\mathrel{\sonebehinca}}
\newcommand{\sderivablein}[1]{\vdash_{#1}}
\newcommand{\derivablein}[1]{\sderivablein{#1}}
\newcommand{\saTSS}{{\cal{T}}}
\newcommand{\StExpTSS}{\text{$\saTSS$}}
\newcommand{\StExpTSSover}[1]{\text{$\fap{\StExpTSS}{#1}$}}
\newcommand{\stackStExpTSS}{\text{$\underline{\saTSS}^{{\scriptscriptstyle (}\sstexpstackprod{\scriptscriptstyle )}}$}}
\newcommand{\stackStExpTSSover}[1]{\text{$\fap{\stackStExpTSS\hspace*{-1.5pt}}{#1}$}}
\newcommand{\stackStExpindTSS}{\text{$\underline{\saTSS}_{\hspace*{1pt}\subscriptindtrans}^{{\scriptscriptstyle (}\sstexpstackprod{\scriptscriptstyle )}}$}}
\newcommand{\stackStExpindTSSover}[1]{\text{$\fap{\stackStExpindTSS\hspace*{-1.5pt}}{#1}$}}
\newcommand{\stackStExpTSShat}{\text{$\underline{\widehat{\saTSS}}^{{\scriptscriptstyle (}\sstexpstackprod{\scriptscriptstyle )}}$}}
\newcommand{\stackStExpTSShatover}[1]{\text{$\fap{\stackStExpTSShat\hspace*{-3pt}}{#1}$}}
\newcommand{\sLEE}{\text{\nf LEE}}
\newcommand{\LEE}{\sLEE}
\begin{document}

\maketitle

\begin{abstract}
  Milner (1984) introduced a process semantics for regular expressions as process graphs. 
  Unlike for the language semantics, where every regular (that is, DFA-accepted) language is the interpretation of some regular expression,
  there are finite process graphs that are not bisimilar to the process interpretation of any regular expression.
  For reasoning about graphs that are expressible by regular expressions 
  it is desirable to have structural representations of process graphs in the image of the interpretation.
  
  For `1-free' regular expressions,
  their process interpretations satisfy the structural property LEE (loop existence and elimination).
  But this is not in general the case for all regular expressions, as we show by examples.
  Yet as a remedy, we describe the possibility to recover the property \LEE\ for a close variant of the process interpretation.  
  For this purpose we refine the process semantics of re\-gu\-lar ex\-pres\-sions to yield process graphs with \onetransitions,
  similar to silent moves \mbox{for finite-state automata}.
\end{abstract}

%


\section{Introduction}%
  \label{intro}

Milner \cite{miln:1984} (1984) defined a process semantics for regular expressions as process graphs:
the interpretation of $\stexpzero$ is deadlock, of $\stexpone$ is successful termination, letters $a$ are atomic actions,
the operators $\sstexpsum$ and $\sstexpprod$ stand for choice and concatenation of processes,
and (unary) Kleene star $\stexpit{(\cdot)}$ represents iteration with the option to terminate successfully after each pass-through.
In order to disambiguate the use of regular expressions for denoting processes, Milner called them `star expressions' in this context. 
Unlike for the standard language semantics, where every regular language is the interpretation of some regular expression,
there are finite process graphs that are not bisimilar to the process interpretation of any star expression.%
  \footnote{E.g., the process graphs $\chartnei{1}$ and $\chartnei{2}$ in Ex.~\ref{ex:chart:interpretation} on page~\pageref{ex:chart:interpretation} are not expressible
            by a star expression modulo bisimilarity.}
This phenomenon led Milner to the formulation of two questions: (R)~the problem of recognizing whether a given 
process graph is bisimilar to one in the image of the process semantics for star expressions, 
and (A)~whether a natural adaptation of Salomaa's complete proof system for language equivalence of regular expressions
is complete for bisimilarity of process interpretations on pairs of star expressions.   
While (R) has been shown to be decidable in principle, 
so far only partial solutions have been obtained~for~(A). 
    
For tackling these problems it is expedient to obtain structural representations of process graphs in the image of the interpretation.
The result of Baeten, Corradini, and myself \cite{baet:corr:grab:2007} that the problem~(R) is decidable in principle 
was based on the concept of `well-behaved (recursive) specifications' that links process graphs with star expressions. 
Recently in \cite{grab:fokk:2020:LICS,grab:fokk:2020:arxiv}, Wan Fokkink and I obtained a partial solution for~(A) in the form of a complete proof system
for `\onefree' star expressions, which do not contain~$\stexpone$, but are formed with binary Kleene star iteration $\stexpbit{(\cdot)}{(\cdot)}$ instead of unary iteration.
For this, we defined the efficiently decidable `loop existence and elimination property (\LEE)' of process graphs
that holds for all process graph interpretations of \onefree\ star expressions, and for their bisimulation collapses.

The property \LEE\ does unfortunately not hold for process graph interpretations $\chartof{\astexp}$ of all star expressions~$\astexp$.
However, it is the aim of this article to describe how \LEE\ can nevertheless be made applicable, 
by stepping over to a variant $\onechartof{\cdot}$ of the process interpretation $\chartof{\cdot}$. 
In Section~\ref{LEE} we explain the loop existence and elimination property \LEE\ for process graphs, 
and we define the concept of a `layered \LEEwitness', for short a `\LLEEwitness' for process graphs.
Hereby Section~\ref{LEE} 
            is an adaptation for star expressions that may contain~$1$
of the motivation of \LLEEwitnesses\ in Section~3 in \cite{grab:fokk:2020:LICS},
which concerned the process semantics of `\onefree\ star expressions'.
\LLEEwitnesses\ arise by adding natural-number labels to transitions that are subject to suitable constraints. 
A process graph for which a \LLEEwitness\ exists is `structure constrained', since it satisfies \LEE\ (as guaranteed by the \LLEEwitness)
in contrast with a process graph for which no \LLEEwitness\ exists (which then does not satisy~\LEE). 

In Section~\ref{LLEE:fail} we explain examples that show that \LEE\ does not hold in general
for process interpretations of star expressions from the full class. 
As a remedy, we introduce process graphs with \onetransitions\ (similar to silent moves for finite-state automata).
In Section~\ref{recover:LEE} we define the variant~$\onechartof{\cdot}$ of the process graph semantics $\chartof{\cdot}$ such that $\onechartof{\cdot}$ yields process graphs with \onetransitions.
Then we formulate and illustrate the following two properties of the variant process graph semantics~$\onechartof{\cdot}$ 
about its relation to $\chartof{\cdot}$, and the structure of process graphs that it defines: 
\begin{enumerate}[label={(P\arabic{*})},itemsep=0.25ex]
  \item{}\label{property:1} \emph{$\chartof{\cdot}$ and $\onechartof{\cdot}$ coincide up to bisimilarity} (Thm.~\ref{thm:onechart-int:funbisim:chart-int}):   
    For every star expression~$\astexp$, there is a functional bisimulation from the variant process semantics $\onechartof{\astexp}$ of $\astexp$
    to the process semantics $\chartof{\astexp}$ of~$\astexp$. 
    Therefore
    the process interpretation $\chartof{\astexp}$ of a star expression~$\astexp$ and its variant $\onechartof{\astexp}$ 
    are bisimilar.
  
  \item{}\label{property:2} \emph{$\onechartof{\cdot}$ guarantees \LEE{}-structure} (Thm.~\ref{thm:onechart-int:LLEEw}):    
    The variant process semantics $\onechartof{\astexp}$ of a star expression $\astexp$ 
    satisfies the loop existence and elimination property \LEE.
\end{enumerate}
Section~\ref{proofs} is devoted to the proofs of these two properties of $\onechartof{\astexp}$. 
There, we elaborate the proof of \ref{property:2}.
Also, we sketch the crucial steps of the proof of \ref{property:1},
  but refer to the report version \cite{grab:2020:scpgs-arxiv} for the details. 


Finally in Section~\ref{conclusion}
  we briefly report on which of the steps that we developed in \cite{grab:fokk:2020:LICS}
  for obtaining solutions of the problems (A) and (R) for \onefree\ star expressions 
  can be extended or adapted in the direction of solutions of these problems for all star expressions,
  and where the remaining difficulty resides. 
 
%

The idea to define structure-constrained process graphs via edge-labelings with constraints, on which \LLEEwitnesses\ are based,
originated from `higher-order term graphs' that can be used for representing functional programs in a maximally compact, shared form 
(see \cite{grab:roch:2014,grab:2019}). There, additional concepts (scope sets of vertices, or abstraction-prefix labelings) 
are used to constrain the form of term graphs. 
The common underlying idea with \LLEEwitnesses\ is an enrichment of graphs that:
(i)~guarantees that graphs can be directly expressed by terms of some language, 
(ii)~does not significantly hamper sharing of represented subterms,
(iii)~is simple enough so as to keep reasoning about graph transformations feasible.

\section{Preliminaries on
         the process semantics of star expressions}%
  \label{prelims}         

In this section we define the process semantics of regular expressions 
as charts: finite labeled transition systems with initial states. 
We proceed by a sequence of definitions, and conclude by providing examples.

\begin{defi}\nf\label{def:StExp}
  We assume, for subsequent definitions implicitly, a set $\actions$ whose members we call \emph{actions}.
  The set $\StExpover{\actions}$ of \emph{star expressions over (actions in) $\actions$} is defined by the following grammar:
  \begin{center}
    $
    \astexp, \astexpi{1}, \astexpi{2}
      \;\;\BNFdefdby\;\;
        \stexpzero
          \BNFor
        \stexpone
          \BNFor
        \aact
          \BNFor
        \stexpsum{\astexpi{1}}{\astexpi{2}}
          \BNFor
        \stexpprod{\astexpi{1}}{\astexpi{2}}
          \BNFor
        \stexpit{\astexp} 
          \qquad\text{(where $\aact\in\actions$)} \punc{.}
    $
  \end{center}
  The \emph{(syntactic) star height} $\sth{\astexp}$ of a star expression $\astexp\in\StExpover{\actions}$
  denotes the maximal nesting depth of stars in $\astexp$
  via: $\sth{\stexpzero} \defdby \sth{\stexpone}  \defdby \sth{\aact} \defdby 0$, 
       $\sth{\stexpsum{\astexpi{1}}{\astexpi{2}}} \defdby \sth{\stexpprod{\astexpi{1}}{\astexpi{2}}}
                                                  \defdby \max\setexp{\sth{\astexpi{1}}, \sth{\astexpi{2}}}$, 
       and $\sth{\stexpit{\astexp}} \defdby 1 + \sth{\astexp}$. 
\end{defi}

\begin{defi}
  A \emph{labeled transition system (LTS) with termination and actions in $\actions$}
  is a 4\nb-tuple
  $\tuple{\states,\actions,\sred,\sterminates}$
  where $\states$ is a \nonempty\ set of \emph{states},
  $\actions$ is a set of $\emph{action labels}$,
  $\sred \subseteq \states\times\actions\times\states$ is the \emph{labeled transition relation},
  and $\sterminates \subseteq \verts$ is a set of \emph{states with immediate termination},
  for~short, the \emph{terminating~states}.
  In such an \LTS, we write $\astatei{1} \lt{\aact} \astatei{2}$ for a transition $\triple{\astatei{1}}{\aact}{\astatei{2}}\in\sred$,
  and 
      $\terminates{\astate}$ for a terminating state $\astate\in\termexts$. 
  %
\end{defi}

\begin{defi}\nf\label{def:process:semantics}\enlargethispage{1ex}
  The transition system specification (TSS)~$\StExpTSSover{\actions}$ is defined by the axioms and rules:   
  \begin{center}\label{StExpTSS}
    $
    \begin{gathered}
      \begin{aligned}
         &
         \AxiomC{\phantom{$\terminates{\stexpone}$}}
         \UnaryInfC{$\terminates{\stexpone}$}
         \DisplayProof
         & \hspace*{-1.5ex} & & &
         \AxiomC{$ \terminates{\astexpi{1}} $}
         \UnaryInfC{$ \terminates{(\stexpsum{\astexpi{1}}{\astexpi{2}})} $}
         & \hspace*{2ex} & & & 
         \AxiomC{$ \terminates{\astexpi{i}} $}
         \RightLabel{\scriptsize $(i\in\setexp{1,2})$}
         \UnaryInfC{$ \terminates{(\stexpsum{\astexpi{1}}{\astexpi{2}})} $}
         \DisplayProof
         & \hspace*{2ex} & & & 
         \AxiomC{$\terminates{\astexpi{1}}$}
         \AxiomC{$\terminates{\astexpi{2}}$}
         \BinaryInfC{$\terminates{(\stexpprod{\astexpi{1}}{\astexpi{2}})}$}
         \DisplayProof
         & \hspace*{2ex} & & & 
         \AxiomC{$\phantom{\terminates{\stexpit{\astexp}}}$}
         \UnaryInfC{$\terminates{(\stexpit{\astexp})}$}
         \DisplayProof
      \end{aligned} 
      \\[1ex]
      \begin{aligned}
        & 
        \AxiomC{$\phantom{a_i \:\lt{a_i}\: \stexpone}$}
        \UnaryInfC{$a \:\lt{a}\: \stexpone$}
        \DisplayProof
        & & & &
        \AxiomC{$ \astexpi{i} \:\lt{a}\: \astexpacci{i} $}
        \RightLabel{\scriptsize $(i\in\setexp{1,2})$}
        \UnaryInfC{$ \stexpsum{\astexpi{1}}{\astexpi{2}} \:\lt{a}\: \astexpacci{i} $}
        \DisplayProof 
        & & & & 
        \AxiomC{$ \astexpi{1} \:\lt{a}\: \astexpacci{1} $}
        \UnaryInfC{$ \stexpprod{\astexpi{1}}{\astexpi{2}} \:\lt{a}\: \stexpprod{\astexpacci{1}}{\astexpi{2}} $}
        \DisplayProof
        & &
        \AxiomC{$\terminates{\astexpi{1}}$}
        \AxiomC{$ \astexpi{2} \:\lt{a}\: \astexpacci{2} $}
        \BinaryInfC{$ \stexpprod{\astexpi{1}}{\astexpi{2}} \:\lt{a}\: \astexpacci{2} $}
        \DisplayProof
        & & & & 
        \AxiomC{$\astexp \:\lt{a}\: \astexpacc$}
        \UnaryInfC{$\stexpit{\astexp} \:\lt{a}\: \stexpprod{\astexpacc}{\stexpit{\astexp}}$}
        \DisplayProof
      \end{aligned}
    \end{gathered}
    $
  \end{center}
  If $\astexp \lt{\aact} \astexpacc$ is derivable in $\StExpTSSover{\actions}$, for $\astexp,\astexpacc\in\StExpover{\actions}$,
  and $\aact\in\actions$, then we say that $\astexpacc$ is a \emph{derivative} of $\astexp$.
  If $\terminates{\astexp}$ is derivable in $\StExpTSSover{\actions}$, for $\astexp\in\StExpover{\actions}$,
  then we say that $\astexp$ \emph{permits immediate termination}.
  If $\terminates{\astexp}$ is not derivable in $\StExpTSSover{\actions}$, then we write $\notterminates{\astexp}$.
  
  The \TSS\ $\StExpTSSover{\actions}$ defines the process semantics for star expressions in $\StExpover{\actions}$
  in the form of the \emph{star expressions \LTS}~$\LTSof{\StExpover{\actions}} \defdby \LTSdefdby{\StExpTSSover{\actions}}$,
  where $\LTSdefdby{\StExpTSSover{\actions}} = \tuple{\StExpover{\actions},\actions,\transs,\termexts}$ 
  is the \emph{\LTS\ generated by} $\StExpTSSover{\actions}$, that is,
  its set $\transs \subseteq  \StExpover{\actions}\times\actions\times\StExpover{\actions}$ of transitions,
  and its set $\termexts\subseteq\StExpover{\actions}$ of vertices with the im\-me\-di\-ate-ter\-mi\-na\-tion property
  are defined via derivations in~$\StExpTSSover{\actions}$ in the natural way.
  
  For every set $S \subseteq \StExpover{\actions}$ we denote by $\LTSof{S}$ 
  the \emph{$S$\nb-generated sub-LTS} $\tuple{\genvertsof{S},\actions,\gentranssof{S},\gentermextsof{S}}$ of the star expressions \LTS\ $\LTSof{\StExpover{\actions}}$, 
  that is, the sub-LTS whose states are those in $S$ together with all star expressions that are reachable
  from ones in $S$ via paths that follow transitions of $\LTSof{\StExpover{\actions}}$.
\end{defi}

\begin{defi}
  A \emph{chart} is a 5-tuple $\achart = \tuple{\verts,\actions,\start,\transs,\termexts}$
  such that $\tuple{\verts,\actions,\transs,\termexts}$ is an \LTS, which we call the \emph{LTS underlying $\achart$},
  with a \underline{\smash{finite}} set $\verts$ of states, which we call \emph{vertices}, and 
  that is rooted by a specified \emph{start vertex} $\start\in\verts$;
  we call $\transs \subseteq \verts\times\actions\times\verts$ the set of \emph{labeled transitions} of $\achart$,
  and $\termexts \subseteq \verts$ the set the vertices \emph{with immediate termination} of $\achart$.
  %
\end{defi}

\begin{defi}\nf\label{def:chart:interpretation}
  The \emph{chart interpretation $\chartof{\astexp} = \tuple{\vertsof{\astexp},\actions,\astexp,\transs,\termexts}$}  
  of a star expression $\astexp\in\StExpover{\actions}$ is 
  the $\setexp{\astexp}$\nb-gen\-er\-ated sub-LTS $\LTSof{\setexp{\astexp}} = \tuple{\genvertsof{\setexp{\astexp}},\actions,\gentranssof{\setexp{\astexp}},\gentermextsof{\setexp{\astexp}}}$
  of $\LTSof{\StExpover{\actions}}$.%
    \footnote{That this generated sub-LTS is finite, and so that $\chartof{\astexp}$ is indeed a chart,
      follows from Antimirov's result in \cite{anti:1996} that every regular expression has only finitely many iterated `partial derivatives',
      which coincide with repeated derivatives from Def.~\ref{def:process:semantics}.}
\end{defi}

\newcommand{\picarrowstart}{\raisebox{2pt}{\begin{tikzpicture}%
                                             \draw[<-,very thick,>=latex,chocolate,shorten <=2pt](0,0) -- ++ (180:{12pt});%
                                           \end{tikzpicture}}}
  
\newcommand{\pictermvert}{\begin{tikzpicture}%
                           \node[draw,chocolate,very thick,circle,minimum width=2.5pt,fill,inner sep=0pt,outer sep=2pt](v){};%
                           \draw[thick,chocolate] (v) circle (0.12cm);%
                         \end{tikzpicture}}

\begin{exa}\label{ex:chart:interpretation}\nf
  In the chart illustrations below and later, we indicate 
  the start vertex by a brown arrow~\picarrowstart,
  and the property of a vertex $\avert$ to permit immediate termination
  by emphasizing $\avert$ in brown as \pictermvert\ including a boldface ring. 
  Each of the vertices of the charts $\chartnei{1}$ and $\chartof{\astexp}$ below permits immediate termination, 
  but none of the vertices of the other charts does.
  The charts $\chartnei{1}$ and $\chartnei{2}$ are not expressible by the process interpretation modulo bisimilarity,
  as shown by Bosscher \cite{boss:1997} 
           and Milner \cite{miln:1984}. 
  That $\chartnei{2}$ is not expressible, Milner proved by observing the absence of a `loop behaviour'.
  That concept has inspired the stronger concept of `loop chart' in Def.~\ref{def:loop:chart} below.
  For the weaker result that $\chartnei{1}$ and $\chartnei{2}$ are not expressible by \onefree\ star expressions
  please see Remark~\ref{rem:expressible}.
  
  The chart $\chartof{\cstexpi{0}}$ middle left below is the interpretation of the 
  star expression $\cstexpi{0} = \stexpprod{(\stexpprod{(\stexpprod{\stexpone}{\aact})}{\cstexp})}{\stexpzero}$
  where $\cstexp = \stexpit{(\stexpsum{\stexpprod{\cact}{\aact}}{\stexpprod{\aact}{(\stexpsum{\bact}{\stexpprod{\bact}{\aact}})}})}$,
  and with $\cstexpi{1} = \stexpprod{(\stexpprod{\stexpone}{\cstexp})}{\stexpzero}$,
  and $\cstexpi{2} = \stexpprod{(\stexpprod{(\stexpprod{\stexpone}{(\stexpsum{\bact}{\stexpprod{\bact}{\aact}})})}{\cstexp})}{\stexpzero}$
  as remaining vertices.
  The chart $\chartof{\astexp}$ is the interpretation
  of 
  $\astexp =  
     \stexpit{\stexpprod{(\stexpit{\aact}}{\stexpit{\bact}})}$
  with     
  $\astexpi{1} = \stexpprod{(\stexpprod{(\stexpprod{\stexpone}{\stexpit{\aact}})}{\stexpit{\bact}})}{\astexp}$,
  and $\astexpi{2} = \stexpprod{(\stexpprod{\stexpone}{\stexpit{\bact}})}{\astexp}$.
  With 
  $\bstexpi{0} = 
        \stexpsum{\stexpprod{\aacti{1}}{(\stexpsum{\stexpone}{\stexpprod{\bacti{1}}{\stexpzero}})}}
                            {\stexpsum{{\stexpprod{\aacti{2}}{(\stexpsum{\stexpone}{\stexpprod{\bacti{2}}{\stexpzero}})}}}
                                      {{\stexpprod{\aacti{3}}{(\stexpsum{\stexpone}{\stexpprod{\bacti{3}}{\stexpzero}})}}}}$,
  the chart $\chartof{\bstexp}$ is the interpretation of
  $\bstexp = \stexpprod{\stexpit{\bstexpi{0}}}{\stexpzero}$ 
  with
  $\bstexpi{i} = \stexpprod{(\stexpprod{\stexpprod{\stexpone}{(\stexpsum{\stexpone}{\stexpprod{\bacti{i}}{\stexpzero}})}}
                                       {\stexpit{\bstexpi{0}}})}
                          {\stexpzero}$ (for $i\in\setexp{1,2,3}$),
  and $\textit{sink} = \stexpprod{(\stexpprod{(\stexpprod{\stexpone}{\stexpzero})}{\stexpit{\bstexpi{0}}})}
                                 {\stexpzero}$.
  \vspace{0ex}
  \begin{center}
    \begin{tikzpicture}

\matrix[anchor=north,row sep=0.9cm,every node/.style={draw,very thick,circle,minimum width=2.5pt,fill,inner sep=0pt,outer sep=2pt}] at (2.8,0.2) {
  \node(v-0){};
  \\
  \node(v-1){};
  \\
  \node(v-2){};
  \\
};
\calcLength(v-0,v-1){mylen}
\draw[<-,very thick,>=latex,chocolate](v-0) -- ++ (90:{0.45*\mylen pt});
\path(v-0) ++ (0cm,1cm) node{\Large $\chartof{\cstexpi{0}}$};
\path(v-0) ++ ({0.3*\mylen pt},{0.25*\mylen pt}) node{$\cstexpi{0}$};
\draw[->](v-0) to node[right,xshift={-0.05*\mylen pt},pos=0.45]{\small $\aact$} (v-1); 
\path(v-1) ++ ({0.325*\mylen pt},0cm) node{$\cstexpi{1}$};
\draw[->](v-1) to node[right,xshift={-0.05*\mylen pt},pos=0.45]{\small $\aact$} (v-2);
\draw[->,shorten <= 5pt](v-1) to[out=175,in=180,distance={0.75*\mylen pt}]
         node[above,yshift={0.05*\mylen pt},pos=0.7]{\small $\cact$} (v-0);
\path(v-2) ++ (-0cm,{-0.275*\mylen pt}) node{$\cstexpi{2}$};
\draw[->](v-2) to[out=180,in=185,distance={0.75*\mylen pt}] 
               node[below,yshift={0.0*\mylen pt},pos=0.2]{\small $\bact$} (v-1);
\draw[->](v-2) to[out=0,in=0,distance={1.3*\mylen pt}]  
               node[below,yshift={0.00*\mylen pt},pos=0.125]{\small $\bact$} (v-0);

%
\matrix[anchor=north,row sep=0.75cm,column sep=0.924cm,
        every node/.style={draw,very thick,circle,minimum width=2.5pt,fill,inner sep=0pt,outer sep=2pt}] at (0,-0.025) {
  \node(C-2-1){};  &                  &     \node(C-2-2){};
  \\
                   &                  &                  
  \\
                   & \node(C-2-3){};  &
  \\
};
\draw[<-,very thick,>=latex,color=chocolate](C-2-1) -- ++ (90:0.5cm);  

\draw[->,bend right,distance=0.65cm] (C-2-1) to node[above]{$\aacti{2}$} (C-2-2); 
\draw[->,bend right,distance=0.65cm] (C-2-1) to node[left]{$\aacti{3}$}  (C-2-3);

\path(C-2-1) ++ (1.15cm,1.25cm) node{\Large $\chartnei{2}$};
\draw[->,bend right,distance=0.65cm]  (C-2-2) to node[above]{$\aacti{1}$} (C-2-1); 
\draw[->,bend left,distance=0.65cm]  (C-2-2) to node[right]{$\aacti{3}$} (C-2-3);

\draw[->,bend right,distance=0.45cm] (C-2-3) to node[left]{$\aacti{1}$}  ($(C-2-1)+(0.25cm,-0.2cm)$);
\draw[->,bend left,distance=0.65cm]  (C-2-3) to node[right]{$\aacti{2}$} (C-2-2);

%
\matrix[anchor=north,row sep=0.8cm,column sep=0.924cm,
        every node/.style={draw,very thick,circle,minimum width=2.5pt,fill,inner sep=0pt,outer sep=2pt}] at (0,-2.75) {
  \node[color=chocolate](C-1-0){};  &                  &     \node[color=chocolate](C-1-1){};
  \\
                   &                  &                  
  \\
                   & \node[draw=none,fill=none](C-1-2){};  &
  \\
};
\draw[<-,very thick,>=latex,chocolate,shorten <=2pt](C-1-0) -- ++ (90:{0.5*\mylen pt});
%
\path(C-1-1) ++ (0.1cm,0.6cm) node{\Large $\chartnei{1}$};


\draw[thick,chocolate] (C-1-1) circle (0.12cm);
\draw[thick,chocolate] (C-1-0) circle (0.12cm);
\draw[->,bend left,distance=0.65cm,shorten <=2pt,shorten >=2pt] (C-1-0) to node[above]{$a$} (C-1-1); 
\draw[->,bend left,distance=0.65cm,shorten <=2pt,shorten >=2pt] (C-1-1) to node[below]{$b$} (C-1-0); 


\matrix[anchor=north,row sep=1cm,column sep=0.7cm,every node/.style={draw,very thick,circle,minimum width=2.5pt,fill,inner sep=0pt,outer sep=2pt}] at (6.3,0) {
                 &  \node[color=chocolate](e){};
  \\
  \node[color=chocolate](e-1){};  &  \node[draw=none,fill=none](dummy){};  
                                   & \node[color=chocolate](e-2){}; 
  \\
};
\calcLength(e,dummy){mylen}
\draw[<-,very thick,>=latex,chocolate,shorten <=2pt](e) -- ++ (90:{0.5*\mylen pt});
\path(e) ++ ({0.25*\mylen pt},{0.2*\mylen pt}) node{$\astexp$};  
\path(e) ++ ({0*\mylen pt},{1*\mylen pt}) node{\Large $\chartof{\astexp}$};

\draw[chocolate,thick] (e) circle (0.12cm);
\draw[->,shorten <=2pt,shorten >=2pt,out=200,in=90] (e) to node[left,pos=0.4]{$\aact$} (e-1);
\draw[->,shorten <=2pt,shorten >=2pt,out=-20,in=90] (e) to node[right,pos=0.225,xshift={0.1*\mylen pt}]{$\bact$} (e-2);

\draw[chocolate,thick] (e-1) circle (0.12cm);
\path(e-1) ++ ({-0.35*\mylen pt},{-0.035*\mylen pt}) node{$\astexpi{1}$}; 
\draw[->,shorten <=2pt,shorten >=2pt,out=220,in=140,distance={1.25*\mylen pt}] (e-1) to node[left]{$\aact$} (e-1);
\draw[->,shorten <=2pt,shorten >=2pt,out=-20,in=200,distance={0.6*\mylen pt}] (e-1) to node[below]{$\bact$} (e-2);

\draw[thick,chocolate] (e-2) circle (0.12cm);
\path(e-2) ++ ({0.35*\mylen pt},{-0.035*\mylen pt}) node{$\astexpi{2}$};  
\draw[->,shorten <=2pt,shorten >=2pt,out=-40,in=40,distance={1.25*\mylen pt}] (e-2) to node[above,yshift={0*\mylen pt},pos=0.75]{$\bact$} (e-2);
\draw[->,shorten <=2pt,shorten >=2pt,out=160,in=20,distance={0.6*\mylen pt}] (e-2) to node[above]{$\aact$} (e-1);

\matrix[anchor=north,row sep=1cm,column sep=0.9cm,every node/.style={draw,very thick,circle,minimum width=2.5pt,fill,inner sep=0pt,outer sep=2pt}] at (11.5,0.1) {
                 &  \node(f){};
  \\[-0.1cm]
  \node(f-1){};  &  \node[draw=none,fill=none](dummy){};  
                                   & \node(f-2){}; 
  \\[0.3cm]
                 &  \node(f-3){};  &                      & \node[draw=none,fill=none](helper){};
  \\
};
\calcLength(f,dummy){mylen}
\path (helper) ++ ({0.5*\mylen pt},{-0.35*\mylen pt}) node[draw,very thick,circle,minimum width=2.5pt,fill,inner sep=0pt,outer sep=2pt](sink){}; 
\draw[<-,very thick,>=latex,chocolate](f) -- ++ (90:{0.5*\mylen pt});
\path(f) ++ ({0.3*\mylen pt},{0.2*\mylen pt}) node{$\bstexp$};  
\path(f) ++ ({0*\mylen pt},{1*\mylen pt}) node{\Large $\chartof{\bstexp}$};

\draw[->
         ] (f) to node[left,pos=0.4]{$\aacti{1}$} (f-1);
\draw[->
        ] (f) to node[right,pos=0.4]{$\aacti{2}$} (f-2);
\draw[->,out=180,in=180,distance={2.95*\mylen pt},shorten >=4pt] (f) to node[above,pos=0.3,yshift={0.05*\mylen pt}]{$\aacti{3}$} (f-3);

\path(f-1) ++ ({-0.35*\mylen pt},{-0.01*\mylen pt}) node{$\bstexpi{1}$};  
\draw[->,out=270,in=160] (f-1) to node[left]{$\aacti{3}$} (f-3);
\draw[->,out=220,in=140,distance={1.25*\mylen pt}]  (f-1) to node[left,xshift={0.1*\mylen pt}]{$\aacti{1}$} (f-1);
\draw[->,out=-30,in=210,distance={0.65*\mylen pt}]  (f-1) to node[above,yshift={-0.065*\mylen pt}]{$\aacti{2}$} (f-2);
\draw[->,out=-80,in=162.5,distance={1.3*\mylen pt}] (f-1) to node[above,pos=0.75,yshift={-0.05*\mylen pt},xshift={0*\mylen pt}]{$\bacti{1}$} (sink);

\path(f-2) ++ ({0.4*\mylen pt},{-0.01*\mylen pt}) node{$\bstexpi{2}$};  
\draw[->,out=270,in=20,distance={0.65*\mylen pt}] (f-2) to node[right]{$\aacti{3}$} (f-3);
\draw[->,out=-40,in=40,distance={1.25*\mylen pt}] (f-2) to node[right]{$\aacti{2}$} (f-2);
\draw[->,out=150,in=30,distance={0.65*\mylen pt}] (f-2) to node[above]{$\aacti{1}$} (f-1);
\draw[->] (f-2) to node[above,pos=0.7,yshift={0.075*\mylen pt},xshift={0.05*\mylen pt}]{$\bacti{2}$} (sink);

\path(f-3) ++ ({0*\mylen pt},{-0.3*\mylen pt}) node{$\bstexpi{3}$};  
\draw[->,out=230,in=310,distance={1.25*\mylen pt}] (f-3) to node[left,pos=0.2,xshift={0.1*\mylen pt}]{$\aacti{3}$} (f-3);
\draw[->,out=90,in=-30,distance={0.65*\mylen pt},shorten >=10pt] (f-3) to node[pos=0.42,left,xshift={0.11*\mylen pt}]{$\aacti{1}$} (f-1);
\draw[->,out=90,in=210,distance={0.65*\mylen pt}] (f-3) to node[right,pos=0.5]{$\aacti{2}$} (f-2);
\draw[->] (f-3) to node[below,pos=0.5,yshift={0.05*\mylen pt}]{$\bacti{3}$} (sink);

\path(sink) ++ ({0*\mylen pt},{0*\mylen pt}) node[right]{$\textit{sink}$};

\end{tikzpicture}
  
  \end{center}\vspace{-7ex}
  The chart interpretations $\chartof{\astexp}$ and $\chartof{\bstexp}$, which will be used later,
  have been constructed as expressible variants of the not expressible charts $\chartnei{1}$ and $\chartnei{2}\,$.
  In particular, $\chartof{\astexp}$ contains $\chartnei{1}$ as a subchart\vspace*{-0.1ex},
  and $\chartof{\bstexp}$ contains $\chartnei{2}$ as a subchart
  (a `subchart' arises by taking a part of a chart, and picking~a~start~vertex).
  We~finally~note that all of these charts with the exception of $\chartof{\astexp}$ are bisimulation collapses.
\end{exa}


\begin{defi}
            \label{def:bisims:LTSs}
  For $i\in\setexp{1,2}$ we consider the \LTSs\
  $\aLTSi{i} = \tuple{\statesi{i},\actions,\sredi{i},\termextsi{i}}$.
  By a \emph{bisimulation between $\aLTSi{1}$ and $\aLTSi{2}$}
  we mean a binary relation $\abisim \subseteq \statesi{1}\times\statesi{2}$ 
  with the properties that it is \nonempty, that is, $\abisim\neq\emptyset$, 
  and that for every $\pair{\astatei{1}}{\astatei{2}}\in\abisim$ the following three conditions hold:
  \begin{itemize}[labelindent=5.5em,leftmargin=*,itemsep=0.25ex,labelsep=1em]
    \item[(forth)]
      $ \forall \astateacci{1}\in\statesi{1}
          \forall \aact\in\actions
              \bigl(\,
                \astatei{1} \lti{\aact}{1} \astateacci{1}
                  \;\;\Longrightarrow\;\;
                    \exists \astateacci{2}\in\statesi{2}
                      \bigl(\, \astatei{2} \lti{\aact}{2} \astateacci{2} 
                                 \logand
                               \pair{\astateacci{1}}{\astateacci{2}}\in\abisim \,)
            \,\bigr) \punc{,} $
      
    \item[(back)]
      $ \forall \astateacci{2}\in\statesi{2}
          \forall \aact\in\actions
            \bigr(\,
              \bigr(\, 
                \exists \astateacci{1}\in\statesi{1}
                  \bigl(\, \astatei{1} \lti{\aact}{1} \astateacci{1} 
                             \logand
                           \pair{\astateacci{1}}{\astateacci{2}}\in\abisim \,)
              \,\bigr)           
                  \;\;\Longleftarrow\;\;
                \astatei{2} \lti{\aact}{2} \astateacci{2}
              \,\bigr) \punc{,} $
      
    \item[(termination)]
      $ \terminatesi{1}{\astatei{1}}
          \;\;\Longleftrightarrow\;\;
            \terminatesi{2}{\astatei{2}} \punc{.}$
  \end{itemize}
  
  For a partial function $\safun \funin \statesi{1} \rightharpoonup \statesi{2}$
  we say that $\safun$ \emph{defines} a bisimulation between $\aLTSi{1}$ and $\aLTSi{2}$
  if its graph $\descsetexp{ \pair{\avert}{\fap{f}{\avert}} }{\avert\in\statesi{1}}$ is a bisimulation between $\aLTSi{1}$ and $\aLTSi{2}$.
  We call this graph a \emph{functional}~bisimulation. 
\end{defi}

\begin{defi}[bisimulation between charts]\label{def:bisims:charts}
  For $i\in\setexp{1,2}$ we consider the charts
  $\acharti{i} = \tuple{\vertsi{i},\actions,\starti{i},\sredi{i},\termextsi{i}}$.
  
  By a \emph{bisimulation between $\acharti{1}$ and $\acharti{2}$}
  we mean a binary relation $\abisim \subseteq \vertsi{1}\times\vertsi{2}$ 
  such that $\pair{\starti{1}}{\starti{2}}\in\abisim$ ($\abisim$ relates the start vertices of $\acharti{1}$ and $\acharti{2}$),
  and $\abisim$ is a bisimulation between the underlying \LTSs. 
  
  We denote by $\acharti{1} \bisim \acharti{2}$, and say that $\acharti{1}$ and $\acharti{2}$ are \emph{bisimilar},
  if there is a bisimulation between $\acharti{1}$ and $\acharti{2}$.
  By $\acharti{1} \funbisim \acharti{2}$ we denote the stronger statement 
  that there is a partial function $f \funin \vertsi{1} \rightharpoonup \vertsi{2}$ whose graph $\descsetexp{ \pair{\avert}{\fap{f}{\avert}} }{\avert\in\vertsi{1}}$  
  is a bisimulation between~$\acharti{1}$~and~$\acharti{2}$. 
\end{defi}

Each of four charts $\chartnei{1}$, $\chartnei{2}$, $\chartof{\cstexpi{0}}$, and $\chartof{\bstexp}$  
  in Ex.~\ref{ex:chart:interpretation} is a `bisimulation collapse':
by that we mean a chart for which no two subcharts that are induced by different vertices are bisimilar. 
But that does not hold for $\chartof{\astexp}$ in which any two subcharts that are induced by different vertices are bisimilar.

\section{Loop existence and elimination}%
  \label{LEE}

The chart translation $\chartof{\cstexpi{0}}$ of $\cstexpi{0}$ as in Ex.~\ref{ex:chart:interpretation}
satisfies the `loop existence and elimination' property \LEE\ that we will explain in this section. 
For this purpose we summarize Section~3 in \cite{grab:fokk:2020:LICS}, 
  and in doing so we adapt the concepts defined there from \onefree\ star expressions
  to the full class of star expressions as defined in Section~\ref{prelims}.
The property \LEE\ is defined by a dynamic elimination procedure that analyses the structure of a chart
by peeling off `loop sub\-charts'. Such subcharts capture,
within the chart interpretation of a star expression~$\astexp$,
the behavior of the iteration of $\bstexp$ within innermost subterms $\stexpit{\bstexp}$ in~$\astexp$.

\begin{defi}\label{def:loop:chart}\nf
  A chart~$\aloop = \tuple{\verts,\actions,\start,\transs,\termexts}$ is called a \emph{loop chart} if:\vspace*{-0.25ex}
  \begin{enumerate}[label={{\rm (L\arabic*)}},leftmargin=*,align=left,itemsep=0ex]
    \item{}\label{loop:1}
      There is an infinite path from the start vertex $\start$.
    \item{}\label{loop:2}  
      Every infinite path from $\start$ returns to $\start$ after a positive number of transitions
      (and so visits $\start$ infinitely often).
    \item{}\label{loop:3}
      Immediate termination is only permitted at the start vertex, that is, $\termexts\subseteq\setexp{\start}$.
  \end{enumerate}\vspace*{-0.25ex}
  We call the transitions from $\start$ \emph{\loopentry\ transitions},
  and all other transitions \emph{\loopbody\ transitions}.
  A loop chart~$\aloop$ is a \emph{loop subchart of} a chart $\achart$
  if it is the subchart of $\achart$ rooted at some vertex $\avert\in\verts$ 
           that is generated, for a nonempty set $\asettranss$ of transitions of $\achart$ from $\avert$,
           by all paths that start with a transition in $\asettranss$ and continue onward until $\avert$ is reached again
  (so the transitions in $\asettranss$ are the \loopentrytransitions~of~$\aloop$).
\end{defi}

Both of the not expressible charts $\chartnei{1}$ and $\chartnei{2}$ in Ex.~\ref{ex:chart:interpretation} are not loop charts:
$\chartnei{1}$ violates \ref{loop:3}, and $\chartnei{2}$ violates \ref{loop:2}.
Moreover, none of these charts contains a loop subchart.
The chart $\chartof{\cstexpi{0}}$ in Ex.~\ref{ex:chart:interpretation} is not
a loop chart either, as it violates \ref{loop:2}. But we will see that $\chartof{\cstexpi{0}}$ has loop subcharts. 

Let $\aloop$ be a loop subchart of a chart~$\achart$.
The result of \emph{eliminating $\aloop$ from $\achart$}
arises by removing all \loopentrytransitions\ of $\aloop$ from $\achart$, 
and then removing all vertices and transitions that become unreachable. 
A chart $\achart$ has the \emph{loop existence and elimination property (LEE)}
if the procedure, started on~$\achart$, of repeated eliminations of loop subcharts
results in a chart that does~not~have~an~infinite~path.

For the not expressible charts $\chartnei{1}$ and $\chartnei{2}$ in Ex.~\ref{ex:chart:interpretation} the procedure stops immediately,
as they do not contain loop subcharts. As both of them have infinite paths,
it follows that they do not satisfy LEE. 

Now we consider (see below) three runs of the elimination procedure for the
chart~$\chartof{\cstexpi{0}}$ in Ex.~\ref{ex:chart:interpretation}. 
The \loopentrytransitions\ of loop subcharts that are removed 
in each step are marked in bold.
Each run witnesses that $\chartof{\cstexpi{0}}$ satisfies \LEE. 
Note that loop elimination does not yield a unique result.%
  \footnote{
    Confluence 
               can be shown 
    if a pruning operation is added that permits to drop transitions to deadlocking vertices.}
\begin{center}
  \input{figs/ex-12-LEE.tex}
\end{center}\vspace*{-0.75ex}
Runs can be recorded, in the original chart, by attaching a marking label
to transitions that get removed in the elimination procedure. 
That label is the sequence number of the corresponding elimination step.
For the three runs of loop elimination above we get the following 
marking labeled versions of $\achart$, respectively:
\begin{center} 
  \begin{tikzpicture}
  %
  %
  
\matrix[anchor=north,row sep=0.9cm,column sep=1cm,every node/.style={draw,very thick,circle,minimum width=2.5pt,fill,inner sep=0pt,outer sep=2pt}] at (-3.75,0) {
  \node(v-0-hat-1){};
  \\
  \node(v-1-hat-1){};
  \\
  \node(v-2-hat-1){};
  \\
};
\calcLength(v-0-hat-1,v-1-hat-1){mylen}
\draw[<-,very thick,>=latex,chocolate](v-0-hat-1) -- ++ (90:{0.45*\mylen pt});
\path(v-0-hat-1) ++ ({0.3*\mylen pt},{0.25*\mylen pt}) node{$\averti{0}$};
\draw[->](v-0-hat-1) to node[right,xshift={-0.05*\mylen pt},pos=0.45]{\small $\aact$} (v-1-hat-1); 
\path(v-1-hat-1) ++ ({0.325*\mylen pt},0cm) node{$\averti{1}$};
\draw[->](v-1-hat-1) to node[right,xshift={-0.05*\mylen pt},pos=0.45]{\small $\aact$} (v-2-hat-1);
\draw[->,very thick,shorten <= 5pt](v-1-hat-1) to[out=175,in=180,distance={0.75*\mylen pt}] 
         node[left,pos=0.5,xshift={0.05*\mylen pt}]{$\darkcyan{\loopnsteplab{1}}$} 
         node[above,yshift={0.05*\mylen pt},pos=0.7]{\small $\cact$} (v-0-hat-1);
\path(v-2-hat-1) ++ (-0cm,{-0.275*\mylen pt}) node{$\averti{2}$};
\draw[->,very thick](v-2-hat-1) to[out=180,in=185,distance={0.75*\mylen pt}] 
               node[left,pos=0.5,xshift={0.05*\mylen pt}]{$\darkcyan{\loopnsteplab{2}}$} 
               node[below,yshift={0.0*\mylen pt},pos=0.2]{\small $\bact$} (v-1-hat-1);
\draw[->,very thick](v-2-hat-1) to[out=0,in=0,distance={1.3*\mylen pt}] 
               node[right,pos=0.5,xshift={-0.05*\mylen pt}]{$\darkcyan{\loopnsteplab{3}}$} 
               node[below,yshift={0.00*\mylen pt},pos=0.125]{\small $\bact$} (v-0-hat-1);

\matrix[anchor=north,row sep=0.9cm,every node/.style={draw,very thick,circle,minimum width=2.5pt,fill,inner sep=0pt,outer sep=2pt}] at (0,0) {
  \node(v-0-hat-2){};
  \\
  \node(v-1-hat-2){};
  \\
  \node(v-2-hat-2){};
  \\
};
\calcLength(v-0-hat-2,v-1-hat-2){mylen}
\draw[<-,very thick,>=latex,chocolate](v-0-hat-2) -- ++ (90:{0.45*\mylen pt});
\path(v-0-hat-2) ++ ({0.3*\mylen pt},{0.25*\mylen pt}) node{$\averti{0}$};
\draw[->](v-0-hat-2) to node[right,xshift={-0.05*\mylen pt},pos=0.45]{\small $\aact$} (v-1-hat-2); 
\path(v-1-hat-2) ++ ({0.325*\mylen pt},0cm) node{$\averti{1}$};
\draw[->,very thick](v-1-hat-2) to node[right,xshift={-0.05*\mylen pt},pos=0.45]{\small $\aact$} 
                                   node[left,pos=0.45,xshift={0.05*\mylen pt}]{$\darkcyan{\loopnsteplab{1}}$} (v-2-hat-2);
\draw[->,very thick,shorten <= 5pt](v-1-hat-2) to[out=175,in=180,distance={0.75*\mylen pt}] 
         node[left,pos=0.5,xshift={0.05*\mylen pt}]{$\darkcyan{\loopnsteplab{2}}$} 
         node[above,yshift={0.05*\mylen pt},pos=0.7]{\small $\cact$} (v-0-hat-2);
\path(v-2-hat-2) ++ (-0cm,{-0.275*\mylen pt}) node{$\averti{2}$};
\draw[->](v-2-hat-2) to[out=180,in=185,distance={0.75*\mylen pt}] 
               node[below,yshift={0.0*\mylen pt},pos=0.2]{\small $\bact$} (v-1-hat-2);
\draw[->](v-2-hat-2) to[out=0,in=0,distance={1.3*\mylen pt}]  
               node[below,yshift={0.00*\mylen pt},pos=0.125]{\small $\bact$} (v-0-hat-2);

\matrix[anchor=north,row sep=0.9cm,every node/.style={draw,very thick,circle,minimum width=2.5pt,fill,inner sep=0pt,outer sep=2pt}] at (3.75,0) {
  \node(v-0-hat-3){};
  \\
  \node(v-1-hat-3){};
  \\
  \node(v-2-hat-3){};
  \\
};
\calcLength(v-0-hat-3,v-1-hat-3){mylen}
\draw[<-,very thick,>=latex,chocolate](v-0-hat-3) -- ++ (90:{0.45*\mylen pt});
\path(v-0-hat-3) ++ ({0.3*\mylen pt},{0.25*\mylen pt}) node{$\averti{0}$};
\draw[->](v-0-hat-3) to node[right,xshift={-0.05*\mylen pt},pos=0.45]{\small $\aact$} (v-1-hat-3); 
\path(v-1-hat-3) ++ ({0.325*\mylen pt},0cm) node{$\averti{1}$};
\draw[->,very thick](v-1-hat-3) to node[left,xshift={0.05*\mylen pt},pos=0.45]{\small $\aact$} 
                                   node[right,pos=0.45,xshift={-0.05*\mylen pt}]{$\darkcyan{\loopnsteplab{1}}$} (v-2-hat-3);
\draw[->,very thick,shorten <= 5pt](v-1-hat-3) to[out=175,in=180,distance={0.75*\mylen pt}] 
                                                 node[left,pos=0.5,xshift={0.05*\mylen pt}]{$\darkcyan{\loopnsteplab{1}}$} 
                                                 node[right,xshift={-0.05*\mylen pt},pos=0.5]{\small $\cact$} (v-0-hat-3);
\path(v-2-hat-3) ++ (-0cm,{-0.275*\mylen pt}) node{$\averti{2}$};
\draw[->](v-2-hat-3) to[out=180,in=185,distance={0.75*\mylen pt}] 
                     node[below,yshift={0.0*\mylen pt},pos=0.2]{\small $\bact$} (v-1-hat-3);
\draw[->](v-2-hat-3) to[out=0,in=0,distance={1.3*\mylen pt}] 
                     node[below,yshift={0.00*\mylen pt},pos=0.125]{\small $\bact$} (v-0-hat-3);

\end{tikzpicture}  
\end{center}\vspace*{-0.75ex}
Since all three runs were successful (as they yield charts without infinite paths), 
these recordings (marking-labeled charts) can be viewed as `\LEEwitnesses'.
We now will define the concept of a `layered \LEEwitness' (\LLEEwitness), i.e., a \LEEwitness\
with the added constraint that in the recorded run of the loop elimination procedure
it never happens that a \loopentrytransition\ is removed from within the body of 
a previously removed loop subchart. This refined concept has simpler properties, but is equally powerful.


\begin{defi}
  An \emph{\entrybodylabeling\ of a chart $\achart = \tuple{\verts,\actions,\start,\transs,\termexts}$}
    is a chart~$\acharthat = \tuple{\verts,\actions\times\nat,\start,\transshat,\termexts}$ 
  where $\transshat\subseteq\actions\times\nat$ arises from $\transs$ by adding, for each transition~$\atrans = \triple{\averti{1}}{\aact}{\averti{2}}\in\transs$, 
  to the action label~$\aact$ of $\atrans$ a 
  \emph{marking label} $\aLname\in\nat$,
  yielding \mbox{$\atranshat = \triple{\averti{1}}{\pair{\aact}{\aLname}}{\averti{2}} \in \transshat$}. 
  In an \entrybodylabeling\ we call transitions with marking label $0$ \emph{body transitions},
  and transitions with marking labels in $\natplus$ \emph{entry transitions}.
    
  By an \emph{\entrybodylabeling\ of an \LTS~$\aLTS = \tuple{\states,\actions,\transs,\termexts}$}
  we mean an \LTS~$\aLTShat = \tuple{\states,\actions\times\nat,\transshat,\termexts}$ 
  where $\transshat \subseteq \states\times(\actions\times\nat)\times\states$ 
  arises from $\transs$ by adding marking labels in $\nat$ to the action labels of transitions.
  
  We define the following designations and concepts for \LTSs, but will use them also for charts.
  Let $\aLTShat$ be an \entrybodylabeling\ of $\aLTS$, and let $\avert$ and $\bvert$ be vertices of $\aLTS$ and $\aLTShat$.
  We denote by $\avert \redi{\bodylabcol} \bvert$\vspace*{-3pt} that there is a body transition $\avert \lt{\pair{\aact}{0}} \bvert$ in $\aLTShat$ for some $\aact\in\actions$,
  and by $\avert \redi{\darkcyan{\loopnsteplab{\aLname}}} \bvert$, for $\aLname\in\natplus$
  that there\vspace*{-3.5pt} is an \entrytransition\ $\avert \lt{\pair{\aact}{\aLname}} \bvert$ in $\aLTShat$ for some $\aact\in\actions$.
  By the set $\entriesof{\aLTShat}$ of \emph{\entrytransition\ identifiers} we denote the set of pairs $\pair{\avert}{\aLname}\in\verts\times\natplus$ 
  such that an \entrytransition\ $\sloopnstepto{\darkcyan{\aLname}}$ departs from $\avert$ in $\aLTShat$. 
  For $\pair{\avert}{\aLname}\in\entriesof{\aLTShat}$,
  we define by $\indsubchartinat{\aLTShatsubscript}{\avert,\aLname}$ the subchart of $\aLTS$ 
  with start vertex $\start$
  that consists of the vertices and transitions which occur on paths in $\aLTS$ 
  as follows: any path that starts with a $\sloopnstepto{\darkcyan{\aLname}}$ \entrytransition\ from $\avert$,
  continues with body transitions only (thus does not cross another \entrytransition), and halts immediately if $\avert$~is~revisited. 
\end{defi}

The three recordings obtained above of the loop elimination procedure for the chart~$\chartof{\cstexpi{0}}$ in Ex.~\ref{ex:chart:interpretation}
indicate \entrybodylabelings\ by signaling the \entrytransitions\ but neglecting body-step labels $0$.

\begin{defi}\label{def:LLEEwitness}\nf
  Let $\aLTS = \tuple{\verts,\actions,\transs,\termexts}$ be an \LTS.
  A \emph{\LLEEwitness} (a \emph{layered \LEEwitness}) \emph{of $\aLTS$}
  is an \entrybodylabeling~$\aLTShat$ of $\aLTS$ that satisfies the following three properties:
  \begin{enumerate}[label={\mbox{\rm (W\arabic*)}},leftmargin=*,align=left,itemsep=0ex]
    \item{}\label{LLEEw:1}%
      \emph{Body-step termination:} There is no infinite path of $\sredi{\bodylabcol}$ transitions in $\aLTS$.
    \item{}\label{LLEEw:2}%
      \emph{Loop condition:} 
      For all $\pair{\astate}{\aLname}\in\entriesof{\aLTShat}$, \mbox{}
        $\indsubchartinat{\aLTShatsubscript}{\astate,\aLname}$ is a loop chart.
    \item{}\label{LLEEw:3}%
      \emph{Layeredness:} 
      For all $\pair{\astate}{\aLname}\in\entriesof{\aLTShat}$, 
        if an \entrytransition\ $\bstate \loopnstepto{\darkcyan{\bLname}} \bstateacc$ departs from a state $\bstate\neq \astate$ of $\indsubchartinat{\aLTShatsubscript}{\astate,\aLname}$,
        then its marking label $\bLname$ satisfies $\bLname < \aLname$.
  \end{enumerate}
  The condition \ref{LLEEw:2} justifies to call an \entrytransition\ in a \LLEEwitness\
  a \emph{\loopentrytransition}.
  For a \loopentrytransition\ $\sredi{\darkcyan{\loopnsteplab{\bLname}}}$ with $\bLname\in\natplus$, we call $\bLname$ its \emph{loop level}. 
  
  For a chart $\achart = \tuple{\verts,\actions,\start,\transs,\termexts}$, we define a \emph{\LLEEwitness~$\acharthat$}
  analogously as an \entrybodylabeling~$\acharthat$ of $\achart$ with the properties \ref{LLEEw:1}, \ref{LLEEw:2}, and \ref{LLEEw:3} with $\achart$ and $\acharthat$
  for $\aLTS$ and $\aLTShat$, respectively.
\end{defi}

\begin{exa}
  The three \entrybodylabelings\ of the chart $\chartof{\cstexpi{0}}$ in Ex.~\ref{ex:chart:interpretation}
  that we have obtained as recordings of runs of the loop elimination procedure
  are \LLEEwitnesses~of~$\chartof{\cstexpi{0}}$, as is easy to verify. 
\end{exa}

\begin{prop}
  If a chart $\achart$ has a \LLEEwitness, then it satisfies \LEE. 
\end{prop}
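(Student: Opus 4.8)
The plan is to prove the Proposition by induction on the number $\cardinalitynf{\entriesof{\acharthat}}$ of \entrytransition\ identifiers of a \LLEEwitness~$\acharthat$ of~$\achart$, and along the way to \emph{construct} a terminating run of the loop-subchart elimination procedure on~$\achart$ whose final chart has no infinite path.

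For the base case $\cardinalitynf{\entriesof{\acharthat}} = 0$, every transition of~$\achart$ is a body transition, so by~\ref{LLEEw:1} the chart~$\achart$ has no infinite path at all. Hence $\achart$ contains no loop subchart: a loop subchart, being a loop chart, would carry an infinite path by~\ref{loop:1}, which extends to an infinite path of~$\achart$ since the loop subchart's start vertex is reachable in~$\achart$. Therefore the elimination procedure on~$\achart$ halts immediately with result~$\achart$, which has no infinite path.

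For the induction step, assume $\cardinalitynf{\entriesof{\acharthat}} \geq 1$. First I would choose a pair $\pair{\astate}{\aLname} \in \entriesof{\acharthat}$ with $\aLname$ \emph{minimal} among the loop levels occurring in~$\acharthat$, and set $\aloop \defdby \indsubchartinat{\acharthat}{\astate,\aLname}$. By~\ref{LLEEw:3} no \entrytransition\ of~$\achart$ departs from a body vertex of~$\aloop$ (such a transition would have loop level strictly below the minimal level~$\aLname$); consequently $\aloop$ has no \loopentrytransitions\ dangling off its body, and the subchart of~$\achart$ generated by the (nonempty) set of level-$\aLname$ \entrytransitions\ from~$\astate$ via the paths that return to~$\astate$ is exactly~$\aloop$. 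Since $\aloop$ is a loop chart by~\ref{LLEEw:2}, it is a genuine loop subchart of~$\achart$. Let $\achartacc$ be the chart obtained by eliminating~$\aloop$ from~$\achart$, and let~$\acharthatacc$ be the \entrybodylabeling\ obtained by restricting the marking labels of~$\acharthat$ to~$\achartacc$. The core step is to verify that $\acharthatacc$ is again a \LLEEwitness, now with $\cardinalitynf{\entriesof{\acharthatacc}} < \cardinalitynf{\entriesof{\acharthat}}$: property~\ref{LLEEw:1} is inherited because $\achartacc$ is a sub-\LTS\ of~$\achart$; and for~\ref{LLEEw:2} and~\ref{LLEEw:3} one checks that for every surviving identifier $\pair{\bstate}{\bLname} \in \entriesof{\acharthatacc}$ the induced subchart $\indsubchartinat{\acharthatacc}{\bstate,\bLname}$ computed in~$\achartacc$ coincides with $\indsubchartinat{\acharthat}{\bstate,\bLname}$ — none of the transitions of the latter (the level-$\bLname$ \entrytransitions\ from~$\bstate$ and the body transitions reachable from there) is one of the removed level-$\aLname$ \entrytransitions\ from~$\astate$, as that would force $\pair{\bstate}{\bLname} = \pair{\astate}{\aLname}$, which is no longer in~$\entriesof{\acharthatacc}$; and all these transitions stay reachable from the start vertex in~$\achartacc$ because $\bstate$ does. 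The count of \entrytransition\ identifiers strictly decreases since $\pair{\astate}{\aLname}$ is removed while none is added. By the induction hypothesis there is a terminating run of loop eliminations on~$\achartacc$ ending in a chart without infinite path; prepending to it the single elimination of~$\aloop$ from~$\achart$ yields such a run on~$\achart$, so $\achart$ satisfies~\LEE.

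I expect the main obstacle to be exactly the bookkeeping in the induction step: showing that the restriction~$\acharthatacc$ of the \LLEEwitness\ to~$\achartacc$ is still a \LLEEwitness. Concretely this means checking that removing the \loopentrytransitions\ of~$\aloop$ (together with the vertices and transitions that thereby become unreachable) neither changes the body-transition relation of the surviving chart nor disturbs any of the other induced loop subcharts $\indsubchartinat{\acharthat}{\bstate,\bLname}$. The choice of~$\aLname$ as a minimal loop level — which, through the layeredness condition~\ref{LLEEw:3}, is what guarantees that~$\aloop$ really is a loop subchart of~$\achart$ and not a loop chart with nested sub-loops hanging off its body vertices — is precisely what makes this argument go through.
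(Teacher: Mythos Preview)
Your proposal is correct and follows essentially the same approach as the paper: pick an \entrytransition\ identifier $\pair{\astate}{\aLname}$ with minimal loop level, use \ref{LLEEw:2} together with minimality and \ref{LLEEw:3} to conclude that the induced subchart is a genuine loop subchart, eliminate it, and iterate until only body transitions remain, at which point \ref{LLEEw:1} finishes the argument. The paper presents this as a direct iterative procedure rather than an explicit induction on $\cardinalitynf{\entriesof{\acharthat}}$, and omits the bookkeeping you spell out about the restricted labeling~$\acharthatacc$ remaining a \LLEEwitness, but the substance is the same.
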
\vspace{-2ex}

\begin{proof}
  Let $\acharthat$ be a \LLEEwitness\ of a chart $\achart$. 
  Repeatedly pick an \entrytransition\ identifier $\pair{\avert}{\aLname}\in\entriesof{\acharthat}$ 
  with $\aLname\in\natplus$ minimal, remove the loop subchart that is generated by \loopentry\ transitions of level $\aLname$ from $\avert$
  (it is indeed a loop by condition~\ref{LLEEw:2} on $\acharthat$, 
   noting that minimality of $\aLname$ and condition~\ref{LLEEw:3} on $\acharthat$ ensure
   the absence of departing \loopentrytransitions\ of lower level), and perform garbage collection. 
  Eventually the part of $\achart$ that is reachable by body transitions from the start vertex
  is obtained. This subchart does not have an infinite path due to condition~\ref{LLEEw:1} on $\acharthat$.
  Therefore $\achart$ satisfies \LEE. 
\end{proof}

The condition \ref{LLEEw:2} on a \LLEEwitness~$\acharthat$ of a chart~$\achart$
requires the loop structure defined by $\acharthat$ to be hierarchical. 
This permits to extract a star expression $\astexptilde$ from $\acharthat$ (defined in \cite{grab:fokk:2020:LICS,grab:fokk:2020:arxiv}) 
that expresses $\achart$ in the sense that $\chartof{\astexptilde}\bisim\achart$ holds,
intuitively by unfolding the underlying chart~$\achart$ to the syntax tree of~$\astexptilde$.

\begin{rem}\label{rem:expressible}\nf
  In \cite{grab:fokk:2020:LICS,grab:fokk:2020:arxiv} we established a connection between charts that have a \LLEEwitness\ (and hence satisfy \LEE)
  and charts that are expressible by \onefree\ star expressions (that is, star expressions without~$\stexpone$, and with binary star iteration instead of unary star iteration).
  By saying that a chart~$\achart$ `is expressible' we mean here that $\achart$ is bisimilar to the 
  chart interpretation $\chartof{\astexptilde}$ of some star expression $\astexptilde$. 
  Now Cor.~6.10 in \cite{grab:fokk:2020:LICS,grab:fokk:2020:arxiv} states 
  that if a chart is expressible by a \onefree\ star expression 
  then its bisimulation collapse has a \LLEEwitness, and thus satisfies \LEE. 
  This statement entails that neither of the charts $\chartnei{1}$ and $\chartnei{2}$ in Ex.~\ref{ex:chart:interpretation}
  is expressible by a \onefree\ star expression,
  because both are bisimulation collapses, and neither of them satisfies \LEE, 
  as we have already observed above.
\end{rem}

\section{LEE may fail for process interpretations of star expressions}%
  \label{LLEE:fail}

The chart interpretations $\chartof{\astexp}$ of $\astexp$, and $\chartof{\bstexp}$ of $\bstexp$
in Ex.~\ref{ex:chart:interpretation} do not satisfy \LEE, contrasting with $\chartof{\cstexpi{0}}$. 
%
For $\chartof{\astexp}$ we find the following run of the loop elimination procedure
that successively eliminates the two loop subcharts that are induced by the cycling transitions at $\astexpi{1}$ and $\astexpi{2}$:
\begin{center}
  \begin{tikzpicture}
  %
  
\matrix[anchor=north,row sep=1cm,column sep=0.7cm,every node/.style={draw,very thick,circle,minimum width=2.5pt,fill,inner sep=0pt,outer sep=2pt}] at (0,0) {
                 &  \node[color=chocolate](e){};
  \\
  \node[color=chocolate](e-1){};  &  \node[draw=none,fill=none](dummy){};  
                                   & \node[color=chocolate](e-2){}; 
  \\
};
\calcLength(e,dummy){mylen}
\draw[<-,very thick,>=latex,chocolate,shorten <=2pt](e) -- ++ (90:{0.5*\mylen pt});
\path(e) ++ ({0.25*\mylen pt},{0.2*\mylen pt}) node{$\astexp$};  
\path(e) ++ ({-1.25*\mylen pt},{0.2*\mylen pt}) node{\Large $\chartof{\astexp}$};

\draw[thick,chocolate] (e) circle (0.12cm);
\draw[->,shorten <=2pt,shorten >=2pt,out=200,in=90] (e) to node[left,pos=0.4]{$\aact$} (e-1);
\draw[->,shorten <=2pt,shorten >=2pt,out=-20,in=90] (e) to node[right,pos=0.4]{$\bact$} (e-2);

\draw[thick,chocolate] (e-1) circle (0.12cm);
\path(e-1) ++ ({-0.35*\mylen pt},{-0.035*\mylen pt}) node{$\astexpi{1}$}; 
\draw[->,very thick,shorten <=2pt,shorten >=2pt,out=220,in=140,distance={1.25*\mylen pt}] (e-1) to node[left]{$\aact$} (e-1);
\draw[->,shorten <=2pt,shorten >=2pt,out=-20,in=200,distance={0.6*\mylen pt}] (e-1) to node[below]{$\bact$} (e-2);

\draw[thick,chocolate] (e-2) circle (0.12cm);
\path(e-2) ++ ({0.35*\mylen pt},{-0.035*\mylen pt}) node{$\astexpi{2}$};  
\draw[->,shorten <=2pt,shorten >=2pt,out=-40,in=40,distance={1.25*\mylen pt}] (e-2) to node[right]{$\bact$} (e-2);
\draw[->,shorten <=2pt,shorten >=2pt,out=160,in=20,distance={0.6*\mylen pt}] (e-2) to node[above]{$\aact$} (e-1);

\matrix[anchor=north,row sep=1cm,column sep=0.7cm,every node/.style={draw,very thick,circle,minimum width=2.5pt,fill,inner sep=0pt,outer sep=2pt}] at (5,0) {
                 &  \node[color=chocolate](e-1){};
  \\
  \node[color=chocolate](e-1-1){};  &  \node[draw=none,fill=none](dummy-1){};  
                                   & \node[color=chocolate](e-2-1){}; 
  \\
};
\calcLength(e-1,dummy-1){mylen}
\draw[<-,very thick,>=latex,chocolate,shorten <=2pt](e-1) -- ++ (90:{0.5*\mylen pt});
\path(e-1) ++ ({0.25*\mylen pt},{0.2*\mylen pt}) node{$\astexp$};  

\draw[thick,chocolate] (e-1) circle (0.12cm);
\draw[->,shorten <=2pt,shorten >=2pt,out=200,in=90] (e-1) to node[left,pos=0.4]{$\aact$} (e-1-1);
\draw[->,shorten <=2pt,shorten >=2pt,out=-20,in=90] (e-1) to node[right,pos=0.4]{$\bact$} (e-2-1);

\draw[thick,chocolate] (e-1-1) circle (0.12cm);
\path(e-1-1) ++ ({-0.35*\mylen pt},{-0.035*\mylen pt}) node{$\astexpi{1}$}; 
\draw[->,shorten <=2pt,shorten >=2pt,out=-20,in=200,distance={0.6*\mylen pt}] (e-1-1) to node[below]{$\bact$} (e-2-1);

\draw[thick,chocolate] (e-2-1) circle (0.12cm);
\path(e-2-1) ++ ({0.35*\mylen pt},{-0.035*\mylen pt}) node{$\astexpi{2}$};  
\draw[->,very thick,shorten <=2pt,shorten >=2pt,out=-40,in=40,distance={1.25*\mylen pt}] (e-2-1) to node[right]{$\bact$} (e-2-1);
\draw[->,shorten <=2pt,shorten >=2pt,out=160,in=20,distance={0.6*\mylen pt}] (e-2-1) to node[above]{$\aact$} (e-1-1);

\draw[-implies,thick,double equal sign distance,
               shorten <= 1.35cm,shorten >= 1.35cm
               ] (e) to node[below,pos=0.7]{\scriptsize elim} (e-1);

\matrix[anchor=north,row sep=1cm,column sep=0.7cm,every node/.style={draw,very thick,circle,minimum width=2.5pt,fill,inner sep=0pt,outer sep=2pt}] at (10,0) {
                 &  \node[color=chocolate](e-2){};
  \\
  \node[color=chocolate](e-1-2){};  &  \node[draw=none,fill=none](dummy-2){};  
                                   & \node[color=chocolate](e-2-2){}; 
  \\
};
\calcLength(e-2,dummy-2){mylen}
\draw[<-,very thick,>=latex,chocolate,shorten <=2pt](e-2) -- ++ (90:{0.5*\mylen pt});
\path(e-2) ++ ({0.25*\mylen pt},{0.2*\mylen pt}) node{$\astexp$}; 
\path(e-2) ++ ({1.25*\mylen pt},{0.2*\mylen pt}) node{\Large $\achart''$};

\draw[thick,chocolate] (e-2) circle (0.12cm);
\draw[->,shorten <=2pt,shorten >=2pt,out=200,in=90] (e-2) to node[left,pos=0.4]{$\aact$} (e-1-2);
\draw[->,shorten <=2pt,shorten >=2pt,out=-20,in=90] (e-2) to node[right,pos=0.4]{$\bact$} (e-2-2);

\draw[thick,chocolate] (e-1-2) circle (0.12cm);
\path(e-1-2) ++ ({-0.35*\mylen pt},{-0.035*\mylen pt}) node{$\astexpi{1}$}; 
\draw[->,shorten <=2pt,shorten >=2pt,out=-20,in=200,distance={0.6*\mylen pt}] (e-1-2) to node[below]{$\bact$} (e-2-2);

\draw[thick,chocolate] (e-2-2) circle (0.12cm);
\path(e-2-2) ++ ({0.35*\mylen pt},{-0.035*\mylen pt}) node{$\astexpi{2}$};  
\draw[->,shorten <=2pt,shorten >=2pt,out=160,in=20,distance={0.6*\mylen pt}] (e-2-2) to node[above]{$\aact$} (e-1-2);

\draw[-implies,thick,double equal sign distance,
               shorten <= 1.35cm,shorten >= 1.35cm
               ] (e-1) to node[below,pos=0.7]{\scriptsize elim} (e-2);

\end{tikzpicture}
\end{center}\vspace{-2.5ex}
The resulting chart $\achart''$ does not contain loop subcharts any more,
because taking, for example, a transition from $\astexpi{1}$ to $\astexpi{2}$ as an \entrytransition\ 
does not yield a loop subchart, because in the induced subchart immediate termination is not only possible at the start vertex~$\astexpi{1}$ but also in the body vertex $\astexpi{2}$,
in contradiction to \ref{loop:3}.  
But while $\achart''$ does not contain a loop subchart any more, it still has an infinite trace.  
Therefore it follows that $\chartof{\astexp}$ does not satisfy \LEE.

In order to see that $\chartof{\bstexp}$ does not satisfy \LEE, we can consider a run of the loop elimination procedure 
that successively removes the cyclic transitions at $\bstexpi{1}$, $\bstexpi{2}$, and $\bstexpi{3}$. 
After these removals a variant of the not expressible chart $\chartnei{2}$ is obtained that still describes an infinite behavior, 
but that does not contain any\vspace*{-0.25ex} loop subchart. The latter can be argued analogously as for $\chartnei{2}$, namely that 
for all choices of \entrytransitions\ between $\bstexpi{1}$, $\bstexpi{2}$, and $\bstexpi{3}$ the loop condition \ref{loop:2} fails.
We conclude that $\chartof{\bstexp}$ does not~satisfy~\LEE.

The reason for this failure of \LEE\ is that, while the syntax trees of star expressions can provide a nested loop-chart structure,
this is not guaranteed by the specific form of the TSS~$\StExpTSS$. 
Execution of an iteration $\stexpit{\cstexp}$ in an expression $\stexpprod{\stexpit{\cstexp}}{\dstexp}$  
leads eventually, in case that termination is reachable in $\cstexp$, to an iterated derivative $\stexpprod{(\stexpprod{\stexpone}{\stexpit{\cstexp}})}{\dstexp}$. 
Also, as in the examples above, 
an iterated derivative $\stexpprod{(\stexpprod{\cstexptilde}{\stexpit{\cstexp}})}{\dstexp}$ with $\terminates{\cstexptilde}$ may be reached. 
In these cases,
continued execution will bypass the initial term $\stexpprod{\stexpit{\cstexp}}{\dstexp}$, and either proceed 
with another execution of the iteration to $\stexpprod{(\stexpprod{\cstexpacc}{\stexpit{\cstexp}})}{\dstexp}$, where $\cstexpacc$ is a derivative of $\cstexp$,
or take a step into the exit to $\dstexpacc$, where $\dstexpacc$ is a derivative of~$\dstexp$.  
In both cases the execution does not return to the initial term $\stexpit{\cstexp}$ of the execution,
as would be required for a loop subchart at $\stexpit{\cstexp}$ to arise in accordance with loop condition~\ref{loop:2}.%

\section{Recovering LEE for a variant definition 
         of the process semantics}%
  \label{recover:LEE}

A remedy for the frequent failure of \LEE\ for the chart translation of star expressions can consist in the use
of `\onetransitions'. Such transitions may be used to create a back-link to an expression $\stexpprod{\stexpit{\cstexp}}{\dstexp}$
from an iterated derivative $\stexpprod{(\stexpprod{\cstexptilde}{\stexpit{\cstexp}})}{\dstexp}$ with $\terminates{\cstexptilde}$
(where $\cstexptilde$ is an iterated derivative of $\cstexp$) that is reached by a descent of the execution
into the body of $\cstexp$. 
This requires an adapted refinement of the TSS~$\StExpTSS$ from page~\pageref{StExpTSS}. 
While different such refinements are conceivable, 
our choice is to introduce \onetransitions\ most sparingly, making sure that 
every \onetransition\ can be construed as a backlink in an accompanying \LLEEwitness.


In particular
we want to create transition rules that facilitate a back-link to an expression $\stexpit{\cstexp}$ after the execution has descended into $\cstexp$
reaching $\stexpprod{\cstexptilde}{\stexpit{\cstexp}}$ with $\terminates{\cstexptilde}$.
In order to distinguish a concatenation expression $\stexpprod{\cstexptilde}{\stexpit{\cstexp}}$ 
that arises from the descent of the execution into an iteration
$\stexpit{\cstexp}$ from other concatenation expressions 
we introduce a variant operation $\sstexpstackprod$. 
The rules of the refined TSS should guarantee that in the example
the reached iterated derivative of $\stexpit{\cstexp}$ is a `stacked star expression' $\stexpstackprod{\csstexp}{\stexpit{\cstexp}}$
where $\csstexp$ is itself a stacked star expression that denotes an iterated derivative of $\cstexp$.
If now $\csstexp$ is also a star expression~$\cstexptilde$~with~$\terminates{\cstexptilde}$,
then the expression $\stexpstackprod{\csstexp}{\stexpit{\cstexp}}$ of the form $\stexpstackprod{\cstexptilde}{\stexpit{\cstexp}}$
should permit a \onetransition\ that returns~to~$\stexpit{\cstexp}$. 

This intuition guided the definition of the rules of the TSS~$\stackStExpTSS$ in Def.~\ref{def:stackStExpTSS} below,
starting from the adaptation of the 
              rule for steps from iterations $\stexpit{\astexp}$,
and the rule that creates \onetransition\ backlinks to iterations $\stexpit{\astexpi{2}}$ 
from stacked expressions $\stexpstackprod{\asstexpi{1}}{\stexpit{\astexpi{2}}}$ with $\terminates{\asstexpi{1}}$. 
The `stacked product' $\sstexpstackprod$ has the following features:
$\stexpstackprod{\asstexp}{\stexpit{\astexp}}$ never permits immediate termination;
for defining transitions it behaves similarly as concatenation $\sstexpprod$ except that a transition 
from $\stexpstackprod{\asstexp}{\stexpit{\astexp}}$ into $\stexpit{\astexp}$ when $\asstexp$ permits immediate termination
now requires a \protect\onetransition\ to $\stexpit{\astexp}$ first.
The formulation of these rules of $\stackStExpTSS$ led to the tailor-made set of stacked star expressions as defined below.

\begin{defi}
            \nf\label{def:stackStExp}\enlargethispage{2ex}
  Let $\actions$ be a set whose members we call \emph{actions}.
  The set $\stackStExpover{\actions}$ of \emph{stacked star expressions over (actions in) $\actions$} is defined by the following grammar:
  \begin{center}
    $
    \asstexp
      \;\;\BNFdefdby\;\;
        \astexp
          \BNFor
        \stexpprod{\asstexp}{\astexp}
          \BNFor
        \stexpstackprod{\asstexp}{\stexpit{\astexp}}
          \qquad\text{(where $\astexp\in\StExpover{\actions}$)} \punc{.}  
          $
  \end{center}
  Note that the set $\StExpover{\actions}$ of star expressions would arise again if the clause $\stexpstackprod{\asstexp}{\stexpit{\astexp}}$ were dropped.
  
  The \emph{star height} $\sth{\asstexp}$ of stacked star expressions $\asstexp$ is defined by adding 
  $\sth{ \stexpprod{\asstexp}{\astexp} } \defdby \max \setexp{ \sth{\asstexp} + \sth{\astexp} }$,
  and
  $\sth{ \stexpstackprod{\asstexp}{\stexpit{\astexp}} } \defdby \max \setexp{ \sth{\asstexp} + \sth{\stexpit{\astexp}} }$
  to the defining clauses for star height of star expressions.
 
  The \emph{projection function} $\sproj \funin \stackStExpover{\actions} \to \StExpover{\actions}$  
  is defined by interpreting $\sstexpstackprod$ as $\sstexpprod$ by the clauses:
  $\proj{\stexpprod{\asstexp}{\astexp}} \defdby \stexpprod{\proj{\asstexp}}{\astexp}$, \mbox{}
  $\proj{\stexpstackprod{\asstexp}{\stexpit{\astexp}}} \defdby \stexpprod{\proj{\asstexp}}{\stexpit{\astexp}}$, \mbox{}
  and $\proj{\astexp} \defdby \astexp$,
  for all $\asstexp\in\stackStExpover{\actions}$, and $\astexp\in\StExpover{\actions}$.
\end{defi}  
  
  %
  %
  
\renewcommand{\sone}{\alert{1}}  
  
\begin{defi}  
  By a \emph{labeled transition system with termination, actions in $\actions$ and empty steps (a \oneLTS)}
  we mean a 5\nb-tuple $\tuple{\states,\actions,\sone,\sred,\sterminates}$ 
    where $\states$ is a \nonempty\ set of \emph{states},
    $\actions$ is a set of $\emph{(proper) action labels}$,
    $\sone\notin\actions$ is the specified \emph{empty step label},
    $\sred \subseteq \states\times\oneactions\times\states$ is the \emph{labeled transition relation},
    where $\oneactions \defdby \actions \cup \setexp{\sone}$ is the set of action labels including $\sone$, 
    and $\sterminates \subseteq \verts$ is a set of \emph{states with immediate termination}. 
    Note that then $\tuple{\states,\oneactions,\sred,\sterminates}$ is an \LTS.
  In such a \oneLTS, 
  we call a transition in $\transs\cap(\states\times\actions\times\states)$ (labeled by a \emph{proper action} in $\actions$) 
          a \emph{proper transition},
  and a transition in $\transs\cap(\states\times\setexp{\sone}\times\states)$ (labeled by the \emph{empty-step symbol}~$\sone$)
          a \emph{\onetransition} .  
  Reserving non-underlined action labels like $\aact,\bact,\ldots$ for proper actions,
  we use underlined action label symbols like $\alert{\aoneact}$ for actions labels in the set $\oneactions$ 
  that includes the label $\sone$.    
\end{defi}
  
\begin{defi}\label{def:stackStExpTSS}\nf  
  The transition system specification~$\stackStExpTSSover{\actions}$ 
  has the following axioms and rules,
  where \mbox{$\alert{\stexpone}\notin\actions$} is an additional label (for representing empty steps),
  $\aact\in\actions$,
  $\alert{\aoneact}\in\oneactions\defdby\actions\cup\setexp{\alert{\stexpone}}$,
  stacked star expressions $\asstexpi{1},\asstexpi{2},\asstexpacci{1},\asstexpacci{2},\asstexpacc\in\stackStExpTSSover{\actions}$,
  and star expressions $\astexpi{1},\astexpi{2},\stexpit{\astexpi{2}},\stexpit{\astexp}\in\StExpover{\actions}$
  (here and below we highlight in red transitions that may involve $\alert{\stexpone}$\nb-transitions):
  \begin{center}
    $
   \begin{aligned}
     &
     \AxiomC{\phantom{$\terminates{\stexpone}$}}
     \UnaryInfC{$\terminates{\stexpone}$}
     \DisplayProof
     & & 
     & \hspace*{2ex} & & & 
     \AxiomC{$ \terminates{\astexpi{i}} $}
     \RightLabel{\scriptsize $(i\in\setexp{1,2})$}
     \UnaryInfC{$ \terminates{(\stexpsum{\astexpi{1}}{\astexpi{2}})} $}
     \DisplayProof
     & \hspace*{2ex} & & & 
     \AxiomC{$\terminates{\astexpi{1}}$}
     \AxiomC{$\terminates{\astexpi{2}}$}
     \BinaryInfC{$\terminates{(\stexpprod{\astexpi{1}}{\astexpi{2}})}$}
     \DisplayProof
     & \hspace*{2ex} & & & 
     \AxiomC{$\phantom{\terminates{\stexpit{\astexp}}}$}
     \UnaryInfC{$\terminates{(\stexpit{\astexp})}$}
     \DisplayProof
   \end{aligned} 
   $
   \\[0.35ex]
   $
   \begin{aligned}
     & 
     \AxiomC{$\phantom{a_i \:\lt{a_i}\: \stexpone}$}
     \UnaryInfC{$a \:\lt{a}\: \stexpone$}
     \DisplayProof
     & & & &
     \AxiomC{$ \astexpi{i} \:\lt{\aact}\: \asstexpacci{i} $}
     \RightLabel{\scriptsize $(i\in\setexp{1,2})$}
     \UnaryInfC{$ \stexpsum{\astexpi{1}}{\astexpi{2}} \:\lt{\aact}\: \asstexpacci{i} $}
     \DisplayProof 
     & & & & 
     \AxiomC{$ \asstexpi{1} \:\lt{\alert{\aoneact}}\: \asstexpacci{1} $}
     \UnaryInfC{$ \stexpprod{\asstexpi{1}}{\astexpi{2}} \:\lt{\alert{\aoneact}}\: \stexpprod{\asstexpacci{1}}{\astexpi{2}} $}
     \DisplayProof
     & &
     \AxiomC{$\terminates{\astexpi{1}}$}
     \AxiomC{$ \astexpi{2} \:\lt{a}\: \asstexpacci{2} $}
     \BinaryInfC{$ \stexpprod{\astexpi{1}}{\astexpi{2}} \:\lt{\aact}\: \asstexpacci{2} $}
     \DisplayProof
     & & & & 
     \AxiomC{$   \phantom{\astexpi{1}}
               \astexp \:\lt{a}\: \asstexpacc 
                 \phantom{\asstexpacci{1}} $}
     \UnaryInfC{$\stexpit{\astexp} \:\lt{a}\: \stexpstackprod{\asstexpacc}{\stexpit{\astexp}}$}
     \DisplayProof
     \\[-0.25ex]
     & 
     & & & &
     & & & & 
     \AxiomC{$ \asstexpi{1} \:\lt{\alert{\aoneact}}\: \asstexpacci{1} $}
     \UnaryInfC{$ \stexpstackprod{\asstexpi{1}}{\stexpit{\astexpi{2}}} \:\lt{\alert{\aoneact}}\: \stexpstackprod{\asstexpacci{1}}{\stexpit{\astexpi{2}}} $}
     \DisplayProof
     & &
     \AxiomC{$ \hspace*{5ex} \terminates{\astexpi{1}}\rule{0pt}{13pt} \hspace*{5ex} $}
     \UnaryInfC{$ \stexpstackprod{\astexpi{1}}{\stexpit{\astexpi{2}}} \:\lt{\alert{\stexpone}}\: \stexpit{\astexpi{2}} $}
     \DisplayProof
   \end{aligned}
    $
  \end{center}
  If $\asstexp \lt{\aoneact} \asstexpacc$ is derivable in $\stackStExpTSSover{\actions}$, for $\asstexp,\asstexpacc\in\StExpover{\actions}$,
  and $\aoneact\in\oneactions$, then we say that $\asstexpacc$ is a \emph{subderivative} of $\asstexp$.
  The \TSS\ $\stackStExpTSSover{\actions}$ defines the variant process semantics for stacked star expressions in $\stackStExpover{\actions}$
  as the \emph{stacked star expressions \oneLTS}~$\oneLTSof{\stackStExpover{\actions}} \defdby \oneLTSdefdby{\stackStExpTSSover{\actions}}$,
  where $\oneLTSdefdby{\stackStExpTSSover{\actions}} = \tuple{\stackStExpover{\actions},\actions,\transs,\termexts}$ 
  is the \emph{\oneLTS\ generated by} $\stackStExpTSSover{\actions}$, that is,
  its transitions $\transs \subseteq  \stackStExpover{\actions}\times\oneactions\times\stackStExpover{\actions}$,
  and its immediately terminating vertices $\termexts\subseteq\stackStExpover{\actions}$
  are defined via derivations in~$\StExpTSSover{\actions}$ in the natural way.
 
  For sets $S\subseteq\stackStExpover{\actions}$, $S$\nb-generated sub-1-LTSs are defined analogously as for~$\LTSof{\StExpover{\actions}}$.
\end{defi}%

\begin{defi}
  A \emph{\onechart} is a (rooted) \oneLTS\ $\tuple{\verts,\actions,\sone,\start,\transs,\termexts}$ 
  such that $\tuple{\verts,\actions,\sone,\transs,\termexts}$ is a \oneLTS\
  whose states we refer to as \emph{vertices},
  and where $\start\in\verts$ is called the \emph{start vertex} of the \onechart.  
\end{defi}

\begin{defi}\nf\label{def:onechart:interpretation}
  The \emph{\onechart\ interpretation $\onechartof{\astexp} = \tuple{\vertsof{\astexp},\actions,\sone,\astexp,\gentranssof{\astexp},\gentermextsof{\astexp}}$} 
  of a star expression $\astexp\in\StExpover{\actions}$ is 
  the $\astexp$\nb-rooted version 
    of the $\setexp{\astexp}$\nb-generated 
           sub-\oneLTS~$\oneLTSof{\setexp{\astexp}} = \tuple{\genvertsof{\setexp{\astexp}},\actions,\sone,\gentranssof{\setexp{\astexp}},\gentermextsof{\setexp{\astexp}}}$
  of $\oneLTSof{\StExpover{\actions}}$.
\end{defi}

\renewcommand{\iact}{\iactalert}
\renewcommand{\soneterminates}{\soneterminatesalert}%

%
%
In order to link the \oneLTS~$\oneLTSof{\stackStExpover{\actions}}$
to the \LTS~$\LTSof{\StExpover{\actions}}$, 
we need to take account\vspace*{-2pt} of the semantics of $\alert{\stexpone}$\nb-transitions as empty steps
(see Vrancken \cite{vran:1997}).
For this,\vspace*{-0.25ex}
  we define 
  `induced transitions'~$\silt{\cdot}$, and `induced termination'~$\soneterminates$ 
  as follows:                                                        
$\asstexp \ilt{\aact} \asstexpacc$ holds if there is a sequence of $\alert{\stexpone}$\nb-transitions from $\asstexp$ 
to\vspace*{-2pt} some $\asstexptilde$ from which there is an $\aact$\nb-transition to $\asstexpacc$
(the asymmetric notation $\silt{\cdot}$ is intended to reflect this asymmetry),
and $\oneterminates{\asstexp}$ holds\vspace*{-0.35ex}
if there is a sequence of $\alert{\stexpone}$\nb-transitions from $\asstexp$ to some $\asstexptilde$ with~$\terminates{\asstexptilde}$.

\begin{defi}\nf\label{def:indLTS}
  Let $\aoneLTS = \tuple{\states,\actions,\sone,\slt{},\termexts}$ be a \oneLTS.
  By the \emph{induced LTS of $\aoneLTS$}, and the \emph{LTS induced~by~$\aoneLTS$},
  we mean the \LTS~$\indLTSof{\aoneLTS} = \tuple{\states,\actions,\silt{\cdot},\soneterminatesalert}$
  where $\silt{\cdot} \subseteq \states\times\oneactions\times\states$ 
  is the \emph{induced transition relation},
  and~$\soneterminatesalert\subseteq\verts$ is the set of vertices with \emph{induced termination}
  that are defined as follows, for all $\astate,\astateacc\in\states$ and $\aact\in\actions\,$:
  \begin{enumerate}[label={(ind-\arabic{*})},align=right,leftmargin=*]
    \item{}\label{it:indtrans::def:indLTS}
      $\astate \ilt{\aact} \astateacc$ holds if 
        $\astate = \astatei{0} \lt{\sone} \astatei{1} \lt{\sone} \ldots \lt{\sone} \astatei{n} \lt{\aact} \astateacc$, 
        for some $\astatei{0},\ldots,\astatei{n}\in\verts$ and $n\in\nat$
      (we then say that there is an \emph{induced transition} between $\astate,\astateacc\in\states$ \emph{with respect to $\aoneLTS$}),
    \item{}\label{it:indterm::def:indLTS}  
      $\oneterminates{\astate}$ holds if
      $\astate = \astatei{0} \lt{\sone} \astatei{1} \lt{\sone} \ldots \lt{\sone} \astatei{n} \logand \oneterminates{\astatei{n}}$, 
        for some $\astatei{0},\ldots,\astatei{n}\in\verts$ and $n\in\nat$
      (then we say that $\astate$ \emph{has induced termination with respect to $\aoneLTS$}). 
  \end{enumerate}
\end{defi}

\begin{defi}\nf\label{def:indchart}
  The \emph{induced chart} $\indscchartof{\aonechart} = \tuple{\verts,\actions,\start,\silt{\cdot},\soneterminatesalert}$\vspace*{-2.5pt}
    of a \onechart~$\aonechart = \tuple{\verts,\actions,\sone,\start,\transs,\termexts}$ is defined
  analogously as the induced \LTS\ of a \oneLTS\ in Def.~\ref{def:indLTS}
  with the induced transition relation $\silt{\cdot}$ defined analogously to \ref{it:indtrans::def:indLTS},
  and the set of vertices $\soneterminatesalert$ with induced termination defined analogously to \mbox{\ref{it:indterm::def:indLTS}}.
\end{defi}

\begin{lem}\label{lem:StExpLTS:funbisim:StExpindLTS}
  The projection function $\sproj$ defines a bisimulation between the
  induced LTS $\indLTSof{\oneLTSof{\stackStExpover{\actions}}}$ of the stacked star expressions \oneLTS~$\stackStExpover{\actions}$
  and the star expressions LTS $\LTSof{\StExpover{\actions}}$.
\end{lem}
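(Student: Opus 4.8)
The plan is to show that the graph $\abisim \defdby \descsetexp{\pair{\asstexp}{\proj{\asstexp}}}{\asstexp\in\stackStExpover{\actions}}$ of the (total) projection function $\sproj$ is a bisimulation between $\indLTSof{\oneLTSof{\stackStExpover{\actions}}}$ and $\LTSof{\StExpover{\actions}}$ in the sense of Def.~\ref{def:bisims:LTSs}: non-emptiness is trivial, so it remains to verify, for every stacked star expression $\asstexp$, the (forth), (back) and (termination) conditions on the pair $\pair{\asstexp}{\proj{\asstexp}}$. The reason for working with the \emph{induced} LTS (Def.~\ref{def:indLTS}) rather than with $\oneLTSof{\stackStExpover{\actions}}$ directly is that $\sproj$ does not respect individual $\sone$-transitions --- the backlink rule sends $\stexpstackprod{\astexpi{1}}{\stexpit{\astexpi{2}}}$ with $\terminates{\astexpi{1}}$ to $\stexpit{\astexpi{2}}$, whose projection differs from $\proj{\stexpstackprod{\astexpi{1}}{\stexpit{\astexpi{2}}}} = \stexpprod{\astexpi{1}}{\stexpit{\astexpi{2}}}$ --- but, crucially, in $\StExpTSS$ the expression $\stexpprod{\astexpi 1}{\stexpit{\astexpi 2}}$ with $\terminates{\astexpi 1}$ \emph{subsumes} the behaviour of $\stexpit{\astexpi 2}$ via the right-hand rule for $\sstexpprod$, so absorbing such a backlink into an induced transition cannot gain behaviour on the $\sproj$-image.

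First I would prove some auxiliary statements, each by a routine induction (on the structure of $\asstexp$, or on derivations in the TSSs of Def.~\ref{def:process:semantics} and Def.~\ref{def:stackStExpTSS}), reading $\sstexpstackprod$ as $\sstexpprod$ throughout:
\begin{enumerate}[label={\rm(\roman{*})},leftmargin=*,itemsep=0.25ex]
  \item \emph{Terminating stacked expressions are pure:} if $\terminates{\asstexp}$ is derivable in $\stackStExpTSS$, then $\asstexp\in\StExpover{\actions}$ (and then $\terminates{\asstexp}$ in $\StExpTSS$, too) --- immediate, since $\sstexpstackprod$ has no termination rule and $\stexpprod{\cdot}{\cdot}$ terminates only when both operands do.
  \item \emph{Termination transfers:} $\oneterminates{\asstexp}$ iff $\terminates{\proj{\asstexp}}$, by induction on $\asstexp$; the only non-immediate case, $\asstexp = \stexpstackprod{\asstexpi 1}{\stexpit{\astexpi 2}}$, reduces via (i) to $\oneterminates{\stexpstackprod{\asstexpi 1}{\stexpit{\astexpi 2}}} \Leftrightarrow \oneterminates{\asstexpi 1} \Leftrightarrow \terminates{\proj{\asstexpi 1}} \Leftrightarrow \terminates{\stexpprod{\proj{\asstexpi 1}}{\stexpit{\astexpi 2}}}$, where the first equivalence holds because every $\sone$-chain out of $\stexpstackprod{\asstexpi 1}{\stexpit{\astexpi 2}}$ reaching a terminating state must run through the backlink, i.e., after $\asstexpi 1$ has $\sone$-reduced to a terminating (hence pure) star expression, and conversely such a chain can always be prolonged with that backlink into $\stexpit{\astexpi 2}$.
  \item \emph{Proper transitions project:} if $\asstexp \lt{\aact} \asstexpacc$ is derivable in $\stackStExpTSS$ with $\aact\in\actions$, then $\proj{\asstexp}\lt{\aact}\proj{\asstexpacc}$ in $\StExpTSS$; by induction on the derivation, each rule being mirrored (in particular $\stexpit{\astexp}\lt{\aact}\stexpstackprod{\asstexpacc}{\stexpit{\astexp}}$ projects to $\stexpit{\astexp}\lt{\aact}\stexpprod{\proj{\asstexpacc}}{\stexpit{\astexp}}$), with (i) and (ii) handling the $\terminates{\cdot}$ side conditions.
  \item \emph{$\sone$-steps only lose behaviour modulo $\sproj$:} if $\asstexp \lt{\sone} \asstexpacc$, then every transition $\proj{\asstexpacc}\lt{\aact}\cstexp$ in $\StExpTSS$ is also a transition $\proj{\asstexp}\lt{\aact}\cstexp$, and $\terminates{\proj{\asstexpacc}}$ implies $\terminates{\proj{\asstexp}}$; by induction on the derivation of $\asstexp\lt{\sone}\asstexpacc$, the backlink case being the one flagged above, and the propagation cases for $\sstexpprod$ and $\sstexpstackprod$ using the induction hypothesis together with (ii).
\end{enumerate}

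With these in place, the three conditions follow. The \emph{termination} condition is exactly (ii). For \emph{(forth)}: an induced transition unfolds as $\asstexp \lt{\sone} \ldots \lt{\sone} \asstexptilde \lt{\aact} \asstexpacc$; (iii) gives $\proj{\asstexptilde}\lt{\aact}\proj{\asstexpacc}$, and iterating (iv) along the $\sone$-prefix yields $\proj{\asstexp}\lt{\aact}\proj{\asstexpacc}$. The real work is \emph{(back)}: given $\proj{\asstexp}\lt{\aact}\astexpacc$ in $\StExpTSS$, one must construct $\asstexpacc$ with $\asstexp \ilt{\aact} \asstexpacc$ and $\proj{\asstexpacc} = \astexpacc$, by induction on $\asstexp$ with a sub-induction on the $\StExpTSS$-derivation and a case split on the shape of $\asstexp$ (a pure star expression, $\stexpprod{\asstexpi 1}{\astexpi 2}$, or $\stexpstackprod{\asstexpi 1}{\stexpit{\astexpi 2}}$). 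When the derivation ``stays inside'' a subterm, the induction hypothesis is propagated through the $\sstexpprod$- and $\sstexpstackprod$-rules of $\stackStExpTSS$; when it uses the right-hand $\sstexpprod$ rule or the unfolding rule $\stexpit{\astexp}\lt{\aact}\stexpprod{\astexpacc}{\stexpit{\astexp}}$, the matching move in $\stackStExpTSS$ first performs an $\sone$-chain --- supplied by (ii) together with the backlink rule, which turns a stacked left operand that has $\sone$-reduced to a terminating pure star expression into the iteration it was spawned from --- and only then the proper transition, with the projections matching up by the commutation $\proj{\stexpstackprod{\asstexpacc}{\stexpit{\astexp}}} = \stexpprod{\proj{\asstexpacc}}{\stexpit{\astexp}}$.

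The hard part will be this (back) direction: lining up the exit/re-entry steps of iterations in $\StExpTSS$ (the right-hand $\sstexpprod$ rule and $\stexpit{\astexp}\lt{\aact}\stexpprod{\astexpacc}{\stexpit{\astexp}}$) with the two-phase behaviour of $\stackStExpTSS$, namely an $\sone$-backlink followed by a proper transition. The discipline that makes this work is built into the rules: every $\sstexpstackprod$ occurring in an expression marks a genuine descent into an iteration body, so that termination of its left operand becomes available precisely when $\StExpTSS$ would pass over or reset that iteration; once this is made precise --- it is essentially (i)--(iv) above --- one sees that $\sproj$ commutes with the two semantics and the lemma follows. The remaining details (all of the routine-induction kind) are carried out in the report version~\cite{grab:2020:scpgs-arxiv}.
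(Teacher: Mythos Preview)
Your proposal is correct and follows essentially the same approach as the paper: your auxiliary facts (i)--(iv) correspond directly to the paper's statements \eqref{eq:1:prf:lem:indTSS:stackStExpTSS:2:StExpTSS}--\eqref{eq:4:prf:lem:indTSS:stackStExpTSS:2:StExpTSS}, and the way you assemble them into the (forth), (back), and (termination) conditions mirrors the paper's Lemma~\ref{lem:indTSS:stackStExpTSS:2:StExpTSS} and the proof of Lemma~\ref{lem:StExpLTS:funbisim:StExpindLTS}. The only organizational difference is that the paper routes the argument through an auxiliary TSS~$\stackStExpindTSS$ for induced transitions and formulates admissible rules for it (Lemma~\ref{lem:2:lem:indTSS:stackStExpTSS:2:StExpTSS}) to streamline the (back)-direction induction, whereas you carry out that induction directly on induced transitions; the underlying reasoning is the same.
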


 
With this lemma, the proof of which we outline in Section~\ref{proofs}, we will be able to prove the following connection between the chart interpretation
and the \onechart\ interpretation of a star expression. 

\begin{thm}\label{thm:onechart-int:funbisim:chart-int}
  $\indscchartof{\onechartof{\astexp}} \funbisim \chartof{\astexp}$ holds for all $\astexp\in\StExpover{\actions}$,
  that is, there is a functional bisimulation from the induced chart of the \onechart\ interpretation of a star expression~$\astexp$
  to the chart interpretation of $\astexp$.
\end{thm}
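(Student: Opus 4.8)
The plan is to derive the Theorem from Lemma~\ref{lem:StExpLTS:funbisim:StExpindLTS} by restricting the functional bisimulation defined by the (total) projection function $\sproj$ to the subsystems generated by $\astexp$. First I would note that $\sproj$ acts as the identity on star expressions (so $\proj{\astexp}$ is the start vertex $\astexp$ of $\chartof{\astexp}$), and that $\genvertsof{\setexp{\astexp}}$, the vertex set of both $\onechartof{\astexp}$ (Def.~\ref{def:onechart:interpretation}) and of its induced chart $\indscchartof{\onechartof{\astexp}}$ (Def.~\ref{def:indchart}), is closed under the induced transitions $\ilt{\cdot}$ of $\indLTSof{\oneLTSof{\stackStExpover{\actions}}}$: an induced transition is a finite sequence of $\sone$-transitions followed by a proper transition, all of whose vertices stay inside $\genvertsof{\setexp{\astexp}}$ once their source does. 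Hence $\indscchartof{\onechartof{\astexp}}$ is exactly the sub-LTS of $\indLTSof{\oneLTSof{\stackStExpover{\actions}}}$ that is induced by the $\ilt{\cdot}$-closed vertex set $\genvertsof{\setexp{\astexp}}$, rooted at $\astexp$.

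The core step is to show that $\sproj$ maps $\genvertsof{\setexp{\astexp}}$ onto the vertex set $\vertsof{\astexp}$ of $\chartof{\astexp}$. The inclusion $\imagefap{\sproj}{\genvertsof{\setexp{\astexp}}} \supseteq \vertsof{\astexp}$ is immediate: $\astexp = \proj{\astexp} \in \imagefap{\sproj}{\genvertsof{\setexp{\astexp}}}$, and for any $\LTSof{\StExpover{\actions}}$-transition $\proj{\asstexp}\lt{\aact}\bstexp$ with $\asstexp\in\genvertsof{\setexp{\astexp}}$ the back condition for $\sproj$ (Lemma~\ref{lem:StExpLTS:funbisim:StExpindLTS}) yields an induced transition $\asstexp\ilt{\aact}\asstexpacc$ with $\proj{\asstexpacc}=\bstexp$, and $\asstexpacc\in\genvertsof{\setexp{\astexp}}$ by closedness, so $\bstexp\in\imagefap{\sproj}{\genvertsof{\setexp{\astexp}}}$; thus $\imagefap{\sproj}{\genvertsof{\setexp{\astexp}}}$ is transition-closed and contains $\astexp$, hence all of $\vertsof{\astexp}$. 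The converse inclusion $\imagefap{\sproj}{\genvertsof{\setexp{\astexp}}} \subseteq \vertsof{\astexp}$ I would prove by induction on the length of a shortest path from $\astexp$ to $\asstexp$ in $\oneLTSof{\stackStExpover{\actions}}$. A final proper step $\asstexp'\lt{\aact}\asstexp$ is an induced transition, so by the forth condition $\proj{\asstexp'}\lt{\aact}\proj{\asstexp}$, and $\proj{\asstexp}\in\vertsof{\astexp}$ by the induction hypothesis and transition-closedness of $\vertsof{\astexp}$. The only genuinely delicate case is a final $\sone$-transition $\asstexp'\lt{\sone}\asstexp$, which can only originate from the axiom $\stexpstackprod{\astexpi{1}}{\stexpit{\astexpi{2}}} \lt{\sone} \stexpit{\astexpi{2}}$ (with $\terminates{\astexpi{1}}$) propagated up a left spine of $\sstexpprod$- and $\sstexpstackprod$-frames; here $\proj{\asstexp}$ is in general \emph{not} reachable from $\proj{\asstexp'}$ in $\LTSof{\StExpover{\actions}}$. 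Instead I would argue that $\proj{\asstexp}$ equals $\sproj$ of an \emph{earlier visited} vertex, namely the vertex present when the subterm $\stexpstackprod{\cdot}{\stexpit{\astexpi{2}}}$ in question was created by a step $\stexpit{\astexpi{2}}\lt{\aact}\stexpstackprod{\cdot}{\stexpit{\astexpi{2}}}$; the point is that the surrounding left-spine context remains stable between that creation and $\asstexp'$ because its frames cannot trigger the ``escape'' rules (those require plain star expressions on the left), so its projection is unchanged. The induction hypothesis then gives $\proj{\asstexp}\in\vertsof{\astexp}$. This bookkeeping is precisely what the stacked product $\sstexpstackprod$ of Def.~\ref{def:stackStExp} was introduced for, and I would defer its full verification to Section~\ref{proofs} and the report version.

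Once $\imagefap{\sproj}{\genvertsof{\setexp{\astexp}}} = \vertsof{\astexp}$ is established, the restriction of $\sproj$ to $\genvertsof{\setexp{\astexp}}$ is a total function into the vertex set of $\chartof{\astexp}$, hence a partial function in the sense of Def.~\ref{def:bisims:charts} whose graph relates the start vertices. Its forth, back, and termination conditions follow from those for $\sproj$ between $\indLTSof{\oneLTSof{\stackStExpover{\actions}}}$ and $\LTSof{\StExpover{\actions}}$ (Lemma~\ref{lem:StExpLTS:funbisim:StExpindLTS}), using that both $\{\astexp\}$-generated systems are closed under their transition relations: a transition of $\indscchartof{\onechartof{\astexp}}$ (resp.\ of $\chartof{\astexp}$) is also one of the ambient system, its $\sproj$-image (resp.\ some $\sproj$-preimage provided by the back condition) lands among the generated vertices, and hence is again a transition of the generated chart; induced termination transfers in the same way. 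Thus $\indscchartof{\onechartof{\astexp}}\funbisim\chartof{\astexp}$, which in particular yields $\indscchartof{\onechartof{\astexp}}\bisim\chartof{\astexp}$, i.e.\ property~\ref{property:1}. I expect the main obstacle to be exactly the $\sone$-step case of the inclusion $\imagefap{\sproj}{\genvertsof{\setexp{\astexp}}}\subseteq\vertsof{\astexp}$; everything else is routine propagation of an ambient functional bisimulation to its reachable parts. The substantial work underlying the Theorem is Lemma~\ref{lem:StExpLTS:funbisim:StExpindLTS} itself, whose proof is outlined in Section~\ref{proofs}.
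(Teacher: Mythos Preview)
Your core step---establishing $\imagefap{\sproj}{\genvertsof{\setexp{\astexp}}} \subseteq \vertsof{\astexp}$---does not go through: the inclusion is false in general. Take the paper's running example $\astexp = \stexpit{(\stexpprod{\stexpit{\aact}}{\stexpit{\bact}})}$. The vertex $\asstexpi{1} = \stexpstackprod{(\stexpprod{\stexpit{\aact}}{\stexpit{\bact}})}{\astexp}$ belongs to $\genvertsof{\setexp{\astexp}}$ (it is reached via $\astexp \lt{\aact} \asstexpacci{1} \lt{\sone} \asstexpi{1}$), yet $\proj{\asstexpi{1}} = \stexpprod{(\stexpprod{\stexpit{\aact}}{\stexpit{\bact}})}{\astexp}$ is not a vertex of $\chartof{\astexp}$: the only non-start vertices there are $\astexpi{1} = \stexpprod{(\stexpprod{(\stexpprod{\stexpone}{\stexpit{\aact}})}{\stexpit{\bact}})}{\astexp}$ and $\astexpi{2}$. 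Your argument for the $\sone$-step case breaks exactly here. In the step $\astexp \lt{\aact} \asstexpacci{1}$ two stacked products are created at once by nested applications of the iteration rule, so the surrounding context $\stexpstackprod{(\stexpprod{\Box}{\stexpit{\bact}})}{\astexp}$ of the inner $\stexpstackprod{\stexpone}{\stexpit{\aact}}$ did not exist before that step; hence there is no ``earlier visited vertex'' of the shape $\stexpstackprod{(\stexpprod{\stexpit{\aact}}{\stexpit{\bact}})}{\astexp}$ on the path, and your bookkeeping claim fails.

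The fix is that you do not need this inclusion. Definition~\ref{def:bisims:charts} permits a \emph{partial} function, so it suffices to restrict $\sproj$ further to the set of vertices that are $\silt{\cdot}$-reachable from $\astexp$ (a proper subset of $\genvertsof{\setexp{\astexp}}$ in general, as the example's ``unreachable vertices $\asstexpi{1}$ and $\asstexpi{2}$'' already witness). On that domain the forth condition of Lemma~\ref{lem:StExpLTS:funbisim:StExpindLTS}, applied along $\silt{\cdot}$-paths, forces the image into $\vertsof{\astexp}$, and your verification of forth, back, and termination for the restricted relation then goes through unchanged. This is the paper's route: it identifies $\indscchartof{\onechartof{\astexp}}$ (somewhat loosely, since the vertex sets differ) with the $\setexp{\astexp}$-generated subchart of $\indLTSof{\oneLTSof{\stackStExpover{\actions}}}$ and then uses that a functional bisimulation between \LTSs\ restricts to one between generated subcharts sharing a start vertex.
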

  
For the construction of \LLEEwitnesses\ for the \onechart\ interpretation $\onechartof{\cdot}$
we need to distinguish `\txtnormedplus' stacked star expressions 
  that permit an induced-transition path of \underline{\smash{positive length}} to an expression with induced termination
from those that do not enable such a path. 
                        This property slightly strengthens normedness of expressions, 
  which means the existence of a path to an expression with immediate termination
  (or equally, an \underline{\smash{arbitrary-length}} induced-transition path to an expression~with~induced~termination).

\begin{defi}
  Let $\asstexp\in\stackStExpover{\actions}$.
  We say that $\asstexp$ is \emph{\txtnormedplus} (and $\asstexp$ is \emph{normed}) 
  if there\vspace*{-2pt} is $\asstexpacc\in\stackStExpover{\actions}$ 
  such that 
  $\asstexp \ilttc{\cdot} \asstexpacc$ and $\oneterminates{\asstexpacc}$ in $\indLTSof{\oneLTSof{\StExpover{\actions}}}$
    (resp., $\asstexp \redrtc \asstexpacc$ and $\terminates{\asstexpacc}$ in $\oneLTSof{\StExpover{\actions}}$).
\end{defi}

These properties permit inductive definitions, and therefore they are easily decidable.
Also, a stacked star expression is \txtnormedplus\ if and only if it enables a transition to a normed stacked~star~expression.


Now we define, similarly as we have done so for \onefree\ star expressions in \cite{grab:fokk:2020:LICS,grab:fokk:2020:arxiv},
a refinement of the TSS~$\stackStExpTSS$ into a TSS that will supply \entrybodylabelings\ for \LLEEwitnesses, 
by adding marking labels to the rules of $\stackStExpTSS$.
In particular, body labels are added to transitions that cannot return to their source expression. 
The rule for transitions from an iteration $\stexpit{\astexp}$ is split into the case in which $\astexp$ is \txtnormedplus\ or not.
Only if $\astexp$ is \txtnormedplus\ can $\stexpit{\astexp}$ return to itself after a positive number of steps, and then a \mbox{(loop-)} \entrytransition\ 
with the star height $\sth{\stexpit{\astexp}}$ of $\stexpit{\astexp}$ as its level is created;
otherwise a body label~is~introduced.

\begin{defi}\label{def:stackStExpTSShat}\nf
  The TSS~$\stackStExpTSShatover{\actions}$ 
  has the following rules, where $\darkcyan{\alab} \in \setexp{\bodylabcol} \cup \descsetexp{ \darkcyan{\loopnsteplab{\aLname}} }{ \aLname\in\natplus }$:\vspace{-1.5ex}
  \begin{center}
    $
    \begin{gathered}
      \begin{aligned}
        &
        \AxiomC{\phantom{$\terminates{\stexpone}$}}
        \UnaryInfC{$\terminates{\stexpone}$}
        \DisplayProof
        & & 
        & \hspace*{2ex} & & & 
        \AxiomC{$ \terminates{\astexpi{i}} $}
        \RightLabel{\scriptsize $(i\in\setexp{1,2})$}
        \UnaryInfC{$ \terminates{(\stexpsum{\astexpi{1}}{\astexpi{2}})} $}
        \DisplayProof
        & \hspace*{2ex} & & & 
        \AxiomC{$\terminates{\astexpi{1}}$}
        \AxiomC{$\terminates{\astexpi{2}}$}
        \BinaryInfC{$\terminates{(\stexpprod{\astexpi{1}}{\astexpi{2}})}$}
        \DisplayProof
        & \hspace*{2ex} & & & 
        \AxiomC{$\phantom{\terminates{\stexpit{\astexp}}}$}
        \UnaryInfC{$\terminates{(\stexpit{\astexp})}$}
        \DisplayProof
      \end{aligned} 
      \\
      \begin{aligned}
        & \hspace*{8ex}
        \AxiomC{$\phantom{a_i \:\lti{a_i}{\darkcyan{\bodylabcol}}\: \stexpone}$}
        \UnaryInfC{$a \:\lti{a}{\bodylabcol}\: \stexpone$}
        \DisplayProof
        & & 
        \AxiomC{$ \astexpi{i} \:\lti{\aact}{\darkcyan{\alab}}\: \asstexpacci{i} $}
        \RightLabel{\scriptsize $(i\in\setexp{1,2})$}
        \UnaryInfC{$ \stexpsum{\astexpi{1}}{\astexpi{2}} \:\lti{\aact}{\bodylabcol}\: \asstexpacci{i} $}
        \DisplayProof 
        & & 
        \AxiomC{$ \asstexpi{1} \:\lti{\alert{\aoneact}}{\darkcyan{\alab}}\: \asstexpacci{1} $}
        \UnaryInfC{$ \stexpprod{\asstexpi{1}}{\astexpi{2}} \:\lti{\alert{\aoneact}}{\darkcyan{\alab}}\: \stexpprod{\asstexpacci{1}}{\astexpi{2}} $}
        \DisplayProof
        & &
        \AxiomC{$\terminates{\astexpi{1}}$}
        \AxiomC{$ \astexpi{2} \:\lti{a}{\darkcyan{\alab}}\: \asstexpacci{2} $}
        \BinaryInfC{$ \stexpprod{\astexpi{1}}{\astexpi{2}} \:\lti{a}{\bodylabcol}\: \asstexpacci{2} $}
        \DisplayProof
        \\
        & 
        \AxiomC{$   \phantom{\astexpi{1}}
                 \astexp \:\lti{a}{\darkcyan{\alab}}\: \asstexpacc 
                   \phantom{\asstexpacci{1}} $}
        \AxiomC{\small ($\astexp$ \txtnormedplus)}    
        \insertBetweenHyps{\hspace*{-0.5ex}}        
        \BinaryInfC{$\stexpit{\astexp} \:\lti{\aact}{\darkcyan{\loopnsteplab{\sth{\stexpit{\astexp}}}}}\: \stexpstackprod{\asstexpacc}{\stexpit{\astexp}}$}
        \DisplayProof
        & &
        \AxiomC{$   \phantom{\astexpi{1}}
                 \astexp \:\lti{a}{\darkcyan{\alab}}\: \asstexpacc 
                   \phantom{\astexpacci{1}} $}
        \insertBetweenHyps{\hspace*{-0.5ex}}     
        \AxiomC{\small ($\astexp$ not \txtnormedplus)} 
        \BinaryInfC{$\stexpit{\astexp} \:\lti{a}{\bodylabcol}\: \stexpstackprod{\asstexpacc}{\stexpit{\astexp}}$}
        \DisplayProof 
        & & 
        \AxiomC{$ \asstexpi{1} \:\lti{\alert{\aoneact}}{\darkcyan{\alab}}\: \asstexpacci{1} $}
        \UnaryInfC{$ \stexpstackprod{\asstexpi{1}}{\stexpit{\astexpi{2}}} \:\lti{\alert{\aoneact}}{\darkcyan{\alab}}\: \stexpstackprod{\asstexpacci{1}}{\stexpit{\astexpi{2}}} $}
        \DisplayProof
        & &
        \AxiomC{$ \hspace*{5ex} \terminates{\astexpi{1}}\rule{0pt}{13pt} \hspace*{5ex} $}
        \UnaryInfC{$ \stexpstackprod{\astexpi{1}}{\stexpit{\astexpi{2}}} \:\lti{\alert{\stexpone}}{\bodylabcol}\: \stexpit{\astexpi{2}} $}
        \DisplayProof
      \end{aligned}
    \end{gathered}
    $
  \end{center}
  The \entrybodylabeling~$\oneLTShatof{\stackStExpover{\actions}} = \tuple{\StExpover{\actions},\oneactions,\slti{\cdot}{\cdot},\sterminates}$ 
  of the stacked star expressions \oneLTS\vspace*{-2pt} $\oneLTSof{\stackStExpover{\actions}}$
  is defined as the \LTS\ generated by $\stackStExpTSShatover{\actions}$,
  with $\termexts\subseteq\stackStExpover{\actions}$ as set of 
                                                               terminating vertices,
  and with set $\slti{\cdot}{\cdot} \subseteq  \stackStExpover{\actions}\times(\oneactions\times\nat)\times\stackStExpover{\actions}$ of transitions,
  where $\oneactions = \actions\cup\setexp{\sone}$. 
\end{defi}

\begin{defi}
  For every star expression~$\astexp\in\StExpover{\actions}$\vspace*{-0.4ex}
  we denote by $\onecharthatof{\astexp}$ the \entrybodylabeling\ 
  that is the chart formed as the $\astexp$\nb-rooted sub-\LTS\ generated by $\setexp{\astexp}$ of the \entrybodylabeling~$\oneLTShatof{\stackStExpover{\actions}}$.
\end{defi}

For this \entrybodylabeling\ we will show in Section~\ref{proofs} that it recovers the property \LEE\ 
for the stacked star expressions \oneLTS, and as a consequence, for the \onechart\ interpretation of~star~expressions.\enlargethispage{1ex}

\begin{lem}\label{lem:oneLTShat:stackStExps:is:LLEEw}
  $\oneLTShatof{\stackStExpover{\actions}}$ is an \entrybodylabeling, and indeed a \LLEEwitness, of $\oneLTSof{\stackStExpover{\actions}}$. 
\end{lem}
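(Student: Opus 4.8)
The plan is to verify in turn the four assertions packed into the statement: that $\oneLTShatof{\stackStExpover{\actions}}$ is an \entrybodylabeling\ of $\oneLTSof{\stackStExpover{\actions}}$, and that it satisfies the three properties \ref{LLEEw:1}, \ref{LLEEw:2}, \ref{LLEEw:3} of a \LLEEwitness. The first point is immediate from the shape of the rules of $\stackStExpTSShatover{\actions}$ in Def.~\ref{def:stackStExpTSShat}: erasing the marking labels recovers exactly the rules of $\stackStExpTSSover{\actions}$, and every marking label produced is either $\bodylabcol$ or of the form $\darkcyan{\loopnsteplab{\sth{\stexpit{\astexp}}}}$ with $\sth{\stexpit{\astexp}} = 1 + \sth{\astexp}\in\natplus$. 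Before the rest I would record a handful of structural facts, all by straightforward rule inductions: (i)~every $\sone$\nb-transition carries the label $\bodylabcol$ (entry labels are created only on proper-action transitions, by the \txtnormedplus\ iteration rule, and are propagated unchanged); (ii)~entry labels propagate only up the left spine of $\sstexpprod$ and $\sstexpstackprod$, so all entry transitions departing from a fixed vertex $\asstexp$ have one common level $n$, which equals $\sth{\stexpit{\cstexp}}$ for the \txtnormedplus\ iteration $\stexpit{\cstexp}$ at the bottom of that left spine; (iii)~$\terminates{\dstexp}$ implies $\dstexp\in\StExpover{\actions}$, since no termination rule mentions $\sstexpstackprod$; (iv)~every iteration subterm of a subderivative of $\cstexp$ is already an iteration subterm of $\cstexp$, hence has star height $\le\sth{\cstexp} < \sth{\stexpit{\cstexp}}$.

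For \ref{LLEEw:2} and \ref{LLEEw:3} I would fix an entry-transition identifier $\pair{\avert}{n}\in\entriesof{\oneLTShatof{\stackStExpover{\actions}}}$ and, using (ii), write $\avert = \acxtap{\stexpit{\cstexp}}$ with $\cstexp$ \txtnormedplus, $n = \sth{\stexpit{\cstexp}}$, and $\acxt$ a context built from $\stexpprod{[\,]}{\dstexp}$ and $\stexpstackprod{[\,]}{\stexpit{\dstexp}}$. The loop subchart generated at $\avert$ by its level\nb-$n$ entry transitions is then described as follows: its first transition leads to $\acxtap{\stexpstackprod{\asstexpacc}{\stexpit{\cstexp}}}$ with $\cstexp\lt{\aact}\asstexpacc$, and from any body-reachable vertex $\acxtap{\stexpstackprod{\bsstexp}{\stexpit{\cstexp}}}$ the frame $\acxt$ is inert (its left-termination rules cannot fire, since $\sstexpstackprod$\nb-headed expressions never terminate, by~(iii)), so every transition acts on the innermost part $\stexpstackprod{\bsstexp}{\stexpit{\cstexp}}$: either a body transition of $\bsstexp$, keeping the shape with $\bsstexp$ replaced by a subderivative of $\cstexp$, or, once a terminating star expression is reached inside, the \onetransition\ back-link to $\stexpit{\cstexp}$, i.e.\ back to $\avert$. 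From this: \ref{loop:3} holds because every non-start vertex of the subchart is $\sstexpstackprod$\nb-headed at the hole of $\acxt$, hence non-terminating; \ref{LLEEw:3} holds because the entry transitions departing from non-start vertices are inherited from entry transitions of subderivatives of $\cstexp$, whose levels are $\le\sth{\cstexp} < n$ by~(iv); and \ref{loop:2} follows from \ref{LLEEw:1}, since an infinite path from $\avert$ that never returned would, after its first (entry) transition, be an infinite path of body transitions, contradicting \ref{LLEEw:1}. For \ref{loop:1} I would prove an auxiliary lemma: \emph{every normed stacked star expression reaches a terminating star expression along a path of body transitions} (induction on the length of a shortest induced path to induced termination, with syntactic size as tie-breaker; the termination, left-termination, and \onetransition\ rules are all body-labeled, and one never needs an entry transition to progress towards termination). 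Applied to $\asstexpacc$ — which is normed because $\cstexp$ is even \txtnormedplus, choosing the first step of a positive-length termination witness (and $\cstexp$, being a plain star expression, has no departing $\sone$\nb-transition, so that step is proper) — it yields, after lifting through $\stexpstackprod{[\,]}{\stexpit{\cstexp}}$ and $\acxt$, a generating path from $\avert$ back to $\avert$, hence an infinite path. So the subchart is a loop chart and \ref{LLEEw:2} holds.

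It remains to prove \ref{LLEEw:1}: no infinite path of $\sredi{\bodylabcol}$\nb-transitions in $\oneLTSof{\stackStExpover{\actions}}$. I would establish this by exhibiting a well-founded measure $\mu$ on $\stackStExpover{\actions}$ that strictly decreases along every body transition. The routine cases — sum-steps, left-termination steps, the propagated left-moves, and $\sone$\nb-transitions (which by~(i) are all body and strictly reduce the number of $\sstexpstackprod$\nb-occurrences) — each decrease a simple syntactic component. The delicate case is the body transition $\stexpit{\astexp}\lti{\aact}{\bodylabcol}\stexpstackprod{\asstexpacc}{\stexpit{\astexp}}$ coming from a \emph{not}-\txtnormedplus\ iteration $\stexpit{\astexp}$: since $\astexp$ is not \txtnormedplus, the positive-length step $\astexp\lt{\aact}\asstexpacc$ rules out that any subderivative of $\asstexpacc$ permits immediate termination (that would make $\astexp$ \txtnormedplus), so by~(iii) the \onetransition\ back-link to $\stexpit{\astexp}$ is never enabled from $\stexpstackprod{\asstexpacc}{\stexpit{\astexp}}$ or any body-reachable descendant; the re-appearing $\stexpit{\astexp}$ is permanently deactivated and may be discounted in $\mu$. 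The relevant quantity for an iteration $\stexpit{\astexp}$ is $1 + \max\descsetexp{\mu(\asstexpacc)}{\astexp\lt{\aact}\asstexpacc}$, which is finite and well-defined by a recursion that terminates on the combination (maximal star height of iteration subterms, syntactic size), using~(iv); the clauses for $\sstexpprod$ and $\sstexpstackprod$ add this quantity for the trailing iteration so as to dominate the worst body path obtained after entering it once. Designing $\mu$ and checking that it is well-defined and strictly decreasing along body transitions is the main obstacle of the proof. With \ref{LLEEw:1} in hand, the arguments above complete the verification that $\oneLTShatof{\stackStExpover{\actions}}$ is an \entrybodylabeling\ satisfying \ref{LLEEw:1}, \ref{LLEEw:2}, \ref{LLEEw:3}, hence a \LLEEwitness\ of $\oneLTSof{\stackStExpover{\actions}}$.
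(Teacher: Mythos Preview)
Your overall organization matches the paper's closely: you verify the \LLEEwitness\ conditions via the same structural facts (your (i)--(iv) correspond to parts of the paper's Lemmas~\ref{lem:steps:appcxt}--\ref{lem:lem:oneLTShat:stackStExps:is:LLEEw}), and your arguments for \ref{LLEEw:2}, \ref{LLEEw:3}, and the loop conditions \ref{loop:1}--\ref{loop:3} are essentially the paper's verification of its reformulated conditions (LLEE\hspace{0.7pt}-\hspace{0.7pt}1)--(LLEE\hspace{0.7pt}-\hspace{0.7pt}4). The substantive divergence is your treatment of \ref{LLEEw:1}. The paper does not build a global measure; instead it isolates a path-decomposition lemma (Lemma~\ref{lem:bodypaths:stexpstackprod:stexpprod}): every maximal body path from $\acxtap{\stexpstackprod{\asstexp}{\stexpit{\astexp}}}$ or from $\stexpprod{\asstexp}{\astexp}$ either tracks a body path of the left component $\asstexp$, or does so for finitely many steps and then continues from a structurally smaller piece. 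With this in hand, body-step termination (Lemma~\ref{lem:lem:oneLTShat:stackStExps:is:LLEEw}\,\ref{it:1:lem:lem:oneLTShat:stackStExps:is:LLEEw}) goes by straight structural induction. Your measure-based route is viable --- you correctly isolate the only delicate case (the body step out of a not-\txtnormedplus\ iteration, where the target grows syntactically) and the right insight (the reappearing $\stexpit{\astexp}$ is permanently inert) --- but making the $\sstexpprod$/$\sstexpstackprod$ clauses of $\mu$ precise so that it strictly decreases on \emph{every} body step is genuinely fiddly, as you acknowledge. Since you already use the path-decomposition picture implicitly in your \ref{LLEEw:2}/\ref{LLEEw:3} arguments (``every transition acts on the innermost part''), you could make it an explicit lemma and reuse it for \ref{LLEEw:1}, avoiding the measure construction entirely.
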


\begin{thm}\label{thm:onechart-int:LLEEw}
  For every $\astexp\in\StExpover{\actions}$,
  the \entrybodylabeling~$\onecharthatof{\astexp}$ of $\onechartof{\astexp}$ is a \LLEEwitness\ of $\onechartof{\astexp}$.
  Hence the \onechart\ interpretation $\onechartof{\astexp}$ of a star expression $\astexp\in\StExpover{\actions}$ satisfies the property \LEE. 
\end{thm}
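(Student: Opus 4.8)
The plan is to reduce Theorem~\ref{thm:onechart-int:LLEEw} to Lemma~\ref{lem:oneLTShat:stackStExps:is:LLEEw} (proved in Section~\ref{proofs}) by a locality argument, and then to invoke the proposition of Section~\ref{LEE} that a chart possessing a \LLEEwitness\ satisfies \LEE. So the substance of the theorem is carried by the lemma, and what remains here is to show that the \LLEEwitness\ property transfers from the ambient \oneLTS\ to generated subcharts.

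First I would record what Lemma~\ref{lem:oneLTShat:stackStExps:is:LLEEw} delivers: $\oneLTShatof{\stackStExpover{\actions}}$ is a \LLEEwitness\ of the stacked star expressions \oneLTS~$\oneLTSof{\stackStExpover{\actions}}$; in particular it is an \entrybodylabeling\ of $\oneLTSof{\stackStExpover{\actions}}$, so that $\oneLTShatof{\stackStExpover{\actions}}$ and $\oneLTSof{\stackStExpover{\actions}}$ have the same vertices, the same terminating states, and the same transitions up to marking labels. Since forming the sub-\oneLTS\ generated by a state set commutes with forgetting marking labels, it follows that $\onecharthatof{\astexp}$ --- the $\astexp$\nb-rooted sub-\LTS\ of $\oneLTShatof{\stackStExpover{\actions}}$ generated by $\setexp{\astexp}$ --- is an \entrybodylabeling\ of $\onechartof{\astexp}$ in the sense of Def.~\ref{def:onechart:interpretation}. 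This step is essentially bookkeeping.

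The substantive step is to check that the three conditions \ref{LLEEw:1}, \ref{LLEEw:2}, \ref{LLEEw:3} of Def.~\ref{def:LLEEwitness} survive the passage from $\oneLTShatof{\stackStExpover{\actions}}$ to its generated subchart $\onecharthatof{\astexp}$. The key observation I would use is that the vertex set $V$ of $\onechartof{\astexp}$ --- the stacked star expressions reachable from $\astexp$ --- is closed under transition successors; hence every path of $\onecharthatof{\astexp}$ is a path of $\oneLTShatof{\stackStExpover{\actions}}$, and conversely every path of $\oneLTShatof{\stackStExpover{\actions}}$ that issues from a vertex of $V$ stays within $\onecharthatof{\astexp}$. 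From this, \ref{LLEEw:1} is inherited, since an infinite $\sredi{\bodylabcol}$ path in $\onecharthatof{\astexp}$ would be such a path in $\oneLTShatof{\stackStExpover{\actions}}$; moreover $\entriesof{\onecharthatof{\astexp}} \subseteq \entriesof{\oneLTShatof{\stackStExpover{\actions}}}$, and for every \entrytransition\ identifier $\pair{\astate}{\aLname}$ of $\onecharthatof{\astexp}$ the induced loop subchart at $\pair{\astate}{\aLname}$ computed inside $\onecharthatof{\astexp}$ coincides --- same vertices, same transitions --- with the one computed inside $\oneLTShatof{\stackStExpover{\actions}}$, because its defining paths depart from $\astate\in V$ and therefore never leave $V$; so conditions \ref{LLEEw:2} (the loop condition) and \ref{LLEEw:3} (layeredness) restrict verbatim. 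Hence $\onecharthatof{\astexp}$ is a \LLEEwitness\ of $\onechartof{\astexp}$, and the proposition of Section~\ref{LEE} then yields that $\onechartof{\astexp}$ satisfies \LEE, as claimed.

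I do not anticipate an obstacle in this argument: it is routine locality bookkeeping, the only delicate point being the claim that induced loop subcharts computed inside $\onecharthatof{\astexp}$ agree with those computed in the ambient structure, which rests precisely on the successor-closedness of $V$. The genuine difficulty of Theorem~\ref{thm:onechart-int:LLEEw} resides entirely in Lemma~\ref{lem:oneLTShat:stackStExps:is:LLEEw}, whose proof in Section~\ref{proofs} must show that the iteration rules of $\stackStExpTSShatover{\actions}$ --- issuing a loop-entry transition of level $\sth{\stexpit{\astexp}}$ when $\astexp$ is \txtnormedplus, and a body transition otherwise --- do furnish body-step termination, the loop condition, and layeredness for the full stacked star expressions \oneLTS.
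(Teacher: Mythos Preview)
Your proposal is correct and follows essentially the same approach as the paper: reduce to Lemma~\ref{lem:oneLTShat:stackStExps:is:LLEEw} and then argue that the \LLEEwitness\ property is preserved under passing to a generated sub-\onechart. The only cosmetic difference is that the paper invokes the alternative path-based characterization \ref{LLEEw:alt:1}--\ref{LLEEw:alt:4} (established in the proof of Lemma~\ref{lem:oneLTShat:stackStExps:is:LLEEw}) to see at once that these conditions restrict to a generated subchart, whereas you verify \ref{LLEEw:1}--\ref{LLEEw:3} directly from successor-closedness; both routes express the same locality observation.
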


\begin{exa}
  We consider the chart interpretations~$\chartof{\astexp}$ and $\chartof{\bstexp}$ 
  for the star expressions $\astexp$ and $\bstexp$ in Ex.~\ref{ex:chart:interpretation}
  for which we saw in Section~\ref{LLEE:fail} that \LEE\ fails. 
  We first illustrate the \onechart\ interpretations~$\onechartof{\astexp}$ of $\astexp$ 
  together with the \entrybodylabeling~$\onecharthatof{\astexp}$ of $\onechartof{\astexp}$. 
  The dotted transitions indicate \onetransitions.
  The expressions at \noninitial\ vertices of $\onechartof{\astexp}$ are  
  $\asstexpacci{1} = \stexpstackprod{(\stexpprod{(\stexpstackprod{\stexpone}{\stexpit{\aact}})}{\stexpit{\bact}})}{\astexp}$, \mbox{}
  $\asstexpi{1} = \stexpstackprod{(\stexpprod{\stexpit{\aact}}{\stexpit{\bact}})}{\astexp}$, \mbox{}
  $\asstexpi{2} = \stexpstackprod{\stexpit{\bact}}{\astexp}$, and 
  $\asstexpacci{2} = \stexpstackprod{(\stexpstackprod{\stexpone}{\stexpit{\bact}})}{\astexp}$,
  which are obtained as iterated derivatives via the TSS~$\stackStExpTSS$.
  Furthermore, we depict the induced chart $\indscchartof{\onechartof{\astexp}}$ of $\onechartof{\astexp}$,
  and its relationship to the chart interpretation $\chartof{\astexp}$ of $\astexp$
  via the projection function $\sproj$ that defines a functional bisimulation 
  (which we indicate via arrows \begin{tikzpicture}\draw[|->,thick,magenta,densely dashed](0,0) -- ++ (0:{12pt});\end{tikzpicture}).
  \begin{center}
    \begin{tikzpicture}[scale=0.97]

\matrix[anchor=north,row sep=1cm,column sep=1.2cm,every node/.style={draw,very thick,circle,minimum width=2.5pt,fill,inner sep=0pt,outer sep=2pt}] at (0,0) {
                 &  \node[color=chocolate](e){};
  \\[-0.2cm]
  \node(e-1){};  &  \node[draw=none,fill=none](dummy){};  
                                   & \node(e-2){}; 
  \\
  \node(e'-1){}; &                 & \node(e'-2){};
  \\
};
\calcLength(e,dummy){mylen}
\draw[<-,very thick,>=latex,chocolate,shorten <=2pt](e) -- ++ (90:{0.55*\mylen pt});
\path(e) ++ ({0.25*\mylen pt},{0.4*\mylen pt}) node{$\astexp$};  
\path(e) ++ ({-0.25*\mylen pt},{0.5*\mylen pt}) node[left]{\Large $\onechartof{\astexp},\;\widehat{\onechartof{\astexp}}$};  

\draw[thick,chocolate] (e) circle (0.12cm);
\draw[->,very thick,shorten <=2pt,shorten >=6pt] (e) to node[left,pos=0.31,xshift={0.015*\mylen pt}]{$\aact$} 
                                                        node[right,pos=0.45,xshift={-0.015*\mylen pt}]{$\darkcyan{\loopnsteplab{2}}$} (e'-1);
\draw[->,very thick,shorten <=2pt,shorten >=6pt] (e) to node[right,pos=0.3,xshift={-0.015*\mylen pt}]{$\bact$} 
                                                        node[left,pos=0.45,xshift={-0.015*\mylen pt}]{$\darkcyan{\loopnsteplab{2}}$} (e'-2);

\path(e-1) ++ ({-0.45*\mylen pt},{0.05*\mylen pt}) node{$\asstexpi{1}$}; 
\draw[->,densely dotted,thick,shorten <=0pt,shorten >=2pt,out=135,in=190,distance={0.85*\mylen pt}] (e-1) to node[left,pos=0.4]{$\sone$} (e);
\draw[->,very thick] (e-1) to node[right,pos=0.3,xshift={-0.05*\mylen pt}]{$\aact$}
                              node[left,pos=0.55,xshift={0.1*\mylen pt}]{$\darkcyan{\loopnsteplab{1}}$} (e'-1);
\draw[->,shorten >=2pt] (e-1) to node[below,pos=0.5,xshift={0.05*\mylen pt}]{$\bact$} (e'-2);

\path(e-2) ++ ({0.45*\mylen pt},{0.05*\mylen pt}) node{$\asstexpi{2}$};  
\draw[->,densely dotted,thick,shorten <=0pt,shorten >=2pt,out=45,in=-10,distance={0.85*\mylen pt}] (e-2) to node[right,pos=0.4]{$\sone$} (e);
\draw[->,very thick] (e-2) to node[left,pos=0.3,xshift={0.05*\mylen pt}]{$\bact$} 
                              node[right,pos=0.55,xshift={-0.1*\mylen pt}]{$\darkcyan{\loopnsteplab{1}}$} (e'-2);

\path(e'-1) ++ ({0*\mylen pt},{-0.35*\mylen pt}) node{$\asstexpacci{1}$};
\draw[->,densely dotted,thick,shorten <=0pt,shorten >=0pt,out=180,in=210,distance={1*\mylen pt}] (e'-1) to node[left,pos=0.4,xshift={0.05*\mylen pt}]{$\sone$} (e-1);  
  
\path(e'-2) ++ ({0*\mylen pt},{-0.35*\mylen pt}) node{$\asstexpacci{2}$};
\draw[->,densely dotted,thick,shorten <=0pt,shorten >=0pt,out=0,in=-30,distance={1*\mylen pt}] (e'-2) to node[right,pos=0.4,xshift={-0.05*\mylen pt}]{$\sone$} (e-2);

\matrix[anchor=north,row sep=1cm,column sep=1.2cm,every node/.style={draw,very thick,circle,minimum width=2.5pt,fill,inner sep=0pt,outer sep=2pt}] at (5,0) {
                 &  \node[color=chocolate](e--ind){};
  \\[-0.2cm]
  \node[color=chocolate](e-1--ind){};  &  \node[draw=none,fill=none](dummy--ind){};  
                                   & \node[color=chocolate](e-2--ind){}; 
  \\
  \node[color=chocolate](e'-1--ind){}; &                 & \node[color=chocolate](e'-2--ind){};
  \\
};
\calcLength(e--ind,dummy--ind){mylen}
\draw[<-,very thick,>=latex,chocolate,shorten <=2pt](e--ind) -- ++ (90:{0.55*\mylen pt});
\path(e--ind) ++ ({0.25*\mylen pt},{0.4*\mylen pt}) node{$\astexp$};  
\path(e--ind) ++ ({-0.3*\mylen pt},{0.4*\mylen pt}) node[left]{\Large $\indscchartof{\onechartof{\astexp}}$};  

\draw[thick,chocolate] (e--ind) circle (0.12cm);
\draw[->,shorten <=2pt,shorten >=6pt] (e--ind) to node[left,pos=0.31,xshift={0.015*\mylen pt}]{$\aact$} (e'-1--ind);
\draw[->,shorten <=2pt,shorten >=6pt] (e--ind) to node[right,pos=0.225,xshift={-0.015*\mylen pt}]{$\bact$} (e'-2--ind);

\draw[thick,chocolate] (e-1--ind) circle (0.12cm);
\path(e-1--ind) ++ ({-0.45*\mylen pt},{0.05*\mylen pt}) node{$\asstexpi{1}$}; 
\draw[->,shorten <=2pt,shorten >= 2pt,densely dashed] (e-1--ind) to 
                                                              (e'-1--ind);
\draw[->,shorten <=2pt,shorten >=2pt,densely dashed,out=-20,in=150] (e-1--ind) to 
                                                              (e'-2--ind);

\draw[thick,chocolate] (e-2--ind) circle (0.12cm);
\path(e-2--ind) ++ ({0.45*\mylen pt},{0.05*\mylen pt}) node{$\asstexpi{2}$};  
\draw[->,densely dashed,shorten <= 2pt,shorten >= 2pt] (e-2--ind) to 
                                                        (e'-2--ind); 
\draw[->,shorten <=2pt,shorten >=8pt,densely dashed,out=200,in=37.5] (e-2--ind) to 
                                                                     (e'-1--ind);

\draw[thick,chocolate] (e'-1--ind) circle (0.12cm);
\path(e'-1--ind) ++ ({0*\mylen pt},{-0.4*\mylen pt}) node{$\asstexpacci{1}$};
\draw[->,out=30,in=150,distance={0.7*\mylen pt},shorten <=2pt,shorten >=2pt] (e'-1--ind) to node[below,pos=0.725,yshift={0.1*\mylen pt}]{$\bact$} (e'-2--ind); 
\draw[->,distance={1*\mylen pt},shorten <=2pt,shorten >=2pt,out=210,in=150] (e'-1--ind) to node[above,pos=0.8]{$\aact$} (e'-1--ind);

\draw[thick,chocolate] (e'-2--ind) circle (0.12cm);   
\path(e'-2--ind) ++ ({0*\mylen pt},{-0.4*\mylen pt}) node{$\asstexpacci{2}$};
\draw[->,bend left,distance={0.7*\mylen pt},shorten <=2pt,shorten >=2pt] (e'-2--ind) to node[above,pos=0.725,yshift={-0.05*\mylen pt}]{$\aact$} (e'-1--ind);
\draw[->,distance={1*\mylen pt},shorten <=2pt,shorten >=2pt,out=-30,in=30] (e'-2--ind) to node[above,pos=0.8]{$\bact$} (e'-2--ind);

\matrix[anchor=north,row sep=1cm,column sep=0.7cm,every node/.style={draw,very thick,circle,minimum width=2.5pt,fill,inner sep=0pt,outer sep=2pt}] at (10,0) {
                 &  \node[color=chocolate](e){};
  \\
  \node[color=chocolate](e-1){};  &  \node[draw=none,fill=none](dummy){};  
                                   & \node[color=chocolate](e-2){}; 
  \\
};
%
\draw[<-,very thick,>=latex,chocolate,shorten <=2pt](e) -- ++ (90:{0.55*\mylen pt});
\path(e) ++ ({0.25*\mylen pt},{0.25*\mylen pt}) node{$\astexp$};  
\path(e) ++ ({0.45*\mylen pt},{0.4*\mylen pt}) node[right] {\Large $\chartof{\astexp}$};

\draw[chocolate,thick] (e) circle (0.12cm);
\draw[->,shorten <=2pt,shorten >=2pt,out=200,in=90] (e) to node[left,pos=0.4]{$\aact$} (e-1);
\draw[->,shorten <=2pt,shorten >=2pt,out=-20,in=90] (e) to node[right,pos=0.225,xshift={0.1*\mylen pt}]{$\bact$} (e-2);

\draw[chocolate,thick] (e-1) circle (0.12cm);
\path(e-1) ++ ({-0.35*\mylen pt},{-0.035*\mylen pt}) node{$\astexpi{1}$}; 
\draw[->,shorten <=2pt,shorten >=2pt,out=220,in=140,distance={1.25*\mylen pt}] (e-1) to node[left]{$\aact$} (e-1);
\draw[->,shorten <=2pt,shorten >=2pt,out=-20,in=200,distance={0.6*\mylen pt}] (e-1) to node[below]{$\bact$} (e-2);

\draw[thick,chocolate] (e-2) circle (0.12cm);
\path(e-2) ++ ({0.35*\mylen pt},{-0.035*\mylen pt}) node{$\astexpi{2}$};  
\draw[->,shorten <=2pt,shorten >=2pt,out=-40,in=40,distance={1.25*\mylen pt}] (e-2) to node[above,yshift={0*\mylen pt},pos=0.75]{$\bact$} (e-2);
\draw[->,shorten <=2pt,shorten >=2pt,out=160,in=20,distance={0.6*\mylen pt}] (e-2) to node[above]{$\aact$} (e-1);

\draw[|->,thick,magenta,densely dashed,shorten <= 3pt,shorten >= 2pt,bend left,distance={1.25*\mylen pt}
                                                                    ] (e--ind) to node[above,pos=0.5,yshift={-0.05*\mylen pt}]{$\sproj$} (e);    
\draw[|->,thick,magenta,densely dashed,shorten <= 3pt,shorten >= 3pt,distance={1.7*\mylen pt},
          out=40,in=130] (e-1--ind) to node[above,pos=0.5,yshift={-0.05*\mylen pt}]{$\sproj$} (e-1);    
\draw[|->,thick,magenta,densely dashed,shorten <= 3pt,shorten >= 2pt,
          out=-30,in=210] (e-2--ind) to node[below,pos=0.32,yshift={0.025*\mylen pt}]{$\sproj$} (e-2);    
\draw[|-,thick,magenta,densely dashed,out=37.5,in=130,shorten <= 3pt,shorten >= 3pt] (e'-1--ind) to node[above,pos=0.425,yshift={-0.05*\mylen pt}]{$\sproj$} (e-1);    
\draw[|-,thick,magenta,densely dashed,out=0,in=210,shorten <= 3pt,shorten >= 2pt] (e'-2--ind) to node[below,pos=0.5,yshift={0.025*\mylen pt}]{$\sproj$} (e-2);

\end{tikzpicture}\vspace*{-1ex}
  \end{center}
  The transitions of the induced chart $\indscchartof{\onechartof{\astexp}}$ of the \onechart\ interpretation~$\onechartof{\astexp}$ of $\astexp$ correspond 
  to paths in $\onechartof{\astexp}$ that start with a (potentially empty) \onetransition\ path and have a final proper action transition, which also provides the label
  of the induced transition.
  For example the $\bact$\nb-tran\-si\-tion from $\asstexpacci{1}$ to $\asstexpacci{2}$ in $\indscchartof{\onechartof{\astexp}}$ arises
  as the induced transition in $\onechartof{\astexp}$ that is the path that consists of the \onetransitions\ from $\asstexpacci{1}$ to $\asstexpi{1}$,
  and from $\asstexpi{1}$ to $\astexp$,
  followed by the final $\bact$\nb-transition from $\astexp$ to $\asstexpacci{2}$.
  The vertices with immediate termination in $\indscchartof{\onechartof{\astexp}}$ 
  are all those that permit \onetransition\ paths in $\onechartof{\astexp}$ to vertices with immediate termination.
  Therefore in $\onechartof{\astexp}$ only $\astexp$ needs to permit immediate termination
  in order to get induced termination in $\indscchartof{\onechartof{\astexp}}$ also at all other vertices (like in $\chartof{\astexp}$).
  Now clearly the projection function $\sproj$ that maps $\astexp \mapsto \astexp$, and
  $\asstexpi{i},\asstexpacci{i} \mapsto \astexpi{i}$ for $i\in\setexp{1,2}$
  defines a bisimulation from $\indscchartof{\onechartof{\astexp}}$ to $\chartof{\astexp}$.
  Provided that the unreachable vertices $\asstexpi{1}$ and $\asstexpi{2}$ in $\indscchartof{\onechartof{\astexp}}$ 
  are removed by garbage collection, 
  $\sproj$ defines an isomorphism.
  What is more, the \entrybodylabeling~$\onecharthatof{\astexp}$ of $\onechartof{\astexp}$ 
  can be readily checked to be a \LLEEwitness\ of $\onechartof{\astexp}$.
 
  \begin{flushleft}
    \hspace*{-1.75ex}\begin{tikzpicture}[scale=0.97]

\matrix[anchor=north,row sep=1cm,column sep=0.85cm,every node/.style={draw,very thick,circle,minimum width=2.5pt,fill,inner sep=0pt,outer sep=2pt}] at (0,0) {
                 &  \node(f){};
  \\[-0.1cm]
  \node(F-1){};  &  \node[draw=none,fill=none](dummy){};  
                                   & \node(F-2){}; 
  \\[0cm]
                 &  \node(F-3){};  &                      & \node[draw=none,fill=none](helper){};
  \\
};
\calcLength(f,dummy){mylen}
\path (helper) ++ ({0*\mylen pt},{-0.35*\mylen pt}) node[draw,very thick,circle,minimum width=2.5pt,fill,inner sep=0pt,outer sep=2pt](sink){}; 
\draw[<-,very thick,>=latex,chocolate,shorten <=2pt](f) -- ++ (90:{0.5*\mylen pt});
\path(f) ++ ({0.25*\mylen pt},{0.3*\mylen pt}) node{$\bstexp$};  
\path(f) ++ ({-0.25*\mylen pt},{0.5*\mylen pt}) node[left]{\Large $\onechartof{\bstexp}$};
\path(f) ++ ({0.5*\mylen pt},{0.58*\mylen pt}) node[right]{\Large $\widehat{\onechartof{\bstexp}}$};

\draw[->,very thick] (f) to node[left,pos=0.4,xshift={0.05*\mylen pt}]{$\aacti{1}$} node[right,pos=0.7,xshift={0*\mylen pt}]{$\darkcyan{\loopnsteplab{1}}$} (F-1);
\draw[->,very thick] (f) to node[right,pos=0.4]{$\aacti{2}$} node[left,pos=0.7,xshift={0*\mylen pt}]{$\darkcyan{\loopnsteplab{1}}$} (F-2);
\draw[->,very thick] (f) to node[left,pos=0.53,xshift={0.1*\mylen pt}]{$\aacti{3}$}
                            node[right,pos=0.55,xshift={-0.12*\mylen pt}]{$\darkcyan{\loopnsteplab{1}}$} (F-3);

\path(F-1) ++ ({-0.35*\mylen pt},{-0.01*\mylen pt}) node{$\bsstexpi{1}$}; 
\draw[->] (F-1) to node[above,pos=0.6,yshift={-0*\mylen pt},xshift={0*\mylen pt}]{$\bacti{1}$} (sink);
\draw[->,densely dotted,thick,shorten <=0pt,shorten >=0pt,out=135,in=190,distance={0.75*\mylen pt}] (F-1) to node[pos=0.3,right,xshift={-0.05*\mylen pt}]{$\sone$} (f); 

\path(F-2) ++ ({0.4*\mylen pt},{-0.01*\mylen pt}) node{$\bsstexpi{2}$}; 
\draw[->] (F-2) to node[above,pos=0.7,yshift={0.075*\mylen pt},xshift={0.05*\mylen pt}]{$\bacti{2}$} (sink);
\draw[->,densely dotted,thick,shorten <=0pt,shorten >=0pt,out=45,in=-10,distance={0.75*\mylen pt}] (F-2) to node[right,pos=0.4]{$\sone$} (f); 

\path(F-3) ++ ({0*\mylen pt},{-0.3*\mylen pt}) node{$\bsstexpi{3}$};  
\draw[->] (F-3) to node[below,pos=0.5,yshift={0.05*\mylen pt}]{$\bacti{3}$} (sink);
\draw[-,densely dotted,thick,shorten <=0pt,shorten >=0pt,out=170,in=187,distance={2*\mylen pt}] (F-3) to node[left,pos=0.4]{$\sone$} (f);
  
\path(sink) ++ ({0*\mylen pt},{0*\mylen pt}) node[right]{$\textit{Sink}$};

\matrix[anchor=north,row sep=1cm,column sep=0.9cm,every node/.style={draw,very thick,circle,minimum width=2.5pt,fill,inner sep=0pt,outer sep=2pt}] at (5.25,0) {
                 &  \node(f-ind){};
  \\[-0.1cm]
  \node(F-1-ind){};  &  \node[draw=none,fill=none](dummy-ind){};  
                                   & \node(F-2-ind){}; 
  \\[0.3cm]
                 &  \node(F-3-ind){};  &                      & \node[draw=none,fill=none](helper){};
  \\
};
\calcLength(f-ind,dummy-ind){mylen}
\path (helper) ++ ({0.5*\mylen pt},{-0.35*\mylen pt}) node[draw,very thick,circle,minimum width=2.5pt,fill,inner sep=0pt,outer sep=2pt](sink-ind){}; 
\draw[<-,very thick,>=latex,chocolate](f-ind) -- ++ (90:{0.45*\mylen pt});
\path(f-ind) ++ ({0.23*\mylen pt},{0.32*\mylen pt}) node{$\bstexp$};  
\path(f-ind) ++ 
                ({-0.2*\mylen pt},{0.48*\mylen pt}) node[left] {\Large $\indscchartof{\onechartof{\bstexp}}$};

\draw[->
         ] (f-ind) to node[left,pos=0.4]{$\aacti{1}$} (F-1-ind);
\draw[->
        ] (f-ind) to node[right,pos=0.4]{$\aacti{2}$} (F-2-ind);
\draw[->,out=180,in=180,distance={2.95*\mylen pt},shorten >=4pt] (f-ind) to node[left,pos=0.5,xshift={0.05*\mylen pt}]{$\aacti{3}$} (F-3-ind);

\path(F-1-ind) ++ ({-0.35*\mylen pt},{-0.01*\mylen pt}) node{$\bsstexpi{1}$};  
\draw[->,out=270,in=160] (F-1-ind) to node[left]{$\aacti{3}$} (F-3-ind);
\draw[->,out=220,in=140,distance={1.25*\mylen pt}]  (F-1-ind) to node[left,xshift={0.1*\mylen pt}]{$\aacti{1}$} (F-1-ind);
\draw[->,out=-30,in=210,distance={0.65*\mylen pt}]  (F-1-ind) to node[above,yshift={-0.065*\mylen pt}]{$\aacti{2}$} (F-2-ind);
\draw[->,out=-80,in=162.5,distance={1.3*\mylen pt}] (F-1-ind) to node[above,pos=0.75,yshift={-0.05*\mylen pt},xshift={0*\mylen pt}]{$\bacti{1}$} (sink-ind);

\path(F-2-ind) ++ ({0.4*\mylen pt},{-0.01*\mylen pt}) node{$\bsstexpi{2}$};  
\draw[->,out=270,in=20,distance={0.65*\mylen pt}] (F-2-ind) to node[right]{$\aacti{3}$} (F-3-ind);
\draw[->,out=-40,in=40,distance={1.25*\mylen pt}] (F-2-ind) to node[right]{$\aacti{2}$} (F-2-ind);
\draw[->,out=150,in=30,distance={0.65*\mylen pt}] (F-2-ind) to node[above]{$\aacti{1}$} (F-1-ind);
\draw[->] (F-2-ind) to node[above,pos=0.7,yshift={0.075*\mylen pt},xshift={0.05*\mylen pt}]{$\bacti{2}$} (sink-ind);

\path(F-3-ind) ++ ({0*\mylen pt},{-0.4*\mylen pt}) node{$\bsstexpi{3}$};  
\draw[->,out=230,in=310,distance={1.25*\mylen pt}] (F-3-ind) to node[left,pos=0.2,xshift={0.1*\mylen pt}]{$\aacti{3}$} (F-3-ind);
\draw[->,out=90,in=-30,distance={0.65*\mylen pt},shorten >=10pt] (F-3-ind) to node[pos=0.42,left,xshift={0.11*\mylen pt}]{$\aacti{1}$} (F-1-ind);
\draw[->,out=90,in=210,distance={0.65*\mylen pt}] (F-3-ind) to node[right,pos=0.5]{$\aacti{2}$} (F-2-ind);
\draw[->] (F-3-ind) to node[below,pos=0.5,yshift={0.05*\mylen pt}]{$\bacti{3}$} (sink-ind);

\path(sink-ind) ++ ({0*\mylen pt},{0.25*\mylen pt}) node[right]{$\textit{sink}$};

\matrix[anchor=north,row sep=1cm,column sep=0.9cm,every node/.style={draw,very thick,circle,minimum width=2.5pt,fill,inner sep=0pt,outer sep=2pt}] at (10.75,0) {
                 &  \node(f){};
  \\[-0.1cm]
  \node(f-1){};  &  \node[draw=none,fill=none](dummy){};  
                                   & \node(f-2){}; 
  \\[0.3cm]
                 &  \node(f-3){};  &                      & \node[draw=none,fill=none](helper){};
  \\
};
\calcLength(f,dummy){mylen}
\path (helper) ++ ({0.5*\mylen pt},{-0.35*\mylen pt}) node[draw,very thick,circle,minimum width=2.5pt,fill,inner sep=0pt,outer sep=2pt](sink){}; 
\draw[<-,very thick,>=latex,chocolate](f) -- ++ (90:{0.45*\mylen pt});
\path(f) ++ ({0.3*\mylen pt},{0.2*\mylen pt}) node{$\bstexp$};  
\path(f) ++ 
            ({0.5*\mylen pt},{0.4*\mylen pt}) node[right] {\Large $\chartof{\bstexp}$};

\draw[->
         ] (f) to node[left,pos=0.4]{$\aacti{1}$} (f-1);
\draw[->
        ] (f) to node[right,pos=0.4]{$\aacti{2}$} (f-2);
\draw[->,out=180,in=180,distance={2.95*\mylen pt},shorten >=4pt] (f) to node[left,pos=0.35,xshift={0.05*\mylen pt}]{$\aacti{3}$} (f-3);

\path(f-1) ++ ({-0.35*\mylen pt},{-0.01*\mylen pt}) node{$\bstexpi{1}$};  
\draw[->,out=270,in=160] (f-1) to node[left]{$\aacti{3}$} (f-3);
\draw[->,out=220,in=140,distance={1.25*\mylen pt}]  (f-1) to node[left,xshift={0.1*\mylen pt}]{$\aacti{1}$} (f-1);
\draw[->,out=-30,in=210,distance={0.65*\mylen pt}]  (f-1) to node[above,yshift={-0.065*\mylen pt}]{$\aacti{2}$} (f-2);
\draw[->,out=-80,in=162.5,distance={1.3*\mylen pt}] (f-1) to node[above,pos=0.75,yshift={-0.05*\mylen pt},xshift={0*\mylen pt}]{$\bacti{1}$} (sink);

\path(f-2) ++ ({0.4*\mylen pt},{-0.01*\mylen pt}) node{$\bstexpi{2}$};  
\draw[->,out=270,in=20,distance={0.65*\mylen pt}] (f-2) to node[right]{$\aacti{3}$} (f-3);
\draw[->,out=-40,in=40,distance={1.25*\mylen pt}] (f-2) to node[right]{$\aacti{2}$} (f-2);
\draw[->,out=150,in=30,distance={0.65*\mylen pt}] (f-2) to node[above]{$\aacti{1}$} (f-1);
\draw[->] (f-2) to node[above,pos=0.7,yshift={0.075*\mylen pt},xshift={0.05*\mylen pt}]{$\bacti{2}$} (sink);

\path(f-3) ++ ({0*\mylen pt},{-0.4*\mylen pt}) node{$\bstexpi{3}$};  
\draw[->,out=230,in=310,distance={1.25*\mylen pt}] (f-3) to node[left,pos=0.2,xshift={0.1*\mylen pt}]{$\aacti{3}$} (f-3);
\draw[->,out=90,in=-30,distance={0.65*\mylen pt},shorten >=10pt] (f-3) to node[pos=0.42,left,xshift={0.11*\mylen pt}]{$\aacti{1}$} (f-1);
\draw[->,out=90,in=210,distance={0.65*\mylen pt}] (f-3) to node[right,pos=0.5]{$\aacti{2}$} (f-2);
\draw[->] (f-3) to node[below,pos=0.5,yshift={0.05*\mylen pt}]{$\bacti{3}$} (sink);

\path(sink) ++ ({0*\mylen pt},{0.25*\mylen pt}) node[right]{$\textit{sink}$};

\draw[|->,thick,magenta,densely dashed,shorten <= 2pt,shorten >= 2pt,out=10,in=170] (f-ind) to node[above,pos=0.5,yshift={-0.05*\mylen pt}]{$\sproj$} (f);    
\draw[|->,thick,magenta,densely dashed,shorten <= 4pt,shorten >= 3pt,distance={1.5*\mylen pt},
          out=40,in=130] (F-1-ind) to node[below,pos=0.55,yshift={0.05*\mylen pt}]{$\sproj$} (f-1);    
\draw[|->,thick,magenta,densely dashed,shorten <= 2pt,shorten >= 3pt,
          out=-30,in=210] (F-2-ind) to node[above,pos=0.5,yshift={-0.05*\mylen pt}]{$\sproj$} (f-2);   
\draw[|->,thick,magenta,densely dashed,shorten <= 4pt,shorten >= 2pt,
          out=-35,in=210] (F-3-ind) to node[below,pos=0.5,yshift={0.025*\mylen pt}]{$\sproj$} (f-3);    
\draw[|->,thick,magenta,densely dashed,shorten <= 2pt,shorten >= 2pt,
          out=-20,in=210] (sink-ind) to node[above,pos=0.655,yshift={-0.05*\mylen pt}]{$\sproj$} (sink);    

\end{tikzpicture}\vspace*{-1ex}
  \end{flushleft}\vspace{-1ex}
  Above we have illustrated the \onechart\ interpretation~$\onechartof{\bstexp}$ of $\bstexp$ 
  together with the \entrybodylabeling~$\onecharthatof{\bstexp}$ of $\onechartof{\bstexp}$,
  the induced chart $\indscchartof{\onechartof{\bstexp}}$ of $\onechartof{\bstexp}$,
  and its connection via $\sproj$ to the chart interpretation~$\chartof{\bstexp}$ of $\bstexp$: 
  The stacked star expressions at \noninitial\ vertices in $\onechartof{\bstexp}$ are
  $\bsstexpi{i} = 
    \stexpprod{(\stexpstackprod{(\stexpprod{\stexpone}{(\stexpsum{\stexpone}{\stexpprod{\bacti{i}}{\stexpzero}})})}{\stexpit{\bstexpi{0}}})}{\stexpzero}$ \mbox{} for $i\in\setexp{1,2,3}$,
  and $\textit{Sink} = \stexpprod{(\stexpstackprod{(\stexpprod{\stexpone}{\stexpzero})}{\stexpit{\bstexpi{0}}})}{\stexpzero}$. 
  Here the projection function $\sproj$ maps $\bstexp \mapsto \bstexp$, $\textit{Sink} \mapsto \textit{sink}$,
  and $\bsstexpi{i} \mapsto \bstexpi{i}$ for $i\in\setexp{1,2,3}$,
  defining a functional bisimulation from $\indscchartof{\onechartof{\bstexp}}$ to $\chartof{\bstexp}$, which is a isomorphism.
  The \entrybodylabeling~$\onecharthatof{\bstexp}$ of $\onechartof{\bstexp}$ 
  can again readily be checked to be a \LLEEwitness\ of $\onechartof{\bstexp}$.
\end{exa}
     
Thus we have verified the joint claims of Thm.~\ref{thm:onechart-int:funbisim:chart-int} and of Thm.~\ref{thm:onechart-int:LLEEw}  
for the two examples $\astexp$ and $\bstexp$ of star expressions from Ex.~\ref{ex:chart:interpretation}
for which, as we saw in Section~\ref{LLEE:fail}, \LEE\ fails for their chart interpretations $\chartof{\astexp}$ and $\chartof{\bstexp}$:
the property \LEE\ can be recovered for the \onechart\ interpretations $\onechartof{\astexp}$ of~$\astexp$, and $\onechartof{\bstexp}$ of~$\bstexp$
(by Thm.~\ref{thm:onechart-int:LLEEw}),
and also, the induced charts $\indscchartof{\onechartof{\astexp}}$ of $\onechartof{\astexp}$, and $\indscchartof{\onechartof{\bstexp}}$ of $\onechartof{\bstexp}$
map to the original process interpretations $\chartof{\astexp}$ of $\astexp$, and $\chartof{\bstexp}$ of $\bstexp$, respectively,
via a functional bisimulation (by Thm.~\ref{thm:onechart-int:funbisim:chart-int}). 

\enlargethispage{4ex}

\section{Proofs of the properties \ref{property:1} and \ref{property:2}}%
  \label{proofs}

In this section we provide the proofs of the properties~\ref{property:1} and \ref{property:2} (see in the Introduction)
of the variant process semantics $\onechartof{\cdot}$ in relation to the process semantics~$\chartof{\cdot}$. 
In Section~\ref{property:1::proofs} we first sketch the crucial steps that are necessary for proving Lem.~\ref{lem:StExpLTS:funbisim:StExpindLTS},
but refer to the report version~\cite{grab:2020:scpgs-arxiv} for more details of the argument. 
From Lem.~\ref{lem:StExpLTS:funbisim:StExpindLTS} we prove Thm.~\ref{thm:onechart-int:funbisim:chart-int}, 
and in this way demonstrate property~\ref{property:1}.
In Section~\ref{property:2::proofs} 
we provide the details of the proofs of Lem.~\ref{lem:oneLTShat:stackStExps:is:LLEEw}, 
and of Thm.~\ref{thm:onechart-int:LLEEw}, thereby demonstrating the property~\ref{property:2}. 

\subsection{Sketch of the proof of property~\protect\ref{property:1} of $\protect\onechartof{\protect\cdot}$ and $\protect{\protect\chartof{\cdot}}$}
  \label{property:1::proofs}


For Lem.~\ref{lem:StExpLTS:funbisim:StExpindLTS} we need to show that the projection function $\sproj \funin \stackStExpover{\actions} \to \StExpover{\actions}$ 
defines a bisimulation between the induced \LTS\ $\indLTSof{\oneLTSof{\stackStExpover{\actions}}}$ 
  of the stacked star expressions \oneLTS~$\oneLTSof{\stackStExpover{\actions}}$ 
  and the star expressions \LTS~$\LTSof{\StExpover{\actions}}$.
We show this by using statements that relate derivability in the \TSS~$\StExpTSSover{\actions}$ that defines transitions in $\LTSof{\StExpover{\actions}}$
to derivability in a \TSS\ that defines transitions in the induced \LTS\ $\indLTSof{\oneLTSof{\stackStExpover{\actions}}}$.
For this we define a TSS $\stackStExpindTSSover{\actions}$ as an extension of $\stackStExpTSSover{\actions}$
by rules\vspace*{-1pt} that produce induced transitions and induced termination. 
We also formulate two easy lemmas about $\stackStExpindTSSover{\actions}$,\vspace*{-1pt}
where the first one states that $\stackStExpindTSSover{\actions}$ defines $\indLTSof{\oneLTSof{\stackStExpover{\actions}}}$.

\begin{defi}\label{def:indTSS:stackStExpTSS}\nf
  The LTS $\LTSdefdby{\stackStExpindTSSover{\actions}} = \tuple{\stackStExpover{\actions},\actions,\soneterminates,\silt{\cdot}}$ 
  is generated by derivations in the TSS~$\stackStExpindTSSover{\actions}$\vspace*{-2pt}
  that in addition to the axioms and rules of $\stackStExpTSSover{\actions}$ also contains the following rules:\vspace*{-1ex}
  \begin{center}
  $
  \begin{aligned}
    &
    \AxiomC{$ \terminates{\asstexp}\rule[-4pt]{0pt}{15pt} $}
    \UnaryInfC{$ \oneterminates{\asstexp}\rule[-4pt]{0pt}{11.5pt} $}
    \DisplayProof
    & \hspace*{1ex} &
    \AxiomC{$ \asstexp \lt{\sone} \asstexptilde $}
    \AxiomC{$ \oneterminates{\asstexptilde} $}
    \BinaryInfC{$ \oneterminates{\asstexp} $}
    \DisplayProof
    & & \hspace*{2ex} & &
    \AxiomC{$ \asstexp \lt{\aact} \asstexpacc \rule[-3pt]{0pt}{19.5pt} $}
    \UnaryInfC{$ \asstexp \ilt{\aact} \asstexpacc $}
    \DisplayProof
    & \hspace*{1ex} &
    \AxiomC{$ \asstexp \lt{\sone} \asstexptilde \rule[-3pt]{0pt}{19.5pt}$}
    \AxiomC{$ \asstexptilde \ilt{\aact} \asstexpacc \rule[-3pt]{0pt}{19.5pt}$}
    \BinaryInfC{$ \asstexp \ilt{\aact} \asstexpacc $}
    \DisplayProof
  \end{aligned}
  $
\end{center}
\end{defi}

\begin{lem}\label{lem:indLTS:oneLTS:stackStExps}
  $\indLTSof{\oneLTSof{\stackStExpover{\actions}}}
     =
   \LTSdefdby{\stackStExpTSSover{\actions}}\,$. 
\end{lem}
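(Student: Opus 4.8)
The plan is to check that the two labeled transition systems appearing in the equation have identical components. By construction both $\indLTSof{\oneLTSof{\stackStExpover{\actions}}}$ and $\LTSdefdby{\stackStExpindTSSover{\actions}}$ carry the state set $\stackStExpover{\actions}$ and the proper-action set $\actions$, so it remains to show that the induced transition relation $\silt{\cdot}$ and the set $\soneterminates$ of vertices with induced termination, as defined in Def.~\ref{def:indLTS} for the $\oneLTS$ $\oneLTSof{\stackStExpover{\actions}}$, coincide with the relation and the predicate generated by derivations in $\stackStExpindTSSover{\actions}$. The first step I would carry out is a stratification observation about $\stackStExpindTSSover{\actions}$: it extends $\stackStExpTSSover{\actions}$ only by the four rules of Def.~\ref{def:indTSS:stackStExpTSS}, and every conclusion of those rules is a judgment of one of the shapes $\oneterminates{\asstexp}$ or $\asstexp \ilt{\aact} \asstexpacc$; since no judgment of these shapes ever occurs among the premises of a rule of $\stackStExpTSSover{\actions}$, the judgments $\terminates{\asstexp}$ and $\asstexp \lt{\aoneact} \asstexpacc$ derivable in $\stackStExpindTSSover{\actions}$ are exactly those derivable in $\stackStExpTSSover{\actions}$. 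In other words, the underlying $\oneLTS$ is unchanged, and the two remaining rows of rules merely close it up.

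Next I would establish the termination half by proving, for all $\asstexp\in\stackStExpover{\actions}$, that $\oneterminates{\asstexp}$ is derivable in $\stackStExpindTSSover{\actions}$ if and only if there are $\asstexpi{0},\ldots,\asstexpi{n}\in\stackStExpover{\actions}$ with $\asstexp = \asstexpi{0} \lt{\sone} \asstexpi{1} \lt{\sone} \cdots \lt{\sone} \asstexpi{n}$ and $\terminates{\asstexpi{n}}$ in $\oneLTSof{\stackStExpover{\actions}}$, which is precisely condition~\ref{it:indterm::def:indLTS} of Def.~\ref{def:indLTS}. For the ``if'' direction I would induct on $n$: the base case $n=0$ is the first added termination rule, and in the step the second added termination rule applied to $\asstexp \lt{\sone} \asstexpi{1}$ together with the induction hypothesis for the tail $\asstexpi{1} \lt{\sone} \cdots \lt{\sone} \asstexpi{n}$ gives the conclusion. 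For ``only if'' I would induct on the derivation of $\oneterminates{\asstexp}$; by the stratification observation it must end in one of the two added termination rules, and reading off its premises --- where a $\terminates{\cdot}$ premise already holds in $\oneLTSof{\stackStExpover{\actions}}$ and an $\oneterminates{\cdot}$ premise is handled by the induction hypothesis --- produces the required empty-step chain.

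The transition half goes the same way: I would show that $\asstexp \ilt{\aact} \asstexpacc$ is derivable in $\stackStExpindTSSover{\actions}$ exactly when $\asstexp = \asstexpi{0} \lt{\sone} \cdots \lt{\sone} \asstexpi{n} \lt{\aact} \asstexpacc$ for some $n\in\nat$ and $\asstexpi{0},\ldots,\asstexpi{n}\in\stackStExpover{\actions}$, which is condition~\ref{it:indtrans::def:indLTS} of Def.~\ref{def:indLTS}; the ``if'' direction inducts on $n$ using the first added transition rule in the base case and the second in the step, and the ``only if'' direction inducts on the derivation, which again must end in one of these two rules. Combining the two halves yields the equality of the two labeled transition systems. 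I expect no genuine obstacle here; the single point that deserves care is the stratification observation of the first step, which is exactly what makes the two inductions well-founded by guaranteeing that the recursive clauses for $\oneterminates{\cdot}$ and $\ilt{\cdot}$ eventually bottom out in the actual $\oneLTSof{\stackStExpover{\actions}}$ data rather than circling back through the added rules.
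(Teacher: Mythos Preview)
Your proposal is correct and is precisely the natural argument: the stratification observation isolates the new judgment forms, and then the two straightforward inductions (on the length~$n$ of the empty-step chain for one direction, on the derivation for the other) match Def.~\ref{def:indLTS} against the four added rules of Def.~\ref{def:indTSS:stackStExpTSS}. The paper itself does not spell out a proof here---it calls the lemma ``easy'' and defers to the report version---so your unfolding of the definitions is exactly what is expected.
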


\begin{lem}\label{lem:2:lem:indTSS:stackStExpTSS:2:StExpTSS}
  The following rules are admissible for $\stackStExpindTSS$ (they can be eliminated from $\stackStExpindTSS$ derivations):\vspace*{-2.5ex}
  \begin{center}
    $  
    \begin{small}
    \begin{aligned}
      &
      \AxiomC{$ \oneterminates{\asstexpi{1}} $}
      \AxiomC{$ \terminates{\astexpi{2}} $}
      \insertBetweenHyps{\hspace*{0.5ex}}
      \BinaryInfC{$ \oneterminates{ (\stexpprod{\asstexpi{1}}{\astexpi{2}}) } $}
      \DisplayProof
      & \hspace*{-0.8ex} &   
      \AxiomC{$ \oneterminates{\asstexpi{1}} $}
      \UnaryInfC{$ \oneterminates{ (\stexpstackprod{\asstexpi{1}}{\stexpit{\astexpi{2}}}) } $}
      \DisplayProof
      & \hspace*{-0.8ex} &
      \AxiomC{$ \asstexpi{1} \ilt{\aact} \asstexpacci{1} $}
      \UnaryInfC{$ \stexpprod{\asstexpi{1}}{\astexpi{2}} \ilt{\aact} \stexpprod{\asstexpacci{1}}{\astexpi{2}} $}
      \DisplayProof
      & \hspace*{-0.8ex} &
      \AxiomC{$ \asstexpi{1} \ilt{\aact} \asstexpacci{1} $}
      \UnaryInfC{$ \stexpstackprod{\asstexpi{1}}{\stexpit{\astexpi{2}}} \ilt{\aact} \stexpstackprod{\asstexpacci{1}}{\stexpit{\astexpi{2}}} $}
      \DisplayProof
      & \hspace*{-0.8ex} &   
      \AxiomC{$ \oneterminates{\asstexpi{1}} $}
      \AxiomC{$ \astexpi{2} \ilt{\aact} \asstexpacci{2} $}
      \insertBetweenHyps{\hspace*{0.5ex}}
      \BinaryInfC{$ \stexpprod{\asstexpi{1}}{\astexpi{2}} \ilt{\aact} \asstexpacci{2} $}
      \DisplayProof
      & \hspace*{-0.8ex} &  
      \AxiomC{$ \oneterminates{\asstexpi{1}} $}
      \AxiomC{$ \stexpit{\astexpi{2}} \ilt{\aact} \asstexpacci{2} $}
      \insertBetweenHyps{\hspace*{0.5ex}}
      \BinaryInfC{$ \stexpstackprod{\asstexpi{1}}{\stexpit{\astexpi{2}}} \ilt{\aact} \asstexpacci{2} $}
      \DisplayProof
    \end{aligned}
    \end{small}
    $
  \end{center}
\end{lem}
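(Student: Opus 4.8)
The plan is to observe first that the six displayed rules are precisely the induced-transition and induced-termination analogues of the rules of $\stackStExpTSS$ (Def.~\ref{def:stackStExpTSS}) for the operators $\sstexpprod$ and $\sstexpstackprod$. To show them admissible it suffices, by the usual argument of repeatedly replacing an innermost application, to prove for each of them separately: \emph{if its premises are derivable in $\stackStExpindTSS$, then so is its conclusion}. I would prove each such implication by induction on the $\stackStExpindTSS$-derivation of one designated premise --- the derivation of $\oneterminates{\asstexpi{1}}$ for the four rules whose relevant premise is an induced-termination judgement, and the derivation of $\asstexpi{1} \ilt{\aact} \asstexpacci{1}$ for the two rules whose premise is an induced transition --- with a case split on the last rule applied; by Def.~\ref{def:indTSS:stackStExpTSS} these are, for $\soneterminates$, either a lift of $\terminates{\cdot}$ or a prefixing $\alert{\stexpone}$-step, and, for $\silt{\cdot}$, either a lift of $\lt{\cdot}$ or a prefixing $\alert{\stexpone}$-step. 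I would also record one auxiliary fact, by an easy structural induction on $\astexp$: a \emph{plain} star expression $\astexp\in\StExpover{\actions}$ has no outgoing $\alert{\stexpone}$-transition in $\oneLTSof{\stackStExpover{\actions}}$, since $\alert{\stexpone}$-transitions are generated only by the $\stexpstackprod{\cdot}{\cdot}$-backlink rule and propagated through $\sstexpprod$ and $\sstexpstackprod$, none of which occurs in $\astexp$; likewise $\terminates{\asstexp}$ forces $\asstexp$ to be a plain star expression, as no $\terminates{\cdot}$-rule of $\stackStExpTSS$ has a conclusion headed by $\sstexpstackprod$.

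For the $\sstexpstackprod$-rules the argument runs smoothly. For $\asstexpi{1}\ilt{\aact}\asstexpacci{1}\Rightarrow\stexpstackprod{\asstexpi{1}}{\stexpit{\astexpi{2}}}\ilt{\aact}\stexpstackprod{\asstexpacci{1}}{\stexpit{\astexpi{2}}}$, in the base case $\asstexpi{1}\lt{\aact}\asstexpacci{1}$ the $\sstexpstackprod$-transition rule of $\stackStExpTSS$ yields $\stexpstackprod{\asstexpi{1}}{\stexpit{\astexpi{2}}}\lt{\aact}\stexpstackprod{\asstexpacci{1}}{\stexpit{\astexpi{2}}}$, hence the $\silt{\cdot}$-version; in the step case $\asstexpi{1}\lt{\alert{\stexpone}}\asstexptilde\ilt{\aact}\asstexpacci{1}$ the same rule lifts the $\alert{\stexpone}$-step, the induction hypothesis handles $\asstexptilde\ilt{\aact}\asstexpacci{1}$, and the prefixing-step rule of Def.~\ref{def:indTSS:stackStExpTSS} recombines. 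The rules $\oneterminates{\asstexpi{1}}\Rightarrow\oneterminates{\stexpstackprod{\asstexpi{1}}{\stexpit{\astexpi{2}}}}$ and $\oneterminates{\asstexpi{1}}\wedge\stexpit{\astexpi{2}}\ilt{\aact}\asstexpacci{2}\Rightarrow\stexpstackprod{\asstexpi{1}}{\stexpit{\astexpi{2}}}\ilt{\aact}\asstexpacci{2}$ go the same way; the new ingredient is that when the induction on $\oneterminates{\asstexpi{1}}$ bottoms out at $\terminates{\astexpi{1}}$ (so $\asstexpi{1}$ is plain, by the auxiliary fact), the backlink rule $\stexpstackprod{\astexpi{1}}{\stexpit{\astexpi{2}}}\lt{\alert{\stexpone}}\stexpit{\astexpi{2}}$ supplies the missing step, which then combines with $\terminates{\stexpit{\astexpi{2}}}$, respectively with the given $\stexpit{\astexpi{2}}\ilt{\aact}\asstexpacci{2}$, via the $\soneterminates$- resp.\ $\silt{\cdot}$-rules of $\stackStExpindTSS$.

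For the $\sstexpprod$-rules, $\oneterminates{\asstexpi{1}}\wedge\terminates{\astexpi{2}}\Rightarrow\oneterminates{\stexpprod{\asstexpi{1}}{\astexpi{2}}}$ and $\asstexpi{1}\ilt{\aact}\asstexpacci{1}\Rightarrow\stexpprod{\asstexpi{1}}{\astexpi{2}}\ilt{\aact}\stexpprod{\asstexpacci{1}}{\astexpi{2}}$ are handled exactly as their $\sstexpstackprod$-counterparts, using the $\sstexpprod$-transition rule to propagate $\alert{\stexpone}$-steps and, in the base case of the first, the $\sstexpprod$-rule for $\terminates{\cdot}$. The delicate rule is $\oneterminates{\asstexpi{1}}\wedge\astexpi{2}\ilt{\aact}\asstexpacci{2}\Rightarrow\stexpprod{\asstexpi{1}}{\astexpi{2}}\ilt{\aact}\asstexpacci{2}$. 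Inducting on $\oneterminates{\asstexpi{1}}$: the step case $\asstexpi{1}\lt{\alert{\stexpone}}\asstexptilde$, $\oneterminates{\asstexptilde}$ lifts to $\stexpprod{\asstexpi{1}}{\astexpi{2}}\lt{\alert{\stexpone}}\stexpprod{\asstexptilde}{\astexpi{2}}$, and the induction hypothesis together with the prefixing-step rule finish it; in the base case $\terminates{\asstexpi{1}}$, so $\asstexpi{1}$ is a plain star expression $\astexpi{1}$, and I would run a nested induction on the derivation of $\astexpi{2}\ilt{\aact}\asstexpacci{2}$ which, crucially, has no step case --- $\astexpi{2}$ is plain, hence emits no $\alert{\stexpone}$-transition, so $\astexpi{2}\ilt{\aact}\asstexpacci{2}$ can only come from $\astexpi{2}\lt{\aact}\asstexpacci{2}$, whereupon the exit rule of $\stackStExpTSS$ (with premises $\terminates{\astexpi{1}}$ and $\astexpi{2}\lt{\aact}\asstexpacci{2}$) gives $\stexpprod{\astexpi{1}}{\astexpi{2}}\lt{\aact}\asstexpacci{2}$ and hence $\stexpprod{\astexpi{1}}{\astexpi{2}}\ilt{\aact}\asstexpacci{2}$.

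\textbf{The main obstacle} I anticipate is precisely this last rule: unlike $\sstexpstackprod$, concatenation $\sstexpprod$ has \emph{no} $\alert{\stexpone}$-labelled analogue of its exit rule, so the interaction between $\alert{\stexpone}$-steps of the left operand and the action performed by the right operand cannot simply be absorbed into an additional $\alert{\stexpone}$-step; it is exactly the auxiliary observation that plain star expressions emit no $\alert{\stexpone}$-steps --- together with the fact that $\terminates{\cdot}$ holds only of plain star expressions --- that closes that gap cleanly. Everything else is routine bookkeeping. Once all six implications ``premises derivable $\Rightarrow$ conclusion derivable'' are established, full eliminability of the rules from an arbitrary $\stackStExpindTSS$-derivation follows by repeatedly rewriting an innermost application of one of them into a derivation not using it.
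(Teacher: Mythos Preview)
Your proposal is correct. The paper itself does not give a proof of this lemma here; it only states ``For the proofs of these two lemmas, please see the report version \cite{grab:2020:scpgs-arxiv}.'' So there is no in-paper argument to compare against. Your approach---induction on the $\stackStExpindTSS$-derivation of the designated ``induced'' premise, lifting $\sone$-steps through the $\sstexpprod$- and $\sstexpstackprod$-rules of $\stackStExpTSS$, and closing the base cases with the auxiliary observation that plain star expressions emit no $\sone$-transitions and that $\terminates{\cdot}$ holds only of plain star expressions---is the natural one and goes through as you describe. (The second auxiliary fact is used elsewhere in the paper, in the proof of Lem.~\ref{lem:oneLTShat:stackStExps:is:LLEEw}.) One cosmetic point: in the base case of the fifth rule you do not really need a ``nested induction'' on $\astexpi{2}\ilt{\aact}\asstexpacci{2}$; since $\astexpi{2}$ is plain and hence has no outgoing $\sone$-transition, a single case inspection already forces that derivation to be the base rule.
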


For the proofs of these two lemmas, please see the report version \cite{grab:2020:scpgs-arxiv}. 
Next we formulate and outline the proof of a crucial lemma that relates derivability statements in $\stackStExpindTSSover{\actions}$ 
with derivability statements in $\stackStExpTSSover{\actions}$
in a way that will enable us to show that the projection function $\sproj$ 
defines a bisimulation between the \LTSs\ generated by these two \TSSs.

\begin{lem}\label{lem:indTSS:stackStExpTSS:2:StExpTSS}
  For all $\asstexp,\asstexpacc\in\stackStExpover{\actions}$, $\astexpacc\in\StExpover{\actions}$, and $\aact\in\actions$
  the following statements hold
  concerning derivability in the TSS $\stackStExpindTSSover{\actions}$ 
  and derivability in the TSS $\StExpTSSover{\actions}$
  (we drop $\actions$ from their designations):\vspace*{-0.5ex}
  \begin{align}
    \derivablein{\stackStExpindTSS}
      \oneterminates{\asstexp}
        \;\;\; & \:\Longrightarrow \;\;\;\; 
    \derivablein{\StExpTSS}
      \terminates{\proj{\asstexp}}\punc{,}
        \label{eq:1:lem:indTSS:stackStExpTSS:2:StExpTSS}
    \displaybreak[0]\\[-0.75ex] 
    \derivablein{\stackStExpindTSS}    
      \asstexp
        \ilt{\aact}
      \asstexpacc
        \;\;\; & \:\Longrightarrow \;\;\;\;
    \derivablein{\StExpTSS}
      \proj{\asstexp}
        \lt{\aact}
      \proj{\asstexpacc} \punc{,}
        \label{eq:2:lem:indTSS:stackStExpTSS:2:StExpTSS}
    \displaybreak[0]\\[-0.6ex] 
      \derivablein{\stackStExpindTSS}
        \oneterminates{\asstexp}
          \;\;\; & \Longleftarrow\: \;\;\;\; 
      \derivablein{\StExpTSS}
        \terminates{\proj{\asstexp}} \punc{,}
        \label{eq:3:lem:indTSS:stackStExpTSS:2:StExpTSS}
    \displaybreak[0]\\[-0.6ex]
    \existsstzero{\asstexpacc\in\stackStExp}
      \bigl[\,
        \proj{\asstexpacc} = \astexpacc
          \;\logand\;\; 
        {\derivablein{\stackStExpindTSS}
           \asstexp \ilt{\aact} \asstexpacc} 
      \,\bigr] 
        \;\;\; & \Longleftarrow\: \;\;\;\;
    \derivablein{\StExpTSS}
      \proj{\asstexp}
        \lt{\aact}
      \astexpacc\punc{.}
        \label{eq:4:lem:indTSS:stackStExpTSS:2:StExpTSS}
  \end{align}
\end{lem}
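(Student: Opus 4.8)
The plan is to prove all four implications by a mutual induction: \eqref{eq:1:lem:indTSS:stackStExpTSS:2:StExpTSS} and \eqref{eq:2:lem:indTSS:stackStExpTSS:2:StExpTSS} (the ``forth'' implications, transferring derivability from $\stackStExpindTSS$ to $\StExpTSS$) by induction on the given $\stackStExpindTSS$ derivation, and \eqref{eq:3:lem:indTSS:stackStExpTSS:2:StExpTSS} and \eqref{eq:4:lem:indTSS:stackStExpTSS:2:StExpTSS} (the ``back'' implications) by structural induction on $\asstexp$, inverting the given $\StExpTSS$ derivation of $\terminates{\proj{\asstexp}}$ resp.\ $\proj{\asstexp}\lt{\aact}\astexpacc$ in each case. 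Two observations keep the bookkeeping manageable. First, $\sproj$ is a structural map that is the identity on star expressions and replaces $\sstexpstackprod$ by $\sstexpprod$, so every rule of $\stackStExpTSS$ whose conclusion mentions a stacked product projects to the corresponding concatenation rule of $\StExpTSS$; in particular $\stexpit{\astexp}\lt{a}\stexpstackprod{\asstexpacc}{\stexpit{\astexp}}$ projects to $\stexpit{\astexp}\lt{a}\stexpprod{\proj{\asstexpacc}}{\stexpit{\astexp}}$. Second, a \emph{plain} star expression performs no $\sone$-transition in $\oneLTSof{\stackStExpover{\actions}}$ (a short structural induction over the three $\sone$-rules of $\stackStExpTSS$: none can fire with label $\sone$ unless its first premise already produces a $\sone$-step of a subexpression containing $\sstexpstackprod$). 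Consequently, for $\asstexp\in\StExpover{\actions}$, the relation $\asstexp \ilt{\aact} \asstexpacc$ in $\stackStExpindTSS$ reduces to $\asstexp \lt{\aact} \asstexpacc$ in $\stackStExpTSS$. I also record two easy transfer facts, proved by induction on $\asstexp$ (the case $\stexpstackprod{\cdot}{\cdot}$ being vacuous, as stacked products never terminate immediately): $\derivablein{\stackStExpTSS}\terminates{\asstexp}$ implies $\derivablein{\StExpTSS}\terminates{\proj{\asstexp}}$, and for $\astexp\in\StExpover{\actions}$ one has $\derivablein{\StExpTSS}\terminates{\astexp}$ iff $\derivablein{\stackStExpTSS}\terminates{\astexp}$.

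For the forth direction the decisive sub-step concerns $\sone$-transitions: if $\asstexp \lt{\sone} \asstexptilde$ in $\stackStExpTSS$, then $\derivablein{\StExpTSS}\terminates{\proj{\asstexptilde}}$ implies $\derivablein{\StExpTSS}\terminates{\proj{\asstexp}}$, and $\derivablein{\StExpTSS}\proj{\asstexptilde}\lt{\aact}\astexpacc$ implies $\derivablein{\StExpTSS}\proj{\asstexp}\lt{\aact}\astexpacc$. This is proved by induction on the derivation of the $\sone$-step: the two congruence rules for $\sstexpprod$ and $\sstexpstackprod$ are handled by the inductive hypothesis and the concatenation congruence rules of $\StExpTSS$, while for $\stexpstackprod{\astexpi{1}}{\stexpit{\astexpi{2}}}\lt{\sone}\stexpit{\astexpi{2}}$ with $\terminates{\astexpi{1}}$ one notes that $\proj{\stexpstackprod{\astexpi{1}}{\stexpit{\astexpi{2}}}}=\stexpprod{\astexpi{1}}{\stexpit{\astexpi{2}}}$ both terminates immediately (since $\terminates{\astexpi{1}}$ and $\terminates{\stexpit{\astexpi{2}}}$) and admits every transition of $\stexpit{\astexpi{2}}$ via the $\StExpTSS$ rule for concatenation past a terminating first factor. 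Granting this, \eqref{eq:1:lem:indTSS:stackStExpTSS:2:StExpTSS} and \eqref{eq:2:lem:indTSS:stackStExpTSS:2:StExpTSS} follow by induction on the $\stackStExpindTSS$ derivation: if it ends with $\frac{\terminates{\asstexp}}{\oneterminates{\asstexp}}$ resp.\ $\frac{\asstexp\lt{\aact}\asstexpacc}{\asstexp\ilt{\aact}\asstexpacc}$, one applies the termination transfer fact resp.\ a short sub-induction on the $\stackStExpTSS$ transition (using that $\sstexpstackprod$-rules project to $\sstexpprod$-rules); if it ends with an induced-transition rule prefixing a leading $\sone$-step, one combines the inductive hypothesis with the $\sone$-step sub-step just described.

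For the back direction I would carry out the structural induction on $\asstexp$ with the case split $\asstexp\in\StExpover{\actions}$, $\asstexp=\stexpprod{\asstexpi{1}}{\astexpi{2}}$, or $\asstexp=\stexpstackprod{\asstexpi{1}}{\stexpit{\astexpi{2}}}$, and invert the $\StExpTSS$ derivation of $\terminates{\proj{\asstexp}}$ resp.\ $\proj{\asstexp}\lt{\aact}\astexpacc$. When $\asstexp$ is a plain star expression everything stays inside $\StExpover{\actions}$, and since such expressions have no $\sone$-steps the corresponding $\StExpTSS$ transitions lift verbatim (for $\stexpit{\astexp'}$ via the $\stackStExpindTSS$ rule producing $\stexpstackprod{\cdot}{\stexpit{\astexp'}}$). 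In the product and stacked-product cases, a transition of $\proj{\asstexp}$ is inherited either from the first factor $\asstexpi{1}$ — in which case one applies the inductive hypothesis for \eqref{eq:4:lem:indTSS:stackStExpTSS:2:StExpTSS} to $\asstexpi{1}$ and lifts by the third resp.\ fourth admissible rule of Lemma~\ref{lem:2:lem:indTSS:stackStExpTSS:2:StExpTSS} — or from the second component after passing a terminating first factor — in which case one applies \eqref{eq:3:lem:indTSS:stackStExpTSS:2:StExpTSS} to $\asstexpi{1}$ (a subterm) to obtain $\oneterminates{\asstexpi{1}}$, the inductive hypothesis to the second component, and lifts by the fifth resp.\ sixth admissible rule. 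The immediate-termination statements \eqref{eq:3:lem:indTSS:stackStExpTSS:2:StExpTSS} in these cases are obtained analogously from the first resp.\ second admissible rule of Lemma~\ref{lem:2:lem:indTSS:stackStExpTSS:2:StExpTSS}, again using \eqref{eq:3:lem:indTSS:stackStExpTSS:2:StExpTSS} for $\asstexpi{1}$.

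I expect the main obstacle to be precisely the $\sone$-transition sub-step in the forth direction, i.e.\ making rigorous that an $\sone$-step from $\asstexp$ to $\asstexptilde$ never enlarges and never loses the projected behavior, so that prefixing an induced transition or induced termination with $\sone$-steps is invisible under $\sproj$; once this is established, every remaining case is a routine rule-against-rule match. Finally, \eqref{eq:1:lem:indTSS:stackStExpTSS:2:StExpTSS}--\eqref{eq:4:lem:indTSS:stackStExpTSS:2:StExpTSS}, together with Lemma~\ref{lem:indLTS:oneLTS:stackStExps}, immediately show that the graph of $\sproj$ satisfies the (forth), (back) and (termination) clauses of Definition~\ref{def:bisims:LTSs}, which is exactly the content of Lemma~\ref{lem:StExpLTS:funbisim:StExpindLTS}.
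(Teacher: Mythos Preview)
Your proposal is correct and follows essentially the same approach as the paper's proof outline: the paper isolates exactly the four auxiliary facts you describe (your termination-transfer fact and the two $\sone$-step preservation statements are the paper's \eqref{eq:1:prf:lem:indTSS:stackStExpTSS:2:StExpTSS}, \eqref{eq:3:prf:lem:indTSS:stackStExpTSS:2:StExpTSS}, \eqref{eq:4:prf:lem:indTSS:stackStExpTSS:2:StExpTSS}, together with the proper-action transfer \eqref{eq:2:prf:lem:indTSS:stackStExpTSS:2:StExpTSS}), derives \eqref{eq:1:lem:indTSS:stackStExpTSS:2:StExpTSS}--\eqref{eq:2:lem:indTSS:stackStExpTSS:2:StExpTSS} from them, and proves \eqref{eq:3:lem:indTSS:stackStExpTSS:2:StExpTSS}--\eqref{eq:4:lem:indTSS:stackStExpTSS:2:StExpTSS} by structural induction on $\asstexp$ using the admissible rules of Lem.~\ref{lem:2:lem:indTSS:stackStExpTSS:2:StExpTSS}. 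The only cosmetic difference is that the paper phrases the $\sone$-step sub-lemmas as induction on the structure of $\asstexp$ rather than on the derivation of the $\sone$-step, but for this TSS the two inductions coincide.
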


\begin{proof}[Proof outline (for the full proof, see the report version \mbox{\cite{grab:2020:scpgs-arxiv}})]
  For all $\asstexp,\asstexpacc\in\stackStExpover{\actions}$, $\astexpacc\in\StExpover{\actions}$, and $\aact\in\actions$
  the following implications hold
  from derivability in the TSS $\stackStExpTSSover{\actions}$ 
  to  derivability in the TSS $\StExpTSSover{\actions}$
  (again we drop $\actions$ from their designations):
  \begin{align}
    \derivablein{\stackStExpTSS}
      \terminates{\asstexp}
          \;\;\; & \Longrightarrow \;\;\;
    \asstexp\in\StExpover{\actions}
      \;\logand\;\;
    \derivablein{\StExpTSS}
      \terminates{\asstexp} \punc{,}
        \label{eq:1:prf:lem:indTSS:stackStExpTSS:2:StExpTSS}
    \displaybreak[0]\\[-0.25ex]
    \derivablein{\stackStExpTSS}
      \asstexp \lt{\aact} \asstexpacc
          \;\;\; & \Longrightarrow \;\;\;
    \derivablein{\asstexp}
      \proj{\asstexp} \lt{\aact} \proj{\asstexpacc} \punc{,}
        \label{eq:2:prf:lem:indTSS:stackStExpTSS:2:StExpTSS}
    \displaybreak[0]\\[-0.25ex]
    \derivablein{\stackStExpTSS}
      \asstexp \lt{\sone} \asstexptilde
          \;\;\; & \Longrightarrow \;\;\;
    \bigl(\,
      \derivablein{\StExpTSS}
        \terminates{\proj{\asstexptilde}}
          \;\;\Longrightarrow\;\;
      \derivablein{\StExpTSS}    
        \terminates{\proj{\asstexp}}
    \,\bigr) \punc{,} 
        \label{eq:3:prf:lem:indTSS:stackStExpTSS:2:StExpTSS}
    \displaybreak[0]\\[-0.25ex]
    \derivablein{\stackStExpTSS}
      \asstexp \lt{\sone} \asstexptilde 
          \;\;\; & \Longrightarrow \;\;\;
    \bigl(\,
      \derivablein{\StExpTSS}
        \proj{\asstexptilde} \lt{\aact} \astexpacc
          \;\;\Longrightarrow\;\;
      \derivablein{\StExpTSS}    
        \proj{\asstexp} \lt{\aact} \astexpacc
    \,\bigr) \punc{.}
        \label{eq:4:prf:lem:indTSS:stackStExpTSS:2:StExpTSS}
  \end{align}
  Statements~\eqref{eq:1:prf:lem:indTSS:stackStExpTSS:2:StExpTSS} and \eqref{eq:2:prf:lem:indTSS:stackStExpTSS:2:StExpTSS} can be established
  by rule inspection, statements~\eqref{eq:3:prf:lem:indTSS:stackStExpTSS:2:StExpTSS} and \eqref{eq:4:prf:lem:indTSS:stackStExpTSS:2:StExpTSS}
  can be shown by induction on the structure of $\asstexp$.
  
  Then statement~\eqref{eq:1:lem:indTSS:stackStExpTSS:2:StExpTSS} of the lemma
  can be shown from \eqref{eq:1:prf:lem:indTSS:stackStExpTSS:2:StExpTSS} and \eqref{eq:3:prf:lem:indTSS:stackStExpTSS:2:StExpTSS},
  and statement
  \eqref{eq:2:lem:indTSS:stackStExpTSS:2:StExpTSS} from \eqref{eq:2:prf:lem:indTSS:stackStExpTSS:2:StExpTSS} and \eqref{eq:4:prf:lem:indTSS:stackStExpTSS:2:StExpTSS}.
  Statements~\eqref{eq:3:lem:indTSS:stackStExpTSS:2:StExpTSS} and \eqref{eq:4:lem:indTSS:stackStExpTSS:2:StExpTSS} 
  of the lemma can be shown by means of a proof by induction on the structure of $\asstexp$
  that makes crucial use of the admissible rules of $\stackStExpindTSS$ in Lem.~\ref{lem:2:lem:indTSS:stackStExpTSS:2:StExpTSS}. 
\end{proof}

\begin{proof}[Proof of Lem.~\ref{lem:StExpLTS:funbisim:StExpindLTS}]
  By transferring the four statements 
  of Lem.~\ref{lem:indTSS:stackStExpTSS:2:StExpTSS} from the \TSSs~$\stackStExpindTSSover{\actions}$ and $\StExpTSSover{\actions}$ 
  to their generated \LTSs~$\LTSdefdby{\stackStExpindTSSover{\actions}}$ and $\LTSdefdby{\StExpTSSover{\actions}}$,
  and by using 
  $\LTSdefdby{\stackStExpindTSSover{\actions}} = \LTSof{\stackStExpover{\actions}}$
  from Lem.~\ref{lem:indLTS:oneLTS:stackStExps},
  and 
  $\LTSdefdby{\StExpTSSover{\actions}} = \LTSof{\StExpover{\actions}}$ from Def.~\ref{def:stackStExpTSS}
  we obtain the following statements,
  for all stacked star expressions $\asstexp,\asstexpacc\in\stackStExpover{\actions}$, star expressions $\astexpacc\in\StExpover{\actions}$, and actions $\aact\in\actions\,$:
  \begin{align}
      \asstexp
        \ilt{\aact}
      \asstexpacc 
          \;\;\;\text{in $\indLTSof{\LTSof{\stackStExpTSSover{\actions}}}$}
        \;\;\; & \:\:\Longrightarrow \;\;\;\;
      \proj{\asstexp}
        \lt{\aact}
      \proj{\asstexpacc} 
        \;\;\;\text{in $\LTSof{\StExpover{\actions}}$} \punc{,}
          \label{eq:1:prf:lem:StExpLTS:funbisim:StExpindLTS}
    \\[-0.5ex]
    \begin{aligned}[b]
        \existsstzero{\asstexpacc\in\stackStExp}
          \bigl[\,
              & \proj{\asstexpacc} = \astexpacc
                  \;\logand\;\; 
              \\[-0.75ex]
              & \asstexp \ilt{\aact} \asstexpacc  
                  \;\;\;\text{in $\indLTSof{\LTSof{\stackStExpover{\actions}}}$}
            \,\bigr] 
    \end{aligned} 
        \;\;\; & 
      \begin{aligned}[b]
        \phantom{
         \Longleftarrow\: \;\;\;\;
        \proj{\asstexp}
          \lt{\aact}
        \astexpacc 
        \;\;\;\text{in $\LTSof{\stackStExpover{\actions}}$}
                 } 
        \\
         \Longleftarrow\: \;\;\;\;
        \proj{\asstexp}
          \lt{\aact}
        \astexpacc 
        \;\;\;\text{in $\LTSof{\StExpover{\actions}}$} \punc{,}
      \end{aligned}
          \label{eq:2:prf:lem:StExpLTS:funbisim:StExpindLTS}
    \\[0.25ex]
      \oneterminates{\asstexp} \;\;\;\text{in $\indLTSof{\LTSof{\stackStExpover{\actions}}}$}%
        \;\;\; & \Longleftrightarrow\: \;\;\;\;
      \terminates{\proj{\asstexp}} \;\;\;\text{in $\LTSof{\StExpover{\actions}}$}  \punc{.}
          \label{eq:3:prf:lem:StExpLTS:funbisim:StExpindLTS}
  \end{align}
  Hereby \eqref{eq:1:prf:lem:StExpLTS:funbisim:StExpindLTS} follows from \eqref{eq:2:lem:indTSS:stackStExpTSS:2:StExpTSS},
         \eqref{eq:2:prf:lem:StExpLTS:funbisim:StExpindLTS} from \eqref{eq:4:lem:indTSS:stackStExpTSS:2:StExpTSS},
  and    \eqref{eq:3:prf:lem:StExpLTS:funbisim:StExpindLTS} from \eqref{eq:1:lem:indTSS:stackStExpTSS:2:StExpTSS} and \eqref{eq:3:lem:indTSS:stackStExpTSS:2:StExpTSS}.
  These three statements
  witness the forth, the back, and the termination condition in Def.~\ref{def:bisims:LTSs}
  for the graph of the projection function $\sproj$, which also is \nonempty.
  Therefore $\sproj$ defines a bisimulation
  between $\indLTSof{\LTSof{\stackStExpover{\actions}}}$ and $\LTSof{\StExpover{\actions}}$.
\end{proof}

\begin{proof}[Proof of Thm.~\ref{thm:onechart-int:funbisim:chart-int}]
  Let $\astexp\in\StExpover{\actions}$ be a star expression.
  By Def.~\ref{def:onechart:interpretation}, the \onechart\ interpretation $\onechartof{\astexp}$ of $\astexp$ with start vertex $\astexp$
  is the \generatedby{\setexp{\astexp}} \subonechart\ of the stacked star expressions \oneLTS~$\oneLTSof{\stackStExpover{\actions}}$. 
  It follows by the definition of induced transitions in Def.~\ref{def:indLTS} that
  the induced chart $\indLTSof{\onechartof{\astexp}}$ of $\onechartof{\astexp}$ is the \generatedby{\setexp{\astexp}} subchart
  of the induced \LTS~$\indLTSof{\oneLTSof{\stackStExpover{\actions}}}$ of $\oneLTSof{\stackStExpover{\actions}}$.
  On the other hand, the chart interpretation $\chartof{\astexp}$ of $\astexp$
  is the \generatedby{\setexp{\astexp}} subchart of the star expressions \LTS~$\LTSof{\StExpover{\actions}}$ by Def.~\ref{def:chart:interpretation}.
  Now since, due to Lem.~\ref{lem:StExpLTS:funbisim:StExpindLTS}, 
  the graph of the projection function $\sproj$ defines a bisimulation from 
  $\indLTSof{\oneLTSof{\stackStExpover{\actions}}}$ to the star expressions \LTS~$\LTSof{\StExpover{\actions}}$
  which due to $\proj{\astexp} = \astexp$ relates $\astexp$ with itself,
  it follows that the restriction of $\sproj$ to the \generatedby{\setexp{\astexp}} part
  $\indscchartof{\onechartof{\astexp}}$ of $\indLTSof{\oneLTSof{\StExpover{\actions}}}$
  defines a bisimulation to the \generatedby{\setexp{\astexp}} part $\chartof{\astexp}$ of $\LTSof{\StExpover{\actions}}$. 
  From this we conclude that $\indscchartof{\onechartof{\astexp}} \funbisim \chartof{\astexp}$ holds.
\end{proof}

\subsection{Proof of property~\ref{property:2} of $\onechartof{\cdot}$}
  \label{property:2::proofs}

We now turn to proving 
  Lem.~\ref{lem:oneLTShat:stackStExps:is:LLEEw}, 
which states that the \entrybodylabeling\ $\oneLTShatof{\stackStExpover{\actions}}$ from Def.~\ref{def:stackStExpTSShat} 
is a \LLEEwitness\ for the stacked star expression \oneLTS\ $\oneLTSof{\stackStExpover{\actions}}$.
As a preparation for verifying that the properties of a \LLEEwitness\ are fulfilled, 
we prove a number of technical statements in the two lemmas below. 
%
For this, we introduce the set $\AppCxtover{\actions}$ of \emph{applicative contexts of stacked star expressions} over $\actions$ 
by which we mean the set of contexts that are defined by the grammar:
\begin{center}
  $
  \acxtwh
    \;\;\;\BNFdefdby\;\;\;
      \Box
        \;\BNFor\;
      \stexpprod{\acxtwh}{\astexp}
        \;\BNFor\;
      \stexpstackprod{\acxtwh}{\stexpit{\astexp}}
        \qquad\text{(where $\astexp\in\StExpover{\actions}$)} \punc{.}   
  $
\end{center}
In view of Def.~\ref{def:StExp} every stacked star expression $\asstexp\in\stackStExpover{\actions}$ 
  can be parsed uniquely as of the form $\asstexp = \acxtap{\astexp}$ for some star expression~$\astexp\in\StExpover{\actions}$, 
  and an applicative context $\acxtwh\in\AppCxtover{\actions}$.

\begin{lem}\label{lem:steps:appcxt}
  If $\asstexp \redi{\alab} \asstexpacc$ in $\oneLTShatof{\stackStExpover{\actions}}$, 
  then also $\acxtap{\asstexp} \redi{\alab} \acxtap{\asstexpacc}$,
  for each $\darkcyan{\alab} \in \setexp{\bodylabcol} \cup \descsetexp{ \darkcyan{\loopnsteplab{\aLname}} }{ \aLname\in\natplus }$.
\end{lem}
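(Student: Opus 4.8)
The plan is to prove this by induction on the structure of the applicative context $\acxtwh \in \AppCxtover{\actions}$, exploiting that the marking label $\darkcyan{\alab}$ is carried over unchanged by exactly those rules of $\stackStExpTSShatover{\actions}$ that propagate a transition through the \emph{left} argument of a concatenation $\sstexpprod$ or of a stacked product $\sstexpstackprod$ — which is precisely where the hole of an applicative context sits.

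First I would unfold the hypothesis: $\asstexp \redi{\alab} \asstexpacc$ in $\oneLTShatof{\stackStExpover{\actions}}$ means that $\asstexp \lti{\alert{\aoneact}}{\alab} \asstexpacc$ in $\oneLTShatof{\stackStExpover{\actions}}$ for some action label $\alert{\aoneact}\in\oneactions$. I would then prove, by induction on $\acxtwh$, the sharper claim that $\acxtap{\asstexp} \lti{\alert{\aoneact}}{\alab} \acxtap{\asstexpacc}$ holds in $\oneLTShatof{\stackStExpover{\actions}}$ for the \emph{same} labels $\alert{\aoneact}$ and $\alab$; this immediately yields $\acxtap{\asstexp} \redi{\alab} \acxtap{\asstexpacc}$. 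The base case $\acxtwh = \Box$ is trivial, since then $\acxtap{\asstexp} = \asstexp$ and $\acxtap{\asstexpacc} = \asstexpacc$. If $\acxtwh$ is of the form $\stexpprod{\bcxt}{\astexp}$ with $\bcxt\in\AppCxtover{\actions}$ a smaller applicative context and $\astexp\in\StExpover{\actions}$, then the induction hypothesis gives $\cxtap{\bcxt}{\asstexp} \lti{\alert{\aoneact}}{\alab} \cxtap{\bcxt}{\asstexpacc}$, and applying the rule of $\stackStExpTSShatover{\actions}$ that infers $\stexpprod{\asstexpi{1}}{\astexpi{2}} \lti{\alert{\aoneact}}{\darkcyan{\alab}} \stexpprod{\asstexpacci{1}}{\astexpi{2}}$ from $\asstexpi{1} \lti{\alert{\aoneact}}{\darkcyan{\alab}} \asstexpacci{1}$ — a rule with no side condition that keeps both labels intact — yields $\acxtap{\asstexp} \lti{\alert{\aoneact}}{\alab} \acxtap{\asstexpacc}$. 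The case $\acxtwh = \stexpstackprod{\bcxt}{\stexpit{\astexp}}$ is entirely analogous, using instead the rule that infers $\stexpstackprod{\asstexpi{1}}{\stexpit{\astexpi{2}}} \lti{\alert{\aoneact}}{\darkcyan{\alab}} \stexpstackprod{\asstexpacci{1}}{\stexpit{\astexpi{2}}}$ from $\asstexpi{1} \lti{\alert{\aoneact}}{\darkcyan{\alab}} \asstexpacci{1}$, which likewise preserves both labels and carries no side condition.

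I do not expect a genuine obstacle here. The only points requiring mild care are: (i)~that the two congruence rules used act only on the left argument, so that plugging a transition into an applicative context is exactly what they mirror — in contrast with the rules for $\sstexpsum$ and for the right argument of $\sstexpprod$, which overwrite the marking label with $\bodylabcol$; and (ii)~that one keeps the action label $\alert{\aoneact}$ fixed along the induction, so that the conclusion really concerns the relation $\redi{\alab}$. Since $\darkcyan{\alab}$ ranges over $\setexp{\bodylabcol}\cup\descsetexp{\darkcyan{\loopnsteplab{\aLname}}}{\aLname\in\natplus}$, i.e. over \emph{all} marking labels that occur in $\oneLTShatof{\stackStExpover{\actions}}$, no case distinction on $\alab$ itself is needed.
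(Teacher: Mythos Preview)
Your proposal is correct and follows exactly the same approach as the paper, which proves the lemma by induction on the structure of applicative contexts using the left-congruence rules for $\sstexpprod$ and $\sstexpstackprod$ in $\stackStExpTSShatover{\actions}$. Your write-up is in fact more detailed than the paper's one-line proof.
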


\begin{proof}
  By induction on the structure of applicative contexts, using the rules for $\sstexpprod$ and $\sstexpstackprod$ of $\stackStExpTSShatover{\actions}$.
\end{proof}

\begin{lem}\label{lem:bodypaths:stexpstackprod:stexpprod}
  \begin{enumerate}[label={(\alph{*})},leftmargin=*,align=left,labelsep=0.25ex,itemsep=0ex]
    \item{}\label{it:1:lem:bodypaths:stexpstackprod:stexpprod}
      Every maximal $\sredi{\bodylab}$ path from $\acxtap{\stexpstackprod{\asstexp}{\stexpit{\astexp}}}$ in $\oneLTShatof{\stackStExpover{\actions}}$,
      where $\asstexp\in\stackStExpover{\actions}$, $\astexp\in\StExpover{\actions}$, and $\acxtwh\in\AppCxtover{\actions}$,
      is of either of the following two forms: 
      \begin{enumerate}[label={(\roman{*})},leftmargin=*,align=right,labelsep=1ex]
        \item{}\label{path:1:it:1:lem:bodypaths:stexpstackprod:stexpprod}
          $ \acxtap{\stexpstackprod{\asstexp}{\stexpit{\astexp}}}
              =
            \acxtap{\stexpstackprod{\asstexpi{0}}{\stexpit{\astexp}}}  
              \redi{\bodylab}
            \acxtap{\stexpstackprod{\asstexpi{1}}{\stexpit{\astexp}}}  
              \redi{\bodylab}
            \ldots
              \redi{\bodylab}
            \acxtap{\stexpstackprod{\asstexpi{n}}{\stexpit{\astexp}}}
              \redi{\bodylab}
            \ldots  
          $  
          is finite or infinite with  $\asstexpi{0},\asstexpi{1},\ldots,\asstexpi{n},\ldots\in\stackStExpover{\actions}$
          such that
          $\asstexpi{0}
             \redi{\bodylab}
           \asstexpi{1}
             \redi{\bodylab}
           \ldots
             \redi{\bodylab}
           \asstexpi{n}
             \redi{\bodylab}
           \ldots  
          \,$, 
          
        \item{}\label{path:2:it:1:lem:bodypaths:stexpstackprod:stexpprod}
          $ \acxtap{\stexpstackprod{\asstexp}{\stexpit{\astexp}}}
              =
            \acxtap{\stexpstackprod{\asstexpi{0}}{\stexpit{\astexp}}}  
              \redi{\bodylab}
            \acxtap{\stexpstackprod{\asstexpi{1}}{\stexpit{\astexp}}}  
              \redi{\bodylab}
            \ldots
              \redi{\bodylab}
            \acxtap{\stexpstackprod{\asstexpi{n}}{\stexpit{\astexp}}}
              \redi{\bodylab}
            \acxtap{\stexpit{\astexp}}
              \redi{\bodylab}  
            \ldots  
          $
          for $n\in\nat$, $\asstexpi{0},\asstexpi{1},\ldots,\asstexpi{n}\in\stackStExpover{\actions}$     
          with 
          $\asstexpi{0}
             \redi{\bodylab}
           \asstexpi{1}
             \redi{\bodylab}
           \ldots
             \redi{\bodylab}
           \asstexpi{n}
          $,
          $\terminates{\asstexpi{n}}$, and $\stexpstackprod{\asstexpi{n}}{\stexpit{\astexp}} \redi{\bodylab} \stexpit{\astexp}$. 
      \end{enumerate}
  
    \item{}\label{it:2:lem:bodypaths:stexpstackprod:stexpprod}
      Every maximal $\sredi{\bodylab}$ path from $\stexpprod{\asstexp}{\astexp}$ in $\oneLTShatof{\stackStExpover{\actions}}$, 
      for $\asstexp\in\stackStExpover{\actions}$, and $\astexp\in\StExpover{\actions}$, 
      where now no filling in an applicative context is permitted, 
      is of either of the following two forms: 
      \begin{enumerate}[label={(\roman{*})},leftmargin=*,align=right,labelsep=1ex]
        \item{}\label{path:1:it:2:lem:bodypaths:stexpstackprod:stexpprod}
          $ {\stexpprod{\asstexp}{\astexp}}
              =
            {\stexpprod{\asstexpi{0}}{\astexp}}  
              \redi{\bodylab}
            {\stexpprod{\asstexpi{1}}{\astexp}}  
              \redi{\bodylab}
            \ldots
              \redi{\bodylab}
            {\stexpprod{\asstexpi{n}}{\astexp}}
              \redi{\bodylab}
            \ldots  
          $  
          is finite or infinite
          with stacked star expressions 
              $\asstexpi{0},\asstexpi{1},\ldots,\asstexpi{n},\ldots\in\stackStExpover{\actions}$
          such that
          $\asstexpi{0}
             \redi{\bodylab}
           \asstexpi{1}
             \redi{\bodylab}
           \ldots
             \redi{\bodylab}
           \asstexpi{n}
             \redi{\bodylab}
           \ldots  
          \;$, 
        \item{}\label{path:2:it:2:lem:bodypaths:stexpstackprod:stexpprod}
          $ {\stexpprod{\asstexp}{\astexp}}
              =
            {\stexpprod{\asstexpi{0}}{\astexp}}  
              \redi{\bodylab}
            {\stexpprod{\asstexpi{1}}{\astexp}}  
              \redi{\bodylab}
            \ldots
              \redi{\bodylab}
            {\stexpprod{\asstexpi{n}}{\astexp}}
              \redi{\bodylab}
            {\asstexpacc}
              \redi{\bodylab}  
            \ldots  
          $
          with $n\in\nat$, and stacked star expressions $\asstexpi{0},\asstexpi{1},\ldots,\asstexpi{n},\bsstexp\in\stackStExpover{\actions}$     
          such that
          $\asstexpi{0}
             \redi{\bodylab}
           \asstexpi{1}
             \redi{\bodylab}
           \ldots
             \redi{\bodylab}
           \asstexpi{n}
          $,
          $\terminates{\asstexpi{n}}$ 
          and $\astexp \redi{\alab} \asstexpacc$.
      \end{enumerate}
      
  \end{enumerate}      
\end{lem}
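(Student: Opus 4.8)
The plan is to derive both parts from a single case analysis of the body transitions that emanate from expressions of the shapes $\acxtap{\stexpstackprod{\asstexp}{\stexpit{\astexp}}}$ and $\stexpprod{\asstexp}{\astexp}$, read off from the rules of $\stackStExpTSShatover{\actions}$ in Def.~\ref{def:stackStExpTSShat}. First I would record two auxiliary facts. (1)~No expression of the form $\acxtap{\stexpstackprod{\asstexp}{\stexpit{\astexp}}}$ with $\acxtwh\in\AppCxtover{\actions}$ permits immediate termination: by induction on the structure of $\acxtwh$, using that $\stexpstackprod{\cdot}{\cdot}$ has no matching termination rule in the base case, and that the $\sstexpprod$- and $\sstexpstackprod$-termination rules require the left argument to terminate, which fails by the induction hypothesis, in the step cases. (2)~A terminating stacked star expression is necessarily a plain star expression, by inspection of the termination rules; in particular, whenever the left argument of a $\sstexpstackprod$- or $\sstexpprod$-occurrence permits immediate termination it lies in $\StExpover{\actions}$.

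For part~\ref{it:1:lem:bodypaths:stexpstackprod:stexpprod} the key step is to characterise the body transitions out of $\acxtap{\stexpstackprod{\asstexp}{\stexpit{\astexp}}}$ as being exactly: the step $\acxtap{\stexpstackprod{\asstexp}{\stexpit{\astexp}}} \redi{\bodylab} \acxtap{\stexpstackprod{\asstexpacc}{\stexpit{\astexp}}}$ for each body transition $\asstexp \redi{\bodylab} \asstexpacc$; and, only when $\asstexp\in\StExpover{\actions}$ with $\terminates{\asstexp}$, the single step $\acxtap{\stexpstackprod{\asstexp}{\stexpit{\astexp}}} \redi{\bodylab} \acxtap{\stexpit{\astexp}}$. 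The ``at least'' direction is Lem.~\ref{lem:steps:appcxt} applied to the one-step facts $\stexpstackprod{\asstexp}{\stexpit{\astexp}} \redi{\bodylab} \stexpstackprod{\asstexpacc}{\stexpit{\astexp}}$ (the $\sstexpstackprod$-descend rule, which copies the marking label) and $\stexpstackprod{\asstexp}{\stexpit{\astexp}} \redi{\bodylab} \stexpit{\astexp}$ (the $\sstexpstackprod$-exit rule, which produces a body label). The ``at most'' direction is a converse of Lem.~\ref{lem:steps:appcxt}, proved by induction on $\acxtwh$: for $\acxtwh=\Box$ the only rules with a body transition out of $\stexpstackprod{\asstexp}{\stexpit{\astexp}}$ as conclusion are the two $\sstexpstackprod$ rules, the exit one requiring $\terminates{\asstexp}$; if $\acxtwh$ has an outer $\sstexpprod$- or $\sstexpstackprod$-layer wrapping a smaller applicative context, only the respective descend rule applies, because the left argument of that layer --- the smaller context filled with $\stexpstackprod{\asstexp}{\stexpit{\astexp}}$ --- does not permit immediate termination by fact~(1), so the induction hypothesis applies. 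Given this characterisation, a maximal $\sredi{\bodylab}$ path from $\acxtap{\stexpstackprod{\asstexp}{\stexpit{\astexp}}}$ maintains the invariant ``the current vertex has the form $\acxtap{\stexpstackprod{\bullet}{\stexpit{\astexp}}}$'' until, if ever, the exit step is taken: if the invariant never breaks, the sequence of left components is a $\sredi{\bodylab}$ path $\asstexpi{0}\redi{\bodylab}\asstexpi{1}\redi{\bodylab}\cdots$ and we obtain form~\ref{path:1:it:1:lem:bodypaths:stexpstackprod:stexpprod}; if the exit step is taken at some $\asstexpi{n}$, then $\terminates{\asstexpi{n}}$ holds, $\stexpstackprod{\asstexpi{n}}{\stexpit{\astexp}} \redi{\bodylab} \stexpit{\astexp}$, and the path continues from $\acxtap{\stexpit{\astexp}}$, which is form~\ref{path:2:it:1:lem:bodypaths:stexpstackprod:stexpprod}.

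Part~\ref{it:2:lem:bodypaths:stexpstackprod:stexpprod} proceeds in the same way with the simpler invariant ``the current vertex has the form $\stexpprod{\bullet}{\astexp}$'', no context wrapping being involved. The body transitions out of $\stexpprod{\asstexp}{\astexp}$ are exactly $\stexpprod{\asstexp}{\astexp}\redi{\bodylab}\stexpprod{\asstexpacc}{\astexp}$ for each body transition $\asstexp\redi{\bodylab}\asstexpacc$ (the $\sstexpprod$-descend rule, copying the label), together with --- only when $\asstexp\in\StExpover{\actions}$ and $\terminates{\asstexp}$ --- the steps $\stexpprod{\asstexp}{\astexp}\redi{\bodylab}\asstexpacc$ for each transition $\astexp\redi{\alab}\asstexpacc$ (the $\sstexpprod$-terminate-step rule, which always produces a body label); the ``at most'' direction is direct rule inversion here. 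A maximal body path from $\stexpprod{\asstexp}{\astexp}$ therefore descends through body steps of $\asstexp$ while keeping the wrapper --- yielding form~\ref{path:1:it:2:lem:bodypaths:stexpstackprod:stexpprod} if this continues indefinitely or reaches a dead end --- or at some $\asstexpi{n}$ with $\terminates{\asstexpi{n}}$ it leaves the wrapper via a step $\stexpprod{\asstexpi{n}}{\astexp}\redi{\bodylab}\asstexpacc$ with $\astexp\redi{\alab}\asstexpacc$ and then continues arbitrarily, yielding form~\ref{path:2:it:2:lem:bodypaths:stexpstackprod:stexpprod}. There is no genuine difficulty in the argument; the one place that needs care is the ``at most'' step (the converse of Lem.~\ref{lem:steps:appcxt}), where fact~(1) is exactly what rules out the $\sstexpprod$- and $\sstexpstackprod$-terminate rules at the intermediate context layers, so that the only body step that leaves the shape $\acxtap{\stexpstackprod{\bullet}{\stexpit{\astexp}}}$ is the designated $\alert{\stexpone}$-exit.
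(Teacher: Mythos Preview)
Your proposal is correct and follows essentially the same approach as the paper: a case analysis of the body transitions out of the relevant shapes based on the rules of $\stackStExpTSShatover{\actions}$, followed by a left-to-right parsing (your ``invariant'') of the maximal path. The paper proves \ref{it:2:lem:bodypaths:stexpstackprod:stexpprod} first and then reduces \ref{it:1:lem:bodypaths:stexpstackprod:stexpprod} to the empty-context case by the single observation that $\stexpstackprod{\asstexp}{\stexpit{\astexp}}$ does not terminate, whereas you make this into your explicit fact~(1) and carry the context along via an inductive converse of Lem.~\ref{lem:steps:appcxt}; these are the same idea presented differently.
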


\begin{proof}
  We first argue for statement \ref{it:2:lem:bodypaths:stexpstackprod:stexpprod}. 
  Due to the rules for $\sstexpprod$ in $\stackStExpTSShat$
  it holds for every step $\stexpprod{\asstexp}{\astexp} \redi{\bodylab} \csstexp$   
  either (1) $\asstexp \redi{\bodylab} \asstexpi{1}$, and $\csstexp = \stexpprod{\asstexpi{1}}{\astexp}$, for some $\asstexpi{1}$,
  or     (2) $\terminates{\asstexp}$, and $\astexp \redi{\alab} \bsstexp$, and $\csstexp = \bsstexp$,  for some $\bsstexp$ and a marking label~$\alab$.
  In case (2) we have recognized any maximal path $\stexpprod{\asstexp}{\astexp} \redi{\bodylab} \csstexp \redi{\bodylab} \ldots$ 
  as of the form~\ref{path:2:it:2:lem:bodypaths:stexpstackprod:stexpprod} of item \ref{it:2:lem:bodypaths:stexpstackprod:stexpprod} for $n=0$.
  In case (1) such a maximal path is parsed as 
    $\stexpprod{\asstexp}{\astexp} \redi{\bodylab} \stexpprod{\asstexpi{1}}{\astexp} = \csstexpi{1} \redi{\bodylab} \ldots$.
  Then we can use the same argument again for the second step $\stexpprod{\asstexpi{1}}{\astexp} \redi{\bodylab} \csstexpi{2}$ of that path.
  So by parsing a (finite or infinite) maximal path
  $\stexpprod{\asstexp}{\astexp} \redi{\bodylab} \csstexpi{1} \redi{\bodylab} \ldots \redi{\bodylab} \csstexpi{n} \redi{\bodylab} \ldots $ 
  with this argument over its steps from left to right,
  we either discover at some finite stage that one of its steps is of case~(2), 
  and then can conclude that the path is of form~\ref{path:2:it:2:lem:bodypaths:stexpstackprod:stexpprod},
  or we only encounter steps of case~(1) and thereby parse the path as of~form~\ref{path:1:it:2:lem:bodypaths:stexpstackprod:stexpprod}.
  
  We turn to statement \ref{it:1:lem:bodypaths:stexpstackprod:stexpprod}. 
  We observe that any first step 
  $ \acxtap{\stexpstackprod{\asstexp}{\stexpit{\astexp}}} \redi{\bodylab} \csstexpi{1}$ 
  can, in view the TSS rules 
                             for $\sstexpprod$ and $\sstexpstackprod$
       and since $\stexpstackprod{\asstexp}{\stexpit{\astexp}}$ does not terminate immediately,
  only arise from a step $\stexpstackprod{\asstexp}{\stexpit{\astexp}} \redi{\bodylab} \bsstexp$
  via context filling as $\acxtap{\stexpstackprod{\asstexp}{\stexpit{\astexp}}} \redi{\bodylab} \acxtap{\bsstexp} = \csstexpi{1}$.
  Therefore it suffices to show the statement for \ref{it:2:lem:bodypaths:stexpstackprod:stexpprod} for the empty context $\acxt = \Box$.
  For this restricted statement we can argue analogously as for item~\ref{it:1:lem:bodypaths:stexpstackprod:stexpprod}. 
  Here we use that due to the TSS rules for $\sstexpstackprod$ 
  it holds for every step $\stexpstackprod{\asstexp}{\stexpit{\astexp}} \redi{\bodylab} \csstexp$ that
  either (1) $\asstexp \redi{\bodylab} \asstexpi{1}$, and $\csstexp = \stexpstackprod{\asstexpi{1}}{\astexp}$, for some $\asstexpi{1}$,
  or     (2) $\terminates{\asstexp}$, and $\stexpstackprod{\asstexp}{\stexpit{\astexp}} \redi{\bodylab} \stexpit{\astexp}$, and $\csstexp = \stexpit{\astexp}$.
  By parsing the steps of a maximal $\sredi{\bodylab}$ path 
  $\stexpstackprod{\asstexp}{\stexpit{\astexp}} \redi{\bodylab} \csstexpi{1} \redi{\bodylab} \ldots \redi{\bodylab} \csstexpi{n} \redi{\bodylab} \ldots $ 
  from left to right we either encounter a step of case (2), and then find that the path is of form~\ref{path:2:it:1:lem:bodypaths:stexpstackprod:stexpprod},
  or otherwise we find that the path is of form~\ref{path:1:it:1:lem:bodypaths:stexpstackprod:stexpprod} in statement~\ref{it:1:lem:bodypaths:stexpstackprod:stexpprod}.
\end{proof}

\begin{lem}\label{lem:lem:oneLTShat:stackStExps:is:LLEEw}
  The following statements hold for paths of transitions in $\oneLTShatof{\StExpover{\actions}}$:
  \begin{enumerate}[label={(\roman{*})},align=right,leftmargin=*,itemsep=0ex]
    \item{}\label{it:1:lem:lem:oneLTShat:stackStExps:is:LLEEw}
      There are no infinite $\sredi{\bodylab}$ paths in $\oneLTShatof{\StExpover{\actions}}$.\enlargethispage{2ex}
    
    \item{}\label{it:2:lem:lem:oneLTShat:stackStExps:is:LLEEw}
      If $\asstexp\in\stackStExpover{\actions}$ is normed,
      then $\asstexp \redrtci{\bodylab} \bstexp$ for some $\bstexp\in\StExpover{\actions}$ with $\terminates{\bstexp}$.
      
    \item{}\label{it:3:lem:lem:oneLTShat:stackStExps:is:LLEEw}
      If $\asstexp \redi{\looplab{n}} \asstexpacc$ with $n>0$ and $\asstexp,\asstexpacc\in\stackStExpover{\actions}$,
      then
      $\asstexp = \acxtap{\stexpit{\astexp}}$, 
      $\astexp$ \txtnormedplus, 
      $\astexp \redi{\alab} \asstexpacci{0}$,
      $\asstexpacc = \acxtap{ \stexpstackprod{\asstexpacci{0}}{\stexpit{\astexp}} }$,
      and $n = \sth{\astexp} + 1$,
      for some $\astexp\in\StExpover{\actions}$, $\acxtwh\in\AppCxtover{\actions}$, and $\astexpacci{0}\in\stackStExpover{\actions}$. 
      
      
    \item{}\label{it:4:lem:lem:oneLTShat:stackStExps:is:LLEEw}
      Neither $\sredi{\bodylab}$ and $\sredi{\looplab{n}}$ steps in $\oneLTShatof{\StExpover{\actions}}$,
      where $n\ge 1$, increase the star height of expressions.

\end{enumerate}
\end{lem}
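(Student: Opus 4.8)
The plan is to prove the four items roughly in the order \ref{it:3:lem:lem:oneLTShat:stackStExps:is:LLEEw}, \ref{it:4:lem:lem:oneLTShat:stackStExps:is:LLEEw}, \ref{it:2:lem:lem:oneLTShat:stackStExps:is:LLEEw}, \ref{it:1:lem:lem:oneLTShat:stackStExps:is:LLEEw}, since each later item builds on the earlier ones. The common ingredients I would use are: the unique parse $\asstexp = \acxtap{\astexp}$ of a stacked star expression with $\astexp\in\StExpover{\actions}$ and $\acxtwh\in\AppCxtover{\actions}$; the lifting Lemma~\ref{lem:steps:appcxt}; the classification of maximal $\sredi{\bodylab}$ paths in Lemma~\ref{lem:bodypaths:stexpstackprod:stexpprod}; the evident label-forgetting correspondence between $\stackStExpTSShatover{\actions}$ and $\stackStExpTSSover{\actions}$ (dropping the marking label of a $\stackStExpTSShat$-transition yields a $\stackStExpTSS$-transition, and each $\stackStExpTSS$-transition has a unique marking-labeled lift), which lets me pass freely between normedness (defined via $\oneLTSof{\StExpover{\actions}}$) and the labeled $\sredi{\bodylab}$/$\sredi{\looplab{n}}$ relations; and the fact that no $\sstexpstackprod$-expression permits immediate termination, so that a terminating stacked star expression is automatically a star expression.

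For item~\ref{it:3:lem:lem:oneLTShat:stackStExps:is:LLEEw} I would do induction on the derivation of $\asstexp\redi{\looplab{n}}\asstexpacc$ in $\stackStExpTSShatover{\actions}$, case on the last rule applied. Every rule except the two congruence rules for $\sstexpprod$ and $\sstexpstackprod$ and the iteration rule in the \txtnormedplus\ case has a fixed marking label $\bodylabcol$ in its conclusion and hence cannot produce a $\looplab{n}$-label; the \txtnormedplus\ iteration rule is the base case, forcing $\acxtwh=\Box$, $\astexp$ \txtnormedplus, $\astexp\redi{\alab}\asstexpacci{0}$, $\asstexpacc=\stexpstackprod{\asstexpacci{0}}{\stexpit{\astexp}}$, and $n=\sth{\stexpit{\astexp}}=\sth{\astexp}+1$; and each congruence step prepends exactly one applicative frame, so the induction hypothesis composes into the claimed applicative context. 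Item~\ref{it:4:lem:lem:oneLTShat:stackStExps:is:LLEEw} I would likewise prove by induction on the transition derivation: for each rule with a $\sredi{\bodylab}$- or $\sredi{\looplab{n}}$-labeled conclusion I check that the star height of the target does not exceed that of the source, using the induction hypothesis on the premises (whose labels always lie in $\setexp{\bodylabcol}\cup\descsetexp{\looplab{m}}{m\in\natplus}$) together with the convention that $\sstexpprod$ and $\sstexpstackprod$ take the maximum of the star heights of their two components. The only case needing a remark is the iteration rule (in either form): there $\sth{\stexpstackprod{\asstexpacc}{\stexpit{\astexp}}}=\max\setexp{\sth{\asstexpacc},\sth{\stexpit{\astexp}}}=\sth{\stexpit{\astexp}}$ because $\sth{\asstexpacc}\le\sth{\astexp}<\sth{\stexpit{\astexp}}$ by the induction hypothesis.

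For item~\ref{it:2:lem:lem:oneLTShat:stackStExps:is:LLEEw} I would use structural induction on $\asstexp\in\stackStExpover{\actions}$, with a sub-induction on the structure of $\astexp$ in the base case $\asstexp=\astexp$, tracking the \emph{active} (leftmost, not-yet-terminated) subexpression. For $\asstexp=\stexpprod{\bsstexp}{\astexp}$ and $\asstexp=\stexpstackprod{\bsstexp}{\stexpit{\astexp}}$, normedness of $\asstexp$ forces $\bsstexp$ normed; the induction hypothesis then supplies a $\sredi{\bodylab}$ path from $\bsstexp$ to a terminating $\cstexp\in\StExpover{\actions}$, which Lemma~\ref{lem:steps:appcxt} lifts through the enclosing applicative frame, after which the empty-step rule popping a terminated left component of an $\sstexpstackprod$-expression (recalling $\terminates{\stexpit{\astexp}}$), respectively the ``into the second component'' rule for $\sstexpprod$ followed by a $\sredi{\bodylab}$ path of $\astexp$, extends the path to a terminating expression. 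In the base case, $\stexpzero$ is vacuous, $\stexpone$ and $\stexpit{\astexp}$ terminate immediately, $\aact$ reaches $\stexpone$ in one body step, a sum picks a normed summand, and a product is as above.

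The main obstacle is item~\ref{it:1:lem:lem:oneLTShat:stackStExps:is:LLEEw}: the iteration rule turns $\stexpit{\astexp}$ into $\stexpstackprod{\asstexpacc}{\stexpit{\astexp}}$, so a $\sredi{\bodylab}$ step increases neither syntactic size nor applicative-context depth, and the targets of body steps leaving a sum or a first-component position are only \emph{subderivatives}, not subterms, of the source. I would nonetheless argue by a primary induction on the total star height $\sth{\asstexp}$, with a secondary structural induction on $\asstexp$ at each star-height level. Given a maximal $\sredi{\bodylab}$ path from $\asstexp$, Lemma~\ref{lem:bodypaths:stexpstackprod:stexpprod} decomposes it, after finitely many initial steps, into a $\sredi{\bodylab}$ path from either a proper subterm of $\asstexp$ of the same or smaller star height (dispatched by the secondary resp.\ primary induction hypothesis) or a one-step derivative of the body of some proper sub-iteration; item~\ref{it:3:lem:lem:oneLTShat:stackStExps:is:LLEEw} is used here to peel a loop-step premise down to the body step that actually drives the path, and item~\ref{it:4:lem:lem:oneLTShat:stackStExps:is:LLEEw} guarantees that this derivative has star height strictly below $\sth{\asstexp}$, so it falls under the primary induction hypothesis. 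The remaining case --- where a non-\txtnormedplus\ iteration $\stexpit{\astexp}$ is entered by the body step $\stexpit{\astexp}\redi{\bodylab}\stexpstackprod{\asstexpacc}{\stexpit{\astexp}}$ --- is closed by observing that the path can never return to $\stexpit{\astexp}$: returning would (by Lemma~\ref{lem:bodypaths:stexpstackprod:stexpprod}) require $\asstexpacc$, a one-step derivative of $\astexp$, to body-reduce to a terminating expression, which would make $\astexp$ \txtnormedplus, a contradiction. The finiteness of the set of subderivatives of a fixed expression (cf.\ the footnote to Def.~\ref{def:chart:interpretation}) can serve as a convenient safety net for the star-height bookkeeping in this argument.
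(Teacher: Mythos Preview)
Your proposal is correct, and for items~(ii), (iii), and~(iv) it matches the paper's own proof essentially one-for-one (structural induction for~(ii), induction on derivation depth for~(iii) and~(iv)).

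For item~(i) you take a genuinely different and more careful route. The paper argues by a direct structural induction on $\bsstexp\in\stackStExpover{\actions}$, appealing only to Lemma~\ref{lem:bodypaths:stexpstackprod:stexpprod}; it does not invoke items~(iii) or~(iv) at all and in particular does not treat the case $\bsstexp=\stexpit{\astexp}$ separately. This plain structural induction glosses over two subtleties you have correctly identified: first, in the sum and product cases the rule for $\sstexpsum$ (and the second rule for $\sstexpprod$) relabels an arbitrary premise step to $\bodylabcol$, so the ``corresponding'' step from the subterm may well be a loop-entry step rather than a body step, and the induction hypothesis does not apply directly; second, the case $\bsstexp=\stexpit{\astexp}$ with $\astexp$ not \txtnormedplus\ produces a body step whose target $\stexpstackprod{\asstexpacc}{\stexpit{\astexp}}$ is not a structural subterm of $\bsstexp$ at all. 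Your lexicographic scheme (star height first, then structure), combined with the use of item~(iii) to expose the applicative-context form of a loop-entry premise and item~(iv) to guarantee a strict drop in star height of the inner derivative, is precisely what closes both gaps. Your non-return observation for the non-\txtnormedplus\ iteration case is the key extra ingredient, and the argument you give for it (via the characterisation of \txtnormedplus\ as ``has a transition to a normed expression'') is sound. What this buys you over the paper's argument is a proof that actually goes through as written; what the paper's version buys is brevity, at the cost of leaving these points implicit.
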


\begin{proof}
  For statement~\ref{it:1:lem:lem:oneLTShat:stackStExps:is:LLEEw} 
  we prove that there is no infinite $\sredi{\bodylab}$ path from any $\bsstexp\in\stackStExpover{\actions}$,
  by structural induction on $\bsstexp\in\stackStExpover{\actions}$.
  For $\bsstexp = \stexpzero$ and $\bsstexp = \stexpone$ there are no steps possible at all, 
  because there\vspace*{-2pt} is no rule with conclusion $\bsstexp \lti{\aoneact}{\alab} \bsstexpacc$ in the TSS~$\stackStExpTSShatover{\actions}$. 
  For $\bsstexp = \aact$, where $\aact\in\actions$, the single step possible is $\bsstexp \lti{\aact}{\bodylab} \stexpone$,
  which cannot be extended by any further steps. 
  For $\bsstexp = \stexpsum{\astexpi{1}}{\astexpi{2}}$ it holds, due to the TSS rule for $\sstexpsum$, 
  that every $\sredi{\bodylab}$ path $\bsstexp \redi{\bodylab} \bsstexpacc \redi{\bodylab} \ldots$ from $\bsstexp$
  gives rise to also a path  $\astexpi{i} \redi{\bodylab} \bsstexpacc \redi{\bodylab} \ldots$ from $\astexpi{i}$, for $i\in\setexp{1,2}$. 
  Since by induction hypothesis the $\sredi{\bodylab}$ path from $\astexpi{i}$ cannot be infinite, this follows also for the $\sredi{\bodylab}$ path from $\bsstexp$.
  Now we let $\bsstexp = \stexpprod{\asstexp}{\astexp}$ for $\asstexp\in\stackStExpover{\actions}$ and $\astexp\in\StExpover{\actions}$. 
  Then every maximal $\sredi{\bodylab}$ path from $\bsstexp$ 
  is of the form \ref{path:1:it:2:lem:bodypaths:stexpstackprod:stexpprod} or of the form \ref{path:2:it:2:lem:bodypaths:stexpstackprod:stexpprod}
  in by Lem.~\ref{lem:bodypaths:stexpstackprod:stexpprod}, \ref{it:2:lem:bodypaths:stexpstackprod:stexpprod}.
  So it either has the length of a $\sredi{\bodylab}$ path from $\asstexp$,
  or $1$ plus the finite length of a $\sredi{\bodylab}$ path from $\asstexp$, plus the length of a $\sredi{\bodylab}$ path from $\astexp$.
  Since by induction hypothesis every $\sredi{\bodylab}$ path from $\asstexp$ or from $\astexp$ is finite, 
  it follows that every $\sredi{\bodylab}$ path from $\bsstexp$ must have finite length as well.
  For the remaining case $\bsstexp = \stexpprod{\asstexp}{\stexpit{\astexp}}$, for $\asstexp\in\stackStExpover{\actions}$ and $\astexp\in\StExpover{\actions}$,
  we can argue analogously by using Lem.~\ref{lem:bodypaths:stexpstackprod:stexpprod}, \ref{it:1:lem:bodypaths:stexpstackprod:stexpprod}. 
  This concludes the proof 
                           of item~\ref{it:1:lem:lem:oneLTShat:stackStExps:is:LLEEw}.
  
  Item~\ref{it:2:lem:lem:oneLTShat:stackStExps:is:LLEEw} can be shown by induction on the structure of $\asstexp\in\stackStExpover{\actions}$. 
  We only consider the \nontrivial\ case in which $\asstexp = \stexpstackprod{\asstexpi{0}}{\stexpit{\astexp}}$, for $\asstexp$ normed.
  We have to show that there is $\bstexp\in\StExpover{\actions}$ with $\asstexp \redrtci{\bodylab} \bstexp$ and $\terminates{\bstexp}$.
  We first observe that unless $\terminates{\asstexpi{0}}$ holds, 
  any step from $\asstexp = \stexpstackprod{\asstexpi{0}}{\stexpit{\astexp}}$ must, due to the TSS rules, be of the form
  $\stexpstackprod{\asstexpi{0}}{\stexpit{\astexp}} \redi{\alab} \stexpstackprod{\asstexpacci{0}}{\stexpit{\astexp}}$ 
  for some $\asstexpacci{0}\in\stackStExpover{\actions}$ such that also $\asstexpi{0} \redi{\alab} \asstexpacci{0}$.
  Therefore the assumption that $\asstexp$ is normed implies that also $\asstexpi{0}$ is normed.
  Then we can apply the induction hypothesis for $\asstexpi{0}$
  in order to obtain a path $\asstexpi{0} \redrtci{\bodylab} \astexpacci{0}$ and $\terminates{\astexpacci{0}}$ for some $\astexpacci{0}\in\StExpover{\actions}$.
  From this we get
  $ \stexpstackprod{\asstexpi{0}}{\stexpit{\astexp}}
      \redrtci{\bodylab}
    \stexpstackprod{\astexpacci{0}}{\stexpit{\astexp}} $
  by Lem.~\ref{lem:steps:appcxt},
  and 
  $\stexpstackprod{\astexpacci{0}}{\stexpit{\astexp}} \redi{\bodylab} \stexpit{\astexp}$
  due to $\terminates{\astexpacci{0}}$. 
  Consequently we find 
  $ \asstexp
      =
    \stexpstackprod{\asstexpi{0}}{\stexpit{\astexp}}
      \redrtci{\bodylab}
    \stexpit{\astexp}$ and $\terminates{\stexpit{\astexp}}$,
  and hence by letting $\bstexp \defdby \stexpit{\astexp}\in\StExpover{\actions}$ 
  we have obtained the proof obligation 
  $ \asstexp
      \redrtci{\bodylab}
    \bstexp$ and $\terminates{\bstexp}$
  in this case.  
 
  Statement~\ref{it:3:lem:lem:oneLTShat:stackStExps:is:LLEEw} can be proved by induction
  on the depth of derivations in the TSS~$\stackStExpTSShat$.
  This proof employs two crucial facts. 
  First, every \entrystep\ is created by a step $\stexpit{\astexp} \redi{\looplab{n}} \stexpprod{\asstexpacc}{\stexpit{\astexp}}$
  with $\astexp \redi{\alab} \asstexpacc$ and $n = \sth{\stexpit{\astexp}} = \sth{\astexp} + 1$, for some $\astexp\in\StExpover{\actions}$, $\asstexpacc\in\stackStExpover{\actions}$,
  and marking label $\alab$.
  And second, \entrysteps\ are preserved under 
  contexts $\stexpprod{\Box}{\astexp}$ and $\stexpstackprod{\Box}{\stexpit{\astexp}}$,
  and hence are preserved under applicative contexts.
  
  Statement~\ref{it:4:lem:lem:oneLTShat:stackStExps:is:LLEEw} can be proved by 
                                                                               induction
  on the depth of derivations in the TSS~$\stackStExpTSShat$\hspace*{-3pt}.
\end{proof}

\begin{proof}[Proof of Lem.~\ref{lem:oneLTShat:stackStExps:is:LLEEw}]
  That $\oneLTShatof{\stackStExpover{\actions}}$ is an \entrybodylabeling\ of $\oneLTSof{\stackStExpover{\actions}}$\vspace*{-2.5pt}
  is a consequence of the fact that the rules of the TSS~$\stackStExpTSShatover{\actions}$ are marking labeled versions
  of the rules of the TSS~$\stackStExpTSSover{\actions}$.\vspace*{-0.5pt}
  
  It remains to show that $\aoneLTShat \defdby \oneLTShatof{\stackStExpover{\actions}}$ is a \LLEEwitness\ of the stacked star expressions 
  \oneLTS~$\aLTS = \LTSof{\stackStExpover{\actions}}$.
  Instead of verifying the \LLEEwitness\ conditions~\ref{LLEEw:1}, \ref{LLEEw:2}, \ref{LLEEw:3},
  we establish the following four equivalent conditions
  that are understood to be universally quantified 
  over all $\asstexp,\asstexpi{1},\bsstexp,\bsstexpi{1}\in\stackStExpover{\actions}$, and $n,m\in\nat$ with $n,m>0$:
  \begin{enumerate}[label={\mbox{\rm (LLEE\hspace{0.7pt}-\hspace{0.7pt}\arabic*)}},leftmargin=*,align=left,itemsep=0ex]\vspace{-0.1ex}
    \item{}\label{LLEEw:alt:1}%
      $ \asstexp 
          \redi{\looplab{n}} 
        \asstexpi{1} 
          \;\Longrightarrow\;
        \asstexp \comprewrels{\sredi{\looplab{n}}}{\sredrtci{\bodylab}} \asstexp$,
        
    \item{}\label{LLEEw:alt:2}%
      $\sredi{\bodylab}$ is terminating from $\asstexp$,
      
    \item{}\label{LLEEw:alt:3}%
      $\asstexp
         \comprewrels{\sredtavoidsvi{\asstexp}{\looplab{n}}}{\sredtavoidsvrtci{\asstexp}{\bodylab}} 
       \bsstexp
           \;\Longrightarrow\;
       \notterminates{\bsstexp} $ \mbox{}
      (the premise means
       that $\bsstexp$ is in $\indsubchartinat{\aoneLTShat}{\asstexp,n}$ such that $\bsstexp\neq\asstexp$),
      
    \item{}\label{LLEEw:alt:4}
      $\asstexp
         \comprewrels{\sredtavoidsvi{\asstexp}{\looplab{n}}}{\sredtavoidsvrtci{\asstexp}{\bodylab}} 
       \bsstexp
         \redi{\looplab{m}}
       \bsstexpi{1}  
           \;\Longrightarrow\;
       n > m $,  
  \end{enumerate}\vspace{-0.1ex}\enlargethispage{2ex}    
  where $\sredtavoidsvi{\asstexp}{\looplab{n}}$ and $\sredtavoidsvi{\asstexp}{\bodylab}$ 
        mean $\sredi{\looplab{n}}$ and $\sredi{\bodylab}$ steps,
        respectively, that avoid $\asstexp$ as their \underline{t}argets.
  Hereby \ref{LLEEw:alt:2} obviously implies \ref{LLEEw:1}.
  For each entry identifier $\pair{\asstexp}{n}\in\entriesof{\aoneLTShat}$ it is not difficult to check that 
  \ref{LLEEw:alt:1}, \ref{LLEEw:alt:2}, and \ref{LLEEw:alt:3}
  imply that $\indsubchartinat{\aoneLTShat}{\asstexp,n}$ satisfies the loop properties \ref{loop:1}, \ref{loop:2}, and \ref{loop:3}, respectively,
  to obtain \ref{LLEEw:2}. 
  Finally, \ref{LLEEw:alt:4} is an easy reformulation of the condition \ref{LLEEw:3}.
        
  Now \ref{LLEEw:alt:2} is guaranteed by Lem.~\ref{lem:lem:oneLTShat:stackStExps:is:LLEEw}, \ref{it:1:lem:lem:oneLTShat:stackStExps:is:LLEEw}.
  It remains to verify the other three conditions above.\vspace*{-3pt}
  For reasoning about transitions in $\aoneLTShat$ we employ easy properties that follow from the rules of 
                                                                                                           $\stackStExpTSShatover{\actions}$.
  
  For showing \ref{LLEEw:alt:1}, we suppose that $\asstexp \redi{\looplab{n}} \asstexpi{1}$ for some $\asstexp,\asstexpi{1}\in\stackStExpover{\actions}$.
  We have to show $\asstexp \comprewrels{\sredi{\looplab{n}}}{\sredrtci{\bodylab}} \asstexp$.
  By Lem.~\ref{lem:lem:oneLTShat:stackStExps:is:LLEEw}, \ref{it:3:lem:lem:oneLTShat:stackStExps:is:LLEEw}, 
  $\asstexp \redi{\looplab{n}} \asstexpi{1}$
  implies that $\asstexp = \acxtap{\stexpit{\astexp}}$, $\astexp$ \txtnormedplus, $n = \sth{\astexp} + 1$,
    for some $\astexp\in\StExpover{\actions}$ and $\acxtwh\in\AppCxtover{\actions}$.
  Since $\astexp$ is \txtnormedplus, and because no \onetransitions\ can emanate from $\astexp$,
  there must be some normed $\csstexp\in\stackStExpover{\actions}$, and $\aact\in\actions$
  such that $\astexp \lti{\aact}{\alab} \csstexp$ holds. We pick $\csstexp$ accordingly.
  Together with  $n = \sth{\astexp} + 1$ this implies $\stexpit{\astexp} \redi{\looplab{n}} \stexpstackprod{\csstexp}{\stexpit{\astexp}}$.
  Since $\csstexp$ is normed, 
  by Lem.~\ref{lem:lem:oneLTShat:stackStExps:is:LLEEw}, \ref{it:2:lem:lem:oneLTShat:stackStExps:is:LLEEw}
  there are $k\in\nat$, $\csstexpi{0},\ldots,\csstexpi{k}\in\stackStExpover{\actions}$ such that
  $\csstexp = \csstexpi{0} \redi{\bodylab} \ldots \redi{\bodylab} \csstexpi{k}$ and $\terminates{\csstexpi{k}}$.
  Then 
  $ \stexpstackprod{\csstexpi{k}}{\stexpit{\astexp}}
      \redi{\bodylab}
    \stexpit{\astexp}$,
  and, by Lem.~\ref{lem:steps:appcxt},   
  $\asstexp = \acxtap{\stexpit{\astexp}}
     \redi{\looplab{n}}
   \acxtap{\stexpstackprod{\csstexpi{0}}{\stexpit{\astexp}}}
     \redi{\bodylab}
   \ldots
     \redi{\bodylab}
   \acxtap{\stexpstackprod{\csstexpi{k}}{\stexpit{\astexp}}}
     \redi{\bodylab}    
   \acxtap{\stexpit{\astexp}}
     =
   \asstexp$.  
  Hence $\asstexp \comprewrels{\sredi{\looplab{n}}}{\sredrtci{\bodylab}} \asstexp$.  
  
  For showing \ref{LLEEw:alt:3},
  we suppose that 
  $ \asstexp 
      \redi{\looplab{n}}
    \asstexpi{1}
      \redi{\bodylab}
    \ldots
      \redi{\bodylab}
    \asstexpi{k}
      =
    \bsstexp $
  for $\asstexp,\asstexpi{1},\ldots,\asstexpi{k},\bsstexp\in\stackStExpover{\actions}$ 
  with $\asstexpi{1},\ldots,\asstexpi{k},\bsstexp \neq \asstexp$, and $k\in\nat$, $k \ge 1$.
  We have to show that $\notterminates{\bsstexp}$. 
  By applying Lem.~\ref{lem:lem:oneLTShat:stackStExps:is:LLEEw}, \ref{it:3:lem:lem:oneLTShat:stackStExps:is:LLEEw}, 
  $\asstexp \redi{\looplab{n}} \asstexpi{1}$
  implies that 
  $\asstexp = \acxtap{\stexpit{\astexp}}$, 
  $\asstexpi{1} = \acxtap{\stexpstackprod{\csstexp}{\stexpit{\astexp}}}$,
  $\stexpit{\astexp} \redi{\looplab{n}} \stexpstackprod{\csstexp}{\stexpit{\astexp}}$,
  for some $\astexp\in\StExpover{\actions}$, $\csstexp\in\stackStExpover{\actions}$, and $\acxtwh\in\AppCxtover{\actions}$.    
  As now the considered path is of the form
  $ \asstexp = \acxtap{\stexpit{\astexp}} 
      \redi{\looplab{n}}
    \asstexpi{1} = \acxtap{\stexpstackprod{\csstexp}{\stexpit{\astexp}}}
      \redi{\bodylab}
    \asstexpi{2}
      \redi{\bodylab}  
    \ldots
      \redi{\bodylab}
    \asstexpi{k}
      =
    \bsstexp $, 
  it follows from Lem.~\ref{lem:bodypaths:stexpstackprod:stexpprod}, \ref{it:1:lem:bodypaths:stexpstackprod:stexpprod},
  due to $\asstexpi{1},\ldots,\asstexpi{k} \neq \asstexp = \acxtap{\stexpit{\astexp}}$
  that this path must be of form \ref{path:1:it:1:lem:bodypaths:stexpstackprod:stexpprod} there:
  $ \asstexp = \acxtap{\stexpit{\astexp}} 
      \redi{\looplab{n}}
    \asstexpi{1} = \acxtap{\stexpstackprod{\csstexpi{0}}{\stexpit{\astexp}}}
      \redi{\bodylab} 
    \asstexpi{2} = \acxtap{\stexpstackprod{\csstexpi{1}}{\stexpit{\astexp}}}
      \redi{\bodylab}
    \ldots
      \redi{\bodylab}
    \asstexpi{k} = \acxtap{\stexpstackprod{\csstexpi{k}}{\stexpit{\astexp}}}
      =
    \bsstexp $
  for some $\csstexpi{0},\csstexpi{1},\ldots,\csstexpi{k}\in\stackStExpover{\actions}$ with $\csstexpi{0} = \csstexp$.
  From the form $\acxtap{\stexpstackprod{\csstexpi{k}}{\stexpit{\astexp}}}$ of $\bsstexp$
  we can now conclude that $\notterminates{\bsstexp}$, 
  since a stacked star expression that is not also a star expression does not permit immediate termination.

  For showing \ref{LLEEw:alt:4},
  we suppose that 
  $ \asstexp 
      \redi{\looplab{n}}
    \asstexpi{1}
      \redi{\bodylab}
    \ldots
      \redi{\bodylab}
    \asstexpi{k}
      =
    \bsstexp 
      \redi{\looplab{m}}
    \bsstexpi{1}
    $
  for some $k,m,n\in\nat$, $k,m,n \ge 1$,   
  and $\asstexp,\asstexpi{1},\ldots,\asstexpi{k},\bsstexp,\bsstexpi{1}\in\stackStExpover{\actions}$ 
  with $\asstexpi{1},\ldots,\asstexpi{k} \neq \asstexp$.
  We have to show that $n > m$.
  Arguing analogously as for \ref{LLEEw:alt:3}
  with Lem.~\ref{lem:lem:oneLTShat:stackStExps:is:LLEEw}, \ref{it:3:lem:lem:oneLTShat:stackStExps:is:LLEEw},
  and Lem.~\ref{lem:bodypaths:stexpstackprod:stexpprod}, \ref{it:1:lem:bodypaths:stexpstackprod:stexpprod}, 
  we find that the path is of the form 
  $ \asstexp = \acxtap{\stexpit{\astexp}} 
      \redi{\looplab{n}}
    \asstexpi{1} = \acxtap{\stexpstackprod{\csstexpi{0}}{\stexpit{\astexp}}}
      \redi{\bodylab} 
    \asstexpi{2} = \acxtap{\stexpstackprod{\csstexpi{1}}{\stexpit{\astexp}}}
      \redi{\bodylab}
    \ldots
      \redi{\bodylab}
    \asstexpi{k} = \acxtap{\stexpstackprod{\csstexpi{k}}{\stexpit{\astexp}}}
      =
    \bsstexp
      \redi{\looplab{m}}
    \bsstexpi{1} $
  for some $\csstexpi{0},\csstexpi{1},\ldots,\csstexpi{k}\in\stackStExpover{\actions}$
  such that 
  $n = \sth{\astexp} + 1$,
  $\astexp \redi{\alab} \csstexpi{0}$ 
  and 
  $ \csstexpi{0}
      \redi{\bodylab} 
    \csstexpi{1}
      \redi{\bodylab}
    \ldots
      \redi{\bodylab}
    \csstexpi{k-1}
      =
    \bsstexp
      \redi{\looplab{m}}
    \bsstexpi{1} $.
   Due to Lem.~\ref{lem:lem:oneLTShat:stackStExps:is:LLEEw}, \ref{it:4:lem:lem:oneLTShat:stackStExps:is:LLEEw},
   it follows that 
   $n = \sth{\astexp} + 1
      > \sth{\astexp}
      \ge \sth{\csstexpi{0}}
      \ge \sth{\csstexpi{1}}
      \ge \ldots
      \ge \sth{\csstexpi{k}} $.
   Since due to the loop step
   $\acxtap{\stexpstackprod{\csstexpi{k}}{\stexpit{\astexp}}}
      \redi{\looplab{m}} 
    \bsstexpi{1}$ 
   can only originate from $\csstexpi{k}$,
   we conclude by Lem.~\ref{lem:lem:oneLTShat:stackStExps:is:LLEEw}, \ref{it:3:lem:lem:oneLTShat:stackStExps:is:LLEEw}, 
   that $\sth{\csstexpi{k}} \ge m$. 
   From what we have obtained before, we conclude~$n > m$.

   By having verified \ref{LLEEw:alt:1}--\ref{LLEEw:alt:4} we have shown that $\oneLTShatof{\StExpover{\actions}}$ is a \LLEEwitness.
\end{proof}

\begin{proof}[Proof of Thm.~\ref{thm:onechart-int:LLEEw}]
  Since $\oneLTShatof{\stackStExpover{\actions}}$ is an \entrybodylabeling\ of $\oneLTSof{\stackStExpover{\actions}}$ by Lem.~\ref{lem:oneLTShat:stackStExps:is:LLEEw},
  it follows that $\onecharthatof{\astexp}$, the $\astexp$-rooted \subonechart\ of $\oneLTShatof{\stackStExpover{\actions}}$,
  is an \entrybodylabeling\ of $\onechartof{\astexp}$, the $\astexp$-rooted subchart of the \LTS~$\oneLTSof{\stackStExpover{\actions}}$.
  Now \LLEEwitnesses\ are preserved under taking generated subcharts, which can be concluded directly 
  from the alternative characterization of \LLEEwitnesses\ \ref{LLEEw:alt:1}--\ref{LLEEw:alt:4} in the proof of Lem.~\ref{lem:oneLTShat:stackStExps:is:LLEEw}.
  It follows that $\onecharthatof{\astexp}$ is a \LLEEwitness\ of $\onechartof{\astexp}$. 
\end{proof}

\section{Conclusion}%
  \label{conclusion}

As shown in \cite{grab:fokk:2020:LICS,grab:fokk:2020:arxiv}, 
process graphs (charts) that are defined by a TSS formulation (similar to Def.~\ref{def:process:semantics}) of Milner's process interpretation 
for `\onefree\ star expressions' satisfy the loop existence and elimination property \LEE.
But for process graph interpretations of arbitrary star expressions
this is not the case in general. We provided counterexamples for this fact in Section~\ref{LLEE:fail}. 
By the development in Section~\ref{recover:LEE} and the proofs in Section~\ref{proofs}, however, 
we showed that \LEE\ can still by made applicable for star expressions from the full class:
we defined a variant $\onechartof{\cdot}$ of the process semantics $\chartof{\cdot}$ such that $\onechartof{\cdot}$~guarantees~\LEE.

The solution of the axiomatization problem~(A) 
for \onefree\ star expressions in \cite{grab:fokk:2020:LICS,grab:fokk:2020:arxiv} 
is based on transferring `provable solutions' between bisimilar charts in order to show that they are provably equal. 
A \emph{provable solution} is a function from vertices of a chart to star expressions that provably satisfies, 
in each vertex $\avert$ of the chart, a correctness condition 
that requires that the solution value at $\avert$ is provably equal to the sum of expressions $\stexpprod{\aacti{i}}{\bverti{i}}$
over all transitions $\triple{\avert}{\aacti{i}}{\bverti{i}}$ from $\avert$, plus $\stexpone$ if $\terminates{\avert}$ holds.  
We used six lemmas:
%
    {\bf (I)}
   the chart interpretation $\chartof{\astexp}$ of a \onefree\ star expression $\astexp$
   is a \LLEEchart;
    {\bf (SI)}
   every \onefree\ star expression $\astexp$ is the start value of a provable solution of its chart interpretation~$\chartof{\astexp}$;  
    {\bf (C)}
   the bisimulation collapse of a \LLEEchart\ is again a \LLEEchart;
     {\bf (E)}
   from every \LLEEchart\ $\achart$ a provable solution of $\achart$ can be extracted;
     {\bf (P)}
   all provable solutions can be pulled back from the target to the source chart of a functional bisimulation
   to obtain a provable solution of the source chart;
     {\bf (SE)}
   all provable solutions of a \LLEEchart\ are provably equal.
%
From these parts it follows
  that if $\chartof{\astexpi{1}} \bisim \chartof{\astexpi{2}}$ holds for \onefree\ star expressions $\astexpi{1}$ and $\astexpi{2}$, 
  then a proof of $\astexpi{1} = \astexpi{2}$ in the axiomatization can be constructed 
  from a solution of the bisimulation collapse $\acharti{0}$ of $\chartof{\astexpi{1}}$ and $\chartof{\astexpi{2}}$,
  where $\acharti{0}$ is a \LLEEchart\ due to {\bf (I)} and {\bf (C)}.
A solution of the recognizability problem~(R) for charts concerning expressibility by \onefree\ star expressions can be obtained by using {\bf (C)}
and showing that LEE can be tested for charts in~polynomial~time. 

Concerning the extension of this approach to the set of all star expressions we can report the following.
The lemmas {\bf (E)}, {\bf (P)}, and {\bf (SE)} can be extended (with technical efforts) 
to versions {\bf (E)$_{\onescript}$}, {\bf (P)$_{\onescript}$}, and {\bf (SE)$_{\onescript}$} that apply to \onecharts\ and \LLEEonechart{s}.
While {\bf (I)} does not hold for all star expressions (see Section~\ref{LLEE:fail}), 
we obtained as adaptation in Section~\ref{recover:LEE}:
{\bf (I)$_{\onescript}$}
  the \onechart\ interpretation $\onechartof{\astexp}$ of a star expression $\astexp$ is a \LLEEonechart.
An analogue version {\bf (SI)$_{\onescript}$} of {\bf (SI)} can also be shown. 
The remaining obstacle for extending the approach to all star expressions
is that a generalization of {\bf (C)} to \LLEEonecharts~is~not~straightforward.


\paragraph{Acknowledgement.} \hspace*{-0.4em}%
  I want to thank the reviewers for their close reading, observations, and suggestions of structural improvements.
  I am thankful to Luca Aceto for comments on a draft.
  The idea to define LLEE-witnesses from TSSs
  was suggested by Wan Fokkink, with whom I worked it out for \onefree\ star expressions in \cite{grab:fokk:2020:LICS,grab:fokk:2020:arxiv}. 
  That inspired me for this generalization to the full class~of~star~expressions.


\bibliographystyle{eptcs}
\bibliography{scpgs.bib}

\end{document}